\newtheorem{thm}{Theorem}%[section] %(If you want theorem numbered
\newtheorem{definition}{Definition}%[section]
\newtheorem{lem}{Lemma}
\theoremstyle{remark}
\newtheorem{remark}{Remark}
\newcommand{\R}{\mathbb{R}}
\def \real    { \mathbb{R} }
\newcommand{\C}{\mathbb{C}}
\newcommand{\Z}{\mathbb{Z}}
\newcommand{\N}{\mathbb{N}}
\newcommand{\vct}[1]{\boldsymbol{#1}}
\newcommand{\mtx}[1]{\boldsymbol{#1}}
\renewcommand{\H}{\mathrm{H}}
\newcommand{\T}{\mathrm{T}}
\newcommand{\Span}{\operatorname{Span}}
\newcommand{\trace}{\operatorname{trace}}
\newcommand{\rank}{\operatorname{rank}}
\DeclareMathOperator*{\minimize}{\text{minimize}}
\newcommand{\eps}{\epsilon}
\newcommand{\calA}{\mathcal{A}}
\newcommand{\calB}{\mathcal{B}}
\newcommand{\calS}{\mathcal{S}}
\newcommand{\calT}{\mathcal{T}}
\newcommand{\va}{\vct{a}}
\newcommand{\vb}{\vct{b}}
\newcommand{\ve}{\vct{e}}
\newcommand{\vs}{\vct{s}}
\newcommand{\vx}{\vct{x}}
\newcommand{\vy}{\vct{y}}
\newcommand{\valpha}{\vct{\alpha}}
\newcommand{\mA}{\mtx{A}}
\newcommand{\mB}{\mtx{B}}
\newcommand{\mD}{\mtx{D}}
\newcommand{\mE}{\mtx{E}}
\newcommand{\mF}{\mtx{F}}
\newcommand{\mL}{\mtx{L}}
\newcommand{\mP}{\mtx{P}}
\newcommand{\mQ}{\mtx{Q}}
\newcommand{\mS}{\mtx{S}}
\newcommand{\mT}{\mtx{T}}
\newcommand{\mU}{\mtx{U}}
\newcommand{\mV}{\mtx{V}}
\newcommand{\mY}{\mtx{Y}}
\newcommand{\mLambda}{\mtx{\Lambda}}
\newcommand{\mOmega}{\mtx{\Omega}}
\newcommand{\mPhi}{\mtx{\Phi}}
\newcommand{\mId}{{\bf I}}
\newcommand{\mzero}{{\bf 0}}
\newlength{\imgwidth}
\newcommand{\twoCol}[2]{\ifthenelse{\boolean{twoColVersion}} {#1} {#2} }
\newcommand\blfootnote[1]{
  \begingroup
  \renewcommand\thefootnote{}\footnote{#1}
  \addtocounter{footnote}{-1}
  \endgroup
}
\begin{document}

\title{ROAST: Rapid Orthogonal Approximate Slepian Transform}

\vspace{2mm}
\author{Zhihui Zhu, Santhosh Karnik, Michael B. Wakin, Mark A. Davenport, Justin  Romberg}

\maketitle

\begin{abstract}
In this paper, we provide a Rapid Orthogonal Approximate Slepian Transform (ROAST) for the discrete vector that one obtains when collecting a finite set of uniform samples from a baseband analog signal. The ROAST offers an orthogonal projection which is an approximation to the orthogonal projection onto the leading discrete prolate spheroidal sequence (DPSS) vectors (also known as Slepian basis vectors). As such, the ROAST is guaranteed to accurately and compactly represent not only oversampled bandlimited signals but also the leading DPSS vectors themselves. Moreover, the subspace angle between the ROAST subspace and the corresponding DPSS subspace can be made arbitrarily small. The complexity of computing the representation of a signal using the ROAST is comparable to the FFT, which is much less than the complexity of using the DPSS basis vectors. We also give non-asymptotic results to guarantee that the proposed basis not only provides a very high degree of approximation accuracy in a mean squared error sense for bandlimited sample vectors, but also that it can provide high-quality approximations of all sampled sinusoids within the band of interest.
\end{abstract}

\blfootnote{ Z. Zhu is with the Center for Imaging Science at the Johns Hopkins University, Baltimore, MD, USA (email: zzhu29@jhu.edu). S. Karnik, M. A. Davenport, and J. Romberg are with the School of Electrical and Computer Engineering, Georgia Institute of Technology, Atlanta, GA, USA (e-mail: skarnik1337@gatech.edu, mdav@gatech.edu, jrom@ece.gatech.edu). M. Wakin is with the  Department of Electrical Engineering, Colorado School of Mines, Golden, CO, USA (e-mail: mwakin@mines.edu).
This work was supported by NSF grants CCF-1409261 and CCF-1409406.  A preliminary version of this paper highlighting some of the key results also appeared in~\cite{zhu2017fast}.}
\section{Introduction}
\label{sec:intro}
The Nyquist-Shannon sampling theorem guarantees that real world signals that are bandlimited (or can be made bandlimited by filtering) can be replaced by a discrete sequence of their samples without the loss of any information. These samples can then be processed digitally. In particular, the discrete Fourier transform (DFT) for digital signals has been widely used for many applications in engineering, mathematics, and science thanks to the fast Fourier tranform (FFT), an efficient algorithm for computing the DFT.

Due to the fact that finite windowing in the time domain will spread out a signal's spectrum in the frequency domain, however, the DFT suffers from frequency leakage when used to represent a finite-length vector arising from a bandlimited signal with a narrowband spectrum, or even a pure sinusoid. This problem can be mitigated to some degree by applying a smooth windowing function in the sampling system. Alternatively, one can compactly represent the signals using a basis of timelimited discrete prolate spheroidal sequences (DPSS's). DPSS's, first introduced by Slepian in 1978~\cite{Slepian78DPSS}, are a collection of orthogonal bandlimited sequences that are most concentrated in time to a given index range. When limited in the time domain, they provide a compact (and again orthogonal) representation for sampled bandlimited signals.

Owing to their concentration in the time and frequency domains, the DPSS's have been successfully used in numerous signal processing applications. For instance,  DPSS's can be applied to find the minimum energy, infinite-length bandlimited sequence that extrapolates a given finite timelimited vector of samples~\cite{Slepian78DPSS}; bandlimited extrapolation is a classical signal processing problem and appears in applications such as spectral estimation and image processing~\cite{papoulis1975new,hayes1983bandlimited}. Another problem involves estimating time-varying channels in wireless communication systems. In~\cite{zemen2005channelEstim,zemen2007minimum}, Zemen et al.\ showed that expressing the time-varying subcarrier coefficients with a DPSS basis yields better estimates than those obtained with a DFT basis, which suffers from frequency leakage.  In through-the-wall radar imaging using stepped-frequency synthetic aperture radar (SAR)~\cite{amin2008wideband}, the DPSS basis can be utilized for efficiently mitigating wall clutter and for detecting targets behind the wall~\cite{AhmadQianAmin2015WallCluterDPSS,Zhu2015targetDetectDPSS,Zhu2016targetDetectDPSS}. In addition, DPSS's are useful for multiband signal identification~\cite{Zhu17Asilomar} and narrowband and multiband signal recovery from compressive measurements~\cite{DavenSSBWB_Wideband,DavenportWakin2012CSDPSS}. Building on this, DPSS's have been used to enable compressive sensing of physiological signals~\cite{sejdic2012compressive}. More broadly, the ability to recover multiband signals is beneficial for developing high-bandwidth radio receivers for cognitive radio and communications intelligence~\cite{wakin2012nonuniform}.

Unfortunately, unlike the DFT which can be computed efficiently with the FFT algorithm, there exists no  algorithm that can efficiently compute the DPSS representation for a very large signal. Recently, we proposed~\cite{Karnik2016FAST} a fast Slepian transform (FST), a fast method for computing approximate projections onto the leading DPSS vectors and compressing a signal to the corresponding low dimension. Despite its favorable properties, the fast algorithm presented in~\cite{Karnik2016FAST} did not correspond to an orthogonal projection. In this paper, we illustrate an alternative {\em orthonormal} basis that provides an approximate but sufficiently accurate representation of the subspace spanned by the leading DPSS vectors and compactly captures most of the energy in oversampled bandlimited signals. The representation of an arbitrary vector in this basis can again be computed efficiently (with complexity comparable to that of the FFT), and we refer to this procedure as the Rapid Orthogonal Approximate Slepian Transform (ROAST).

One of the main contributions of this paper is to confirm that such an orthonormal basis not only provides a very high degree of approximation accuracy in a mean squared error (MSE) sense for baseband sample vectors, but also that it can provide high-quality approximations for all sample vectors of sinusoids with frequencies in the band of interest. After Section~\ref{sec:review DPSS} provides background on DPSS's, Section~\ref{sec:roast1} provides details on the ROAST construction, fast computations, and theoretical approximation guarantees. The orthogonality of this transform also extends its relevance to new applications, as we describe in Section~\ref{sec:orthapps}. Section~\ref{sec:proof} contains proofs of the main results.  Experiments in Section~\ref{sec:simulations} confirm that ROAST offers signal approximation quality that is comparable to the DPSS, but with a much lower computational burden.

\subsection{DPSS bases}
\label{sec:review DPSS}

To begin, we briefly review some important definitions and properties of DPSS's.

\subsubsection{Definitions}

For any $W\in(0,\frac{1}{2})$, let $\calB_W:\ell_2(\Z)\rightarrow \ell_2(\Z)$ denote a bandlimiting operator that bandlimits the discrete-time Fourier transform (DTFT) of a discrete-time signal to the frequency range $[-W,W]$ (and returns the corresponding signal in the time domain). In addition, for any $N\in\N$, let $\calT_N:\ell_2(\Z)\rightarrow \ell_2(\Z)$ denote the timelimiting operator that zeros out all entries outside the index range $\{0,1,\ldots,N-1\}$.

\begin{definition}(DPSS's \cite{Slepian78DPSS}) Given $W\in(0,\frac{1}{2})$ and $N\in\N$, the Discrete Prolate Spheroidal Sequences (DPSS's) $\{s_{N,W}^{(0)},s_{N,W}^{(1)},\dots,s_{N,W}^{(N-1)}\}$ are real-valued discrete-time sequences that satisfy $\calB_{W}(\calT_N(s_{N,W}^{(\ell)}))=\lambda_{N,W}^{(\ell)}s_{N,W}^{(\ell)}$ for all $l\in\{0,\ldots,N-1\}$. Here $\lambda_{N,W}^{(0)},\dots,\lambda_{N,W}^{(N-1)}$ are the eigenvalues of the operator $\calB_{[-W,W]}\mathcal{T}_N$ with order $1>\lambda_{N,W}^{(0)}>\lambda_{N,W}^{(1)}>\dots>\lambda_{N,W}^{(N-1)}>0$.
\end{definition}

\begin{definition}(DPSS vectors \cite{Slepian78DPSS}) Given $W\in(0,\frac{1}{2})$ and $N\in\N$, the DPSS vectors\footnote{Throughout the paper, finite-dimensional vectors and matrices are indicated by bold characters, while the other variables such as infinite-length sequenes are not in bold typeface.} $\vs_{N,W}^{(0)}$ $\vs_{N,W}^{(1)},\dots,~\vs_{N,W}^{(N-1)}\in\R^N$ are defined by limiting the DPSS's to the index range $\{0,1,\ldots,N-1\}$
	and satisfy
	$$
	\mB_{N,W}\vs_{N,W}^{(\ell)}=\lambda_{N,W}^{(\ell)}\vs_{N,W}^{(\ell)},
	$$
	where $\mB_{N,W}\in\C^{N\times N}$ is the prolate matrix with elements
	\[
	\mB_{N,W}[m,n] = \frac{\sin\left(2\pi W(m-n)\right)}{\pi (m-n)}.
	\]
\end{definition}
Let $\mS_{N,W}$ denote an $N\times N$ matrix whose $\ell$-th column is the DPSS vector $\vs_{N,W}^{(\ell)}$ for all $\ell=0,\ldots,N-1$ and $\mLambda_{N,W}$ be an $N\times N$ diagonal matrix with diagonal entries  being the DPSS eigenvalues $\lambda_{N,W}^{(0)},\ldots,\lambda_{N,W}^{(N-1)}$. The prolate matrix $\mB_{N,W}$ can be factorized as
\[
\mB_{N,W} = \mS_{N,W}\mLambda_{N,W}\mS_{N,W}^*,
\]
which is an eigendecompostion of $\mB_{N,W}$. Here $\mA^*$ represents the adjoint of $\mA$.
The DPSS's are orthogonal on $\mathbb{Z}$ and on $\{0,\dots,N-1\}$, and they are normalized so that
\begin{align*}
\langle\mathcal{T}_N(s_{N,W}^{(k)}),\mathcal{T}_N(s_{N,W}^{(\ell)})\rangle = \begin{cases} 1, & k = \ell, \\ 0, & k \neq \ell. \end{cases}
\end{align*}
Consequently, it can be shown~\cite{Slepian78DPSS} that $\|s_{N,W}^{(\ell)}\|_2^2 = \frac{1}{\lambda_{N,W}^{(\ell)}}$. Thus, when $\lambda_{N,W}^{(\ell)}$ is close to $1$, the corresponding DPSS vector $\vs_{N,W}^{(\ell)}$ has energy mostly concentrated in the frequency range $[-W,W]$. On the other hand when $\lambda_{N,W}^{(\ell)}$ is close to $0$, the corresponding DPSS vector $\vs_{N,W}^{(\ell)}$ has most of its energy outside the frequency range $[-W,W]$. These properties, along with the following result on the distribution of the eigenvalues $\lambda_{N,W}^{(\ell)}$, make the DPSS's a suitable basis to provide a compact representation for sampled bandlimited signals.
\begin{thm}\label{thm:concentration of the DPSS eigenvalues}(Concentration of the spectrum\cite{Slepian78DPSS,DavenportWakin2012CSDPSS,Karnik2016FAST,ZhuWakin2015MDPSS}.)
	For any $W\in(0,\frac{1}{2})$, $N\in\N$, and $\epsilon\in(0,\frac{1}{2})$, we have
	\[
	\lambda_{N,W}^{(\lfloor 2NW \rfloor -1)} \geq \frac{1}{2} \geq \lambda_{N,W}^{(\lceil 2NW \rceil)}
	\]
	and
	\[
	\#\{\epsilon\leq \lambda_{N,W}^{(\ell)}\leq 1-\epsilon\} \leq 2C_N \log \left( \frac{15}{\eps} \right),
	\]
	where $C_N = \frac{4}{\pi^2} \log(8N) + 6$.
\end{thm}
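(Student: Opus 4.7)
The plan is to derive both parts of the theorem from the trace identity for the prolate matrix, combined with a min-max argument for the median inequality and a low-rank approximation argument for the plunge-region count.

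The common starting point is the trace computation. Since each diagonal entry of $\mathbf{B}_{N,W}$ equals $\lim_{k\to 0}\sin(2\pi W k)/(\pi k) = 2W$, one has $\sum_{\ell=0}^{N-1}\lambda_{N,W}^{(\ell)} = \trace(\mathbf{B}_{N,W}) = 2NW$. This constrains the bulk of the spectrum to lie near index $2NW$, but it is not enough on its own to locate where the eigenvalues cross $1/2$.

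For the median inequality, I would use the integral representation $\mathbf{B}_{N,W} = \int_{-W}^{W}\mathbf{e}_f\mathbf{e}_f^*\,df$, where $\mathbf{e}_f\in\C^N$ has entries $e^{2\pi i f n}$, together with Courant-Fischer. To prove $\lambda_{N,W}^{(\lceil 2NW\rceil)}\leq 1/2$, I would exhibit a subspace $V\subset\R^N$ of dimension $N-\lceil 2NW\rceil$ on which $\mathbf{v}^*\mathbf{B}_{N,W}\mathbf{v}\leq\tfrac{1}{2}\|\mathbf{v}\|^2$; a natural choice is the kernel of evaluation of the length-$N$ Fourier polynomial $\hat v(f)=\sum_n v_n e^{-2\pi i f n}$ at $\lceil 2NW\rceil$ suitably spaced frequencies in $[-W,W]$, so that $\mathbf{v}^*\mathbf{B}_{N,W}\mathbf{v}=\int_{-W}^{W}|\hat v(f)|^2 df$ is forced small by Bernstein/Markov-type inequalities on $\hat v$. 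The complementary bound $\lambda_{N,W}^{(\lfloor 2NW\rfloor-1)}\geq 1/2$ then follows from the kernel identity $\mathbf{B}_{N,W}+\mathbf{D}\mathbf{B}_{N,1/2-W}\mathbf{D}=\mathbf{I}$ with $\mathbf{D}=\diag((-1)^n)$, which one verifies directly from the sinc formula using $\sin((\pi-2\pi W)k)=-(-1)^k\sin(2\pi W k)$; this yields $\lambda_{N,W}^{(\ell)}=1-\lambda_{N,1/2-W}^{(N-1-\ell)}$, so the upper bound already proved, applied to $\mathbf{B}_{N,1/2-W}$, transfers to the desired lower bound for $\mathbf{B}_{N,W}$.

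For the plunge-region count, the strategy is to construct an explicit rank-$r$ Hermitian matrix $\widehat{\mathbf{B}}$ satisfying $\|\mathbf{B}_{N,W}-\widehat{\mathbf{B}}\|_{\mathrm{op}}\leq\epsilon/2$ for $r=2NW+C_N\log(15/\epsilon)$. Such an approximant is obtained by discretizing the integral $\int_{-W}^{W}\mathbf{e}_f\mathbf{e}_f^*df$ on an oversampled frequency grid and retaining only the top singular components of the resulting Vandermonde factor; the approximation error is controlled by a Chebyshev polynomial expansion of the sinc kernel whose degree scales as $\log(8N)$, which feeds into $C_N=\tfrac{4}{\pi^2}\log(8N)+6$. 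Weyl's inequality for Hermitian perturbations then gives $\lambda_{N,W}^{(r)}\leq\epsilon/2<\epsilon$, so at most $r$ eigenvalues exceed $\epsilon$. Running the same argument on $\mathbf{I}-\mathbf{B}_{N,W}$ (equivalently on $\mathbf{D}\mathbf{B}_{N,1/2-W}\mathbf{D}$ via the complementarity identity) bounds the number of eigenvalues exceeding $1-\epsilon$; subtracting the two counts yields $\#\{\epsilon\leq\lambda_{N,W}^{(\ell)}\leq 1-\epsilon\}\leq 2C_N\log(15/\epsilon)$.

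The main obstacle is obtaining the sharp constant $4/\pi^2$ in $C_N$. A naive frequency-grid discretization controls the operator-norm error but produces a larger constant; recovering $4/\pi^2$ requires an essentially optimal polynomial separation of $[0,\epsilon]$ from $[1-\epsilon,1]$ on the spectrum, which I would obtain by evaluating $\trace(p(\mathbf{B}_{N,W}))$ for the appropriate minimax Chebyshev polynomial $p$ and tracking the leading-order asymptotics as $N\to\infty$. Verifying that the resulting polynomial-degree budget matches the rank budget in the low-rank construction, so that the two arguments yield the same $C_N\log(15/\epsilon)$ count on each side, is the most delicate step.
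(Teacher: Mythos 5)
There is no internal proof to compare against here: Theorem~\ref{thm:concentration of the DPSS eigenvalues} is quoted as background and the paper defers entirely to the cited references (Slepian, and in its sharp non-asymptotic form to \cite{Karnik2016FAST,ZhuWakin2015MDPSS}), so your attempt has to be judged on its own. Several of your ingredients are correct and are indeed the standard ones: the trace identity $\trace(\mB_{N,W})=2NW$, and especially the complementarity $\mB_{N,W}+\mD\mB_{N,1/2-W}\mD=\mId$ with $\mD=\diag((-1)^n)$, which gives $\lambda_{N,W}^{(\ell)}=1-\lambda_{N,1/2-W}^{(N-1-\ell)}$ and correctly reduces the lower bound at index $\lfloor 2NW\rfloor-1$ to the upper bound at index $\lceil 2NW\rceil$ applied at bandwidth $\tfrac12-W$. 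The architecture of your plunge-region argument (rank-$r$ approximant with spectral error $\eps$, Weyl's inequality to conclude $\lambda^{(r)}\le\eps$, the same applied to $\mId-\mB_{N,W}$ via complementarity, then subtract) is also exactly how the count follows from the decomposition recorded as Theorem~\ref{thm:prolateFFTLR} in this paper.

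The gaps are in the two steps that carry all the difficulty. First, for $\lambda_{N,W}^{(\lceil 2NW\rceil)}\le\tfrac12$ you propose a codimension-$\lceil 2NW\rceil$ subspace of vectors whose Fourier polynomial vanishes at $\lceil 2NW\rceil$ points of $[-W,W]$, with the bound $\int_{-W}^{W}|\hat v|^2\,df\le\tfrac12\|\vv\|_2^2$ ``forced by Bernstein/Markov-type inequalities.'' This is not a proof, and it is exactly the sharp part of the statement: you need the threshold to be precisely $\tfrac12$ at precisely index $\lceil 2NW\rceil$, which is the discrete analogue of Landau's theorem on the eigenvalue distribution of time-and-band-limiting operators. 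A degree-$(N-1)$ trigonometric polynomial constrained to vanish at $\lceil 2NW\rceil$ points spaced at roughly the Nyquist density $1/N$ inside $[-W,W]$ can still concentrate energy between those zeros, and a generic Bernstein/Markov bound will lose a constant factor; nothing in your sketch explains why the constant comes out as exactly $\tfrac12$. The known proofs instead construct explicit test vectors (Nyquist-spaced translates) and perform a sharp norm computation, or argue via a max--min characterization tied to the complementarity identity itself. Second, the rank-$r$ approximant with $r\le 2\lfloor NW\rfloor+1+C_N\log(15/\eps)$ and $C_N=\tfrac{4}{\pi^2}\log(8N)+6$ is the entire technical content of Theorem~\ref{thm:prolateFFTLR} (the main theorem of \cite{Karnik2016FAST}); your description of discretizing $\int_{-W}^{W}\ve_f\ve_f^*\,df$ and truncating a Vandermonde factor does not by itself control the \emph{operator-norm} error at rank $O(\log N\cdot\log(1/\eps))$, and your proposed route to the constant $4/\pi^2$ via ``leading-order asymptotics as $N\to\infty$'' is internally inconsistent with the non-asymptotic claim you are trying to prove. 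As written, both halves of the theorem are reduced to statements at least as hard as the theorem itself.
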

Here $\lfloor a \rfloor $ denotes the largest integer that is not greater  than $a$ and $\lceil a \rceil $ denotes the smallest integer that is not smaller than $a$. Theorem~\ref{thm:concentration of the DPSS eigenvalues} implies that the first $\approx 2NW$ eigenvalues tend to cluster very close to $1$, while the remaining eigenvalues tend to cluster very close to 0, after a narrow transition of width $O(\log(N)\log(\frac{1}{\epsilon}))$.

\subsubsection{Representations of sampled sinusoids and oversampled bandlimited signals}

Define
$$\ve_f:=\begin{bmatrix}e^{j2\pi f 0} \\ e^{j2\pi f 1} \\ \vdots \\ e^{j2\pi f (N-1)}\end{bmatrix} \in \mathbb{C}^{N}$$
for all $f\in [-\frac{1}{2},\frac{1}{2}]$ as the sampled complex exponentials. For any integer $K\in\{1,2,\ldots,N\}$, let $\mS_K :=[\mS_{N,W}]_K$ denote the $N\times K$ matrix formed by taking the first $K$ DPSS vectors (where $N$ and $W$ are clear from the context and typically $K\approx 2NW$). Note that for any orthonormal matrix $\mQ\in\C^{N\times K}$,
\begin{equation}\label{eq:mse with trace}\begin{split}
\int_{-W}^W\left\|\ve_f - \mQ\mQ^*\ve_f\right\|_2^2df &= \int_{-W}^W \trace\left(\ve_f\ve_f^* - \mQ\mQ^*\ve_f\ve_f^*\right)df \\ &= \trace\left(\mB_{N,W} - \mQ\mQ^*\mB_{N,W} \right).
\end{split}\end{equation}
For any value of $K$, the quantity in~\eqref{eq:mse with trace} is minimized by the choice of $\mQ = \mS_K$. This implies that $\mS_K$ is the best basis of $K$ columns to represent (in an MSE sense) the collection of sampled sinusoids $\{\ve_f\}_{f\in[-W,W]}$.  Formally,
\begin{align}\label{eq:redidual by DPSS}
\int_{-W}^{W}\left\|\ve_f-\mS_K\mS_K^*\ve_f\right\|_2^2df =  \sum_{\ell=K}^{N-1}\lambda_{N,W}^{(\ell)},
\end{align}
whereas for each $f\in[-W,W]$, $\|\ve_f\|_2^2 = N$.
It follows from Theorem~\ref{thm:concentration of the DPSS eigenvalues} that $\mS_K$ provides very accurate approximations (in an MSE sense) for all sampled sinusoids $\{\ve_f\}_{f\in[-W,W]}$ if one chooses $K$ slightly larger than $2NW$. We note that this efficiency is in contrast to the DFT, where certain ``on-grid'' sinusoids (those whose frequencies are harmonic multiples of $1/N$) can be represented using just one DFT basis vector, but all other ``off-grid'' sinusoids require $O(N)$ DFT basis vectors due to frequency leakage.

% Also let $\mP_{\mS_K}$ denote the orthogonal projection onto the column space of $\mS_K$.

We note that any representation guarantee  for sampled sinusoids $\{\ve_f\}_{f\in[-W,W]}$
can also be used  for finite-length sample vectors arising from sampling random bandlimited baseband signals.  Suppose $x$ is a continuous-time, zero-mean, wide sense stationary random process with power spectrum
$$P_{x}(F)=\left\{\begin{array}{ll} \frac{1}{B_{\textup{band}}},& F\in[-\frac{B_{\textup{band}}}{2},\frac{B_{\textup{band}}}{2}], \\0, & \textup{otherwise}.\end{array}\right.$$
Let $\vx = [x(0) ~x(T_s) ~ \cdots ~x((N-1)T_s)]^T\in \C^N$ denote a finite vector of samples acquired from $x(t)$ with a sampling interval of $T_s\leq 1/B_\textup{band}$.
Let $f_c = F_cT_s$ and $W = \frac{B_{\textup{band}}T_s}{2}$. We have~\cite{DavenportWakin2012CSDPSS}
\begin{align}\label{eq:mse for sampled bandlimited signal}
\mathbb E\left[\left\|\vx-\mQ\mQ^*\vx\right\|_2^2\right] =  \frac{1}{2W}\int_{-W}^{W}\left\|\ve_f-\mQ\mQ^*\ve_f\right\|_2^2df.
\end{align}

Finally, let $\mF_{N,W}$ denote the partial normalized DFT matrix with the lowest $2\lfloor NW\rfloor+1$ frequency DFT vectors of length $N$, i.e.,
\[
\mF_{N,W} = \frac{1}{\sqrt N}\left[\begin{array}{ccc}\ve_{-\frac{\lfloor NW\rfloor}{N}}  & \cdots & \ve_{\frac{\lfloor NW\rfloor}{N}} \end{array}\right].
\]
It follows that $\mF_{N,W} \mF_{N,W}^*$ is an orthogonal projector onto the column space of $\mF_{N,W}$. The following result states that the difference between the prolate matrix $\mB_{N,W}$ and $\mF_{N,W} \mF_{N,W}^*$ is effectively low rank.

\begin{thm}\cite{Karnik2016FAST} \label{thm:prolateFFTLR}
	Let $N \in \N$ and $W \in (0, \tfrac{1}{2})$ be given.  Then for any $\eps \in (0,\tfrac{1}{2})$, there exist $N\times N$ matrices $\mL$ and $\mE$ such that
	\[
	\mB_{N,W} = \mF_{N,W} \mF_{N,W}^* + \mL + \mE,
	\]
	where
	\[
	\rank(\mL) \le C_N \log \left( \frac{15}{\eps} \right), \quad \| \mE \| \le \eps.\]
	Here $C_N$ is the constant specified in Theorem~\ref{thm:concentration of the DPSS eigenvalues}.
\end{thm}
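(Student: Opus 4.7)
The plan is to construct $\mL$ and $\mE$ from the eigendecomposition $\mB_{N,W}=\mS_{N,W}\mLambda_{N,W}\mS_{N,W}^*$, leveraging Theorem~\ref{thm:concentration of the DPSS eigenvalues} to control the number of DPSS eigenvalues in the transition region. Partition the DPSS indices as $P=\{\ell:\lambda_{N,W}^{(\ell)}>1-\epsilon\}$, $M=\{\ell:\epsilon\le\lambda_{N,W}^{(\ell)}\le 1-\epsilon\}$, and $Z=\{\ell:\lambda_{N,W}^{(\ell)}<\epsilon\}$. Then
\[
\mB_{N,W} \;=\; \Pi_P \;+\; \mL_M \;+\; \mE_1, \qquad \Pi_P := \sum_{\ell\in P}\vs_{N,W}^{(\ell)}(\vs_{N,W}^{(\ell)})^*,
\]
where $\mL_M := \sum_{\ell\in M}\lambda_{N,W}^{(\ell)}\vs_{N,W}^{(\ell)}(\vs_{N,W}^{(\ell)})^*$ has rank at most $|M|\le 2C_N\log(15/\epsilon)$ by Theorem~\ref{thm:concentration of the DPSS eigenvalues}, and $\mE_1 := \sum_{\ell\in Z}\lambda_{N,W}^{(\ell)}\vs_{N,W}^{(\ell)}(\vs_{N,W}^{(\ell)})^* - \sum_{\ell\in P}(1-\lambda_{N,W}^{(\ell)})\vs_{N,W}^{(\ell)}(\vs_{N,W}^{(\ell)})^*$ has spectral norm at most $\epsilon$ since its two pieces live on orthogonal eigenspaces with absolute eigenvalues bounded by $\epsilon$.

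With this in hand, the theorem reduces to showing that $\Pi_P - \mF_{N,W}\mF_{N,W}^*$ itself admits a decomposition of the same type, with rank $O(C_N\log(1/\epsilon))$ and operator norm $O(\epsilon)$. Both matrices are orthogonal projectors onto subspaces of dimension close to $2NW$, so the question is whether their ranges can differ only in an $O(C_N\log(1/\epsilon))$-dimensional subspace. I would prove this in two directions: first, each DFT column $\ve_{k/N}$ of $\mF_{N,W}$ lies very close to $\Range(\Pi_P)$ whenever $k/N$ is not too near the band edge $\pm W$, as quantified by
\[
\ve_{k/N}^*(\mI - \mB_{N,W})\ve_{k/N} \;=\; \int_{W<|f|\le 1/2} |D_N(f-k/N)|^2\, df,
\]
with $D_N$ the Dirichlet kernel, which decays rapidly in the distance from $k/N$ to $\pm W$; and second, each DPSS vector $\vs_{N,W}^{(\ell)}$ with $\ell\in P$ has its DTFT essentially supported in $[-W,W]$ and is therefore nearly captured by $\mF_{N,W}\mF_{N,W}^*$.

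The main obstacle will be recovering the sharp rank $C_N\log(15/\epsilon)$ rather than the polynomial-in-$1/\epsilon$ count that a naive per-vector Dirichlet-kernel bound produces. The correct logarithmic dependence must come from the exponential eigenvalue decay in Theorem~\ref{thm:concentration of the DPSS eigenvalues} used in a sharp way; the cleanest route I anticipate is a Courant--Fischer or Weyl-type interlacing argument that counts the eigenvalues of $\Pi_P - \mF_{N,W}\mF_{N,W}^*$ lying outside $[-\epsilon,\epsilon]$ by reducing to an eigenvalue count for $\mI-\mB_{N,W}$ compressed to $\Range(\mF_{N,W})$. Once this step is in hand, $\mL$ and $\mE$ are assembled by combining the rank pieces and the spectral-norm pieces from the two decompositions, with the $\epsilon$ in the statement absorbing a harmless absolute constant.
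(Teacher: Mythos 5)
First, note that this paper does not actually prove Theorem~\ref{thm:prolateFFTLR}: it is imported verbatim from~\cite{Karnik2016FAST}, where the argument works directly on the entrywise difference $\mB_{N,W}-\mF_{N,W}\mF_{N,W}^*$ (the error between the integral $\int_{-W}^{W}e^{j2\pi f(m-n)}df$ and its Riemann sum over the DFT grid), and obtains exponential singular-value decay from explicit analytic expansions of that error. Your route through the eigendecomposition of $\mB_{N,W}$ is therefore a genuinely different strategy, and it has a serious gap at exactly the step you flag as the "main obstacle." Your first reduction is fine ($\mB_{N,W}=\Pi_P+\mL_M+\mE_1$ with $\|\mE_1\|\le\eps$), although it already spends $2C_N\log(15/\eps)$ of rank on $\mL_M$, which exceeds the stated budget $C_N\log(15/\eps)$ by a factor that cannot be absorbed by rescaling $\eps$ without simultaneously weakening the statement. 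The real problem is the remaining claim that $\Pi_P-\mF_{N,W}\mF_{N,W}^*$ is within $\eps$ in operator norm of a matrix of rank $O(C_N\log(1/\eps))$. This is a statement about the principal angles between the leading DPSS subspace and the low-frequency DFT subspace, and it is essentially equivalent in difficulty to the theorem itself. Your own candidate tools cannot deliver it: the per-column leakage $\ve_{k/N}^*(\mId-\mB_{N,W})\ve_{k/N}$ decays only like the reciprocal of the distance from $k/N$ to $\pm W$, so roughly $1/\eps$ DFT columns have leakage exceeding $\eps$; likewise $\trace\bigl(\mF_{N,W}^*(\mId-\mB_{N,W})\mF_{N,W}\bigr)=O(\log N)$, so a Markov/Courant--Fischer count of eigenvalues of the compressed operator exceeding $\eps$ again gives only $O(\log(N)/\eps)$, polynomial rather than logarithmic in $1/\eps$.

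The suggested rescue---"the correct logarithmic dependence must come from the exponential eigenvalue decay in Theorem~\ref{thm:concentration of the DPSS eigenvalues} used in a sharp way"---does not work, because that theorem controls the spectrum of $\mB_{N,W}$, which you have already fully exploited in constructing $\Pi_P$, $\mL_M$, $\mE_1$; it says nothing about how the eigenspaces of $\mB_{N,W}$ align with the DFT subspace, which is the quantity governing the spectrum of the projector difference $\Pi_P-\mF_{N,W}\mF_{N,W}^*$. There is also a circularity hazard: the sharp non-asymptotic count $\#\{\eps\le\lambda_{N,W}^{(\ell)}\le 1-\eps\}\le 2C_N\log(15/\eps)$ quoted in Theorem~\ref{thm:concentration of the DPSS eigenvalues} is itself derived in~\cite{Karnik2016FAST} \emph{from} the decomposition you are trying to prove, so leaning on it as the primitive input inverts the actual logical order. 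To complete a proof you would need an independent argument that the singular values of $(\mId-\mF_{N,W}\mF_{N,W}^*)\mB_{N,W}$ (equivalently, of the band-edge "corner" blocks of the Riemann-sum error) decay like $e^{-\ell/C_N}$, which is precisely the analytic content of the proof in~\cite{Karnik2016FAST} and is absent from your outline.
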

This result is a key factor in fast computing an approximate Slepian transform in~\cite{Karnik2016FAST} and will play an important role in the construction of the ROAST, which can be used for computing fast orthogonal approximations of sampled sinusoids and bandlimited signals. %Theorem~\ref{thm:prolateFFTLR} has also been utilized for analysing the eigenvalue distribution of discrete periodic time-frequency limiting operators~\cite{zhu2017DFT}.

\subsection{ROAST: Rapid Orthogonal Approximate Slepian Transform}
\label{sec:roast1}

\subsubsection{Construction and relation to the DPSS subspace}

In~\cite{Karnik2016FAST}, we  demonstrated a fast method to approximately project an arbitrary vector onto the subspace spanned by the first slightly more than $2NW$ eigenvectors of $\mB_{N,W}$ (i.e., the DPSS vectors) by utilizing the fact that the difference between $\mB_{N,W}$ and $\mF_{N,W} \mF_{N,W}^*$ approximately has a rank of $O(\log N)$ (see Theorem~\ref{thm:prolateFFTLR}). Note that, in~\cite{Karnik2016FAST}, the approximate  projection is not a true orthogonal projection onto any subspace. Here, we exhibit a subspace that captures most of the energy in the first $2NW$ DPSS vectors (and also the energy in sampled sinusoids within the band of interest), and this subspace has an orthogonal projector that can be applied efficiently to an arbitrary vector.

By utilizing the result that $\mB_{N,W}-\mF_{N,W} \mF_{N,W}^*$ is approximately low rank and also that $\mF_{N,W}$ can be applied to a vector efficiently with the FFT, we build  an orthonormal basis for our subspace by concatenating $\mF_{N,W}$ with a certain matrix $\mQ'$ as follows:
\begin{align*}
\mQ = \left[\mF_{N,W} \quad \mQ'  \right],
\end{align*}
where $\mQ'$ is an $N\times R$ (for some $R$ that we can choose as desired) orthonormal matrix that is also orthogonal to $\mF_{N,W}$. Let $\overline{\mF}_{N,W}$ denote the $N\times (N-2\lfloor NW\rfloor -1)$ matrix with the highest frequency $N-2\lfloor NW\rfloor -1$ DFT vectors of length $N$. Thus $\mF_N := \left[\mF_{N,W}\quad\overline{\mF}_{N,W} \right]$ is the normalized $N\times N$ DFT matrix. Since $\mQ'$ must be orthogonal to $\mF_{N,W}$ and the columns of $\mQ'$ must be orthonormal, we can write $\mQ'$ as $\mQ' = \overline{\mF}_{N,W} \mV$, for some $\mV\in \C^{(N-2\lfloor NW\rfloor -1)\times R}$ that is orthonormal (one can verify that $\mF_{N,W}^*\mQ' = \mzero$ and $(\mQ')^*\mQ' = \mId$). Thus, the desired orthogonal approximate Slepian basis is given as
\begin{align}\mQ = \left[\mF_{N,W} \quad \overline{\mF}_{N,W} \mV  \right], \quad \mV^\T\mV = \mId.
\label{eq:form of Q}\end{align}

The optimal $\mV$ is chosen such that the subspace spanned by $\mQ$ captures the important DPSS vectors. (Since all the DPSS vectors $\vs_{N,W}^{(0)}, \ldots, \vs_{N,W}^{(N-1)}$ form an orthonormal basis for $\C^N$, no subspace of $\C^N$ can capture all of them except $\C^N$ itself.) To illustrate how we obtain $\mV$, consider the following weighted least squares problem
\begin{align}
\minimize_{\mQ} \varrho(\mQ):=\sum_{\ell = 0}^{N-1}\lambda_{N,W}^{(\ell)}\left\|\vs_{N,W}^{(\ell)} -\mQ\mQ^*\vs_{N,W}^{(\ell)}\right\|_2^2.
\label{eq:define rho Q}\end{align}
Here we use the DPSS eigenvalue $\lambda_{N,W}^{(\ell)}$ to weight the energy in the DPSS vector $\vs_{N,W}^{(\ell)}$ that is not captured by $\mQ$. The reason is that the larger the DPSS eigenvalue, the more concentration the corresponding DPSS vector has in the frequency domain, implying that the DPSS vector is more important in practical applications such as representing sampled bandlimited signals (see~\eqref{eq:redidual by DPSS}).  To solve~\eqref{eq:define rho Q}, we rewrite $\varrho(\mQ)$ as
\begin{equation}\label{eq:rho Q to MSE}\begin{split}
\varrho(\mQ) &= \trace\bigg(\sum_{\ell = 0}^{N-1}\lambda_{N,W}^{(\ell)}\vs_{N,W}^{(\ell)} (\vs_{N,W}^{(\ell)})^\T -\mQ\mQ^* \lambda_{N,W}^{(\ell)}\vs_{N,W}^{(\ell)}(\vs_{N,W}^{(\ell)})^\T\bigg)\\
& = \trace\left(\mB_{N,W} - \mQ\mQ^*\mB_{N,W} \right)\\
& = \int_{-W}^W\left\|\ve_f - \mQ\mQ^*\ve_f\right\|_2^2df,
\end{split}\end{equation}
where the last line follows from~\eqref{eq:mse with trace}. In other words, an orthonormal basis $\mQ$ obtained by minimizing $\varrho(\mQ)$ is also an optimal basis to represent sampled sinusoids (and thus also certain bandlimited signals) in the MSE sense.

Plugging $\mQ = \left[\mF_{N,W} \quad \overline{\mF}_{N,W} \mV  \right]$ into the above equation yields
\begin{align*}
\varrho(\mQ)= \trace(\overline{\mF}_{N,W}^* \mB_{N,W}\overline{\mF}_{N,W}  - \mV \mV^* \overline{\mF}_{N,W}^* \mB_{N,W}\overline{\mF}_{N,W}),
\end{align*}
which suggests that setting $\mV$ equal to the $R$ dominant left singular vectors of $\overline{\mF}_{N,W}^*\mB_{N,W}\overline{\mF}_{N,W}$ results in a relatively small representation residual $\varrho(Q)$ as long as $\overline{\mF}_{N,W}^*\mB_{N,W}\overline{\mF}_{N,W}$ has an effective rank of $R$. In fact, we find that certain numerical issues can be avoided by adopting the $R$ dominant left singular vectors of $\overline{\mF}_{N,W}^*\mB_{N,W}$ (rather than $\overline{\mF}_{N,W}^*\mB_{N,W}\overline{\mF}_{N,W}$), and that the same strong theoretical guarantees can be established for this construction. The following result provides such a guarantee for the standard ROAST construction involving the singular vectors of $\overline{\mF}_{N,W}^*\mB_{N,W}$; we briefly revisit the idea of a constructing involving the singular vectors of $\overline{\mF}_{N,W}^*\mB_{N,W}\overline{\mF}_{N,W}$ in Section~\ref{sec: proof average repres error}.

\begin{thm}\label{thm:subspace relationship between S_K and Q}(Representation guarantee for DPSS vectors)
	Fix $N \in \N$ and $W \in (0, \tfrac{1}{2})$. For any $\eps \in (0,\tfrac{1}{2})$, fix $K$ to be such that $\lambda_{N,W}^{(K-1)}\geq\epsilon$ and  set $R = \lceil C_N\log \left( 15/\epsilon \right)\rceil$, where $C_N$ is the constant specified in Theorem~\ref{thm:concentration of the DPSS eigenvalues}. Then the orthonormal basis $\mQ = \left[\mF_{N,W} \quad \overline{\mF}_{N,W} \mV  \right]$ with $\mV\in\C^{(N-2\lfloor NW\rfloor -1)\times R}$ containing the $R$ dominant left singular vectors of $\overline{\mF}_{N,W}^*\mB_{N,W}$ satisfies
	\begin{align*}
	&\|\mS_K\mS_K^* - \mQ\mQ^*\mS_K \mS_K^*\|^2 \leq \epsilon,\\
	&\| \vs_{N,W}^{(\ell)} - \mQ \mQ^*\vs_{N,W}^{(\ell)}  \|_2^2\leq \epsilon,
	\end{align*}
	for all $\ell = 0,1,\ldots, K-1$. By slightly increasing $R$ to $R = \lceil C_N\log \left( 15N/\epsilon \right)\rceil$, the subspace angle $\Theta_{\mS_K,\mQ}$ between the columns spaces of $\mS_K$ and $\mQ$ satisfies
	\begin{align*}
	\cos\left(\Theta_{\mS_K,\mQ}\right) \geq \sqrt{1- \eps}.
	\end{align*}
\end{thm}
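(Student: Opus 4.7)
The plan is to exploit two structural facts. First, because $[\mF_{N,W}\ \overline{\mF}_{N,W}]$ is a unitary DFT matrix and the two blocks of $\mQ$ span orthogonal subspaces, the projector splits as $\mQ\mQ^* = \mF_{N,W}\mF_{N,W}^* + \overline{\mF}_{N,W}\mV\mV^*\overline{\mF}_{N,W}^*$, so $\mId - \mQ\mQ^* = \overline{\mF}_{N,W}(\mId - \mV\mV^*)\overline{\mF}_{N,W}^*$ and, crucially, $(\mId - \mQ\mQ^*)\mF_{N,W} = \mzero$. Second, I would invoke Theorem~\ref{thm:prolateFFTLR} with parameter $\epsilon$ to write $\mB_{N,W} = \mF_{N,W}\mF_{N,W}^* + \mL + \mE$ with $\rank(\mL) \le R$ and $\|\mE\| \le \epsilon$. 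Left-multiplying by $\overline{\mF}_{N,W}^*$ annihilates the DFT block and exhibits $\overline{\mF}_{N,W}^*\mB_{N,W}$ as a rank-$R$ matrix plus a spectral-norm-$\epsilon$ perturbation. By Eckart--Young, $\sigma_{R+1}(\overline{\mF}_{N,W}^*\mB_{N,W}) \le \epsilon$, and since $\mV$ captures the top-$R$ left singular subspace of $\overline{\mF}_{N,W}^*\mB_{N,W}$, we obtain $\|(\mId - \mV\mV^*)\overline{\mF}_{N,W}^*\mB_{N,W}\| \le \epsilon$.

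To translate this spectral estimate into per-DPSS approximation guarantees, I would combine the eigen-identity $\mB_{N,W}\vs_{N,W}^{(\ell)} = \lambda_{N,W}^{(\ell)}\vs_{N,W}^{(\ell)}$ with the substitution $\mB_{N,W} = \mF_{N,W}\mF_{N,W}^* + \mL + \mE$ and with $(\mId - \mQ\mQ^*)\mF_{N,W} = \mzero$ to obtain the key relation
\[
\lambda_{N,W}^{(\ell)}\,(\mId - \mQ\mQ^*)\vs_{N,W}^{(\ell)} = (\mId - \mQ\mQ^*)(\mL + \mE)\vs_{N,W}^{(\ell)}.
\]
The norm on the right is controlled via $\|(\mId - \mQ\mQ^*)\mL\| = \|(\mId - \mV\mV^*)\overline{\mF}_{N,W}^*\mL\|$; rewriting $\overline{\mF}_{N,W}^*\mL = \overline{\mF}_{N,W}^*\mB_{N,W} - \overline{\mF}_{N,W}^*\mE$ bounds this by $\|(\mId - \mV\mV^*)\overline{\mF}_{N,W}^*\mB_{N,W}\| + \|\mE\|$, which is at most $2\epsilon$. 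Dividing through by $\lambda_{N,W}^{(\ell)} \ge \epsilon$ then yields the per-vector estimate. The matrix claim $\|\mS_K\mS_K^* - \mQ\mQ^*\mS_K\mS_K^*\|^2 \le \epsilon$ follows by assembling the identity columnwise into $(\mId - \mQ\mQ^*)\mS_K\mLambda_K = (\mId - \mQ\mQ^*)(\mL + \mE)\mS_K$ and invoking $\|\mLambda_K^{-1}\| = 1/\lambda_{N,W}^{(K-1)}$, after using the reduction $\|(\mId - \mQ\mQ^*)\mS_K\mS_K^*\| = \|(\mId - \mQ\mQ^*)\mS_K\|$ (valid because $\mS_K$ has orthonormal columns).

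For the subspace-angle refinement, I would rerun the same argument with $R = \lceil C_N\log(15N/\epsilon)\rceil$. This enlarged rank allows me to invoke Theorem~\ref{thm:prolateFFTLR} with parameter $\epsilon/N$ instead of $\epsilon$, tightening the Eckart--Young estimate to $\sigma_{R+1}(\overline{\mF}_{N,W}^*\mB_{N,W}) \le \epsilon/N$. This extra factor of $N$ is exactly what is needed to compensate for the $N$-dependence introduced when converting an operator-norm residual into the largest principal angle via $\sin^2(\Theta_{\mS_K,\mQ}) = \|(\mId - \mQ\mQ^*)\mS_K\|^2$, and the cosine lower bound $\cos(\Theta_{\mS_K,\mQ}) \ge \sqrt{1-\epsilon}$ then drops out.

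The hard part will be squeezing the final bound down to the sharp constant promised by the theorem: a crude combination of the ingredients above yields an estimate proportional to $\epsilon/\lambda_{N,W}^{(\ell)}$, which becomes vacuous at the lower extreme $\lambda_{N,W}^{(\ell)} \approx \epsilon$. Obtaining the $\sqrt{\epsilon}$-type decay uniformly across $\ell \le K-1$ will likely require either a Davis--Kahan-style perturbation bound showing that $\overline{\mF}_{N,W}^*\vs_{N,W}^{(\ell)}$ itself sits close to $\Range(\mV)$ (so the $1/\lambda_{N,W}^{(\ell)}$ factor never materializes), or applying Theorem~\ref{thm:prolateFFTLR} with a slightly tighter parameter in order to absorb the multiplicative loss incurred when inverting the eigenvalue relation.
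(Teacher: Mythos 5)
Your structural setup is sound and overlaps substantially with the paper's: the splitting $\mId - \mQ\mQ^* = \overline{\mF}_{N,W}(\mId-\mV\mV^*)\overline{\mF}_{N,W}^*$, the use of Theorem~\ref{thm:prolateFFTLR} plus Eckart--Young to control the tail singular values of $\overline{\mF}_{N,W}^*\mB_{N,W}$ (this is exactly Lemma~\ref{lem:singular value decay of FhighB}), and the eigen-relation $\mB_{N,W}\vs_{N,W}^{(\ell)}=\lambda_{N,W}^{(\ell)}\vs_{N,W}^{(\ell)}$ to transfer the bound to the DPSS vectors. But the proof is not finished: as you yourself flag in the final paragraph, your chain delivers only $\|\vs_{N,W}^{(\ell)}-\mQ\mQ^*\vs_{N,W}^{(\ell)}\|_2 \le 2\eps/\lambda_{N,W}^{(\ell)}$, which is vacuous at the permitted floor $\lambda_{N,W}^{(\ell)}\approx\eps$, and you leave the repair as a choice between two unverified options. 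That acknowledged hole is the crux of the theorem, so the proposal as written does not constitute a proof.

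The paper closes the gap by never passing through the unsquared residual (Lemma~\ref{lem:guarantee for representing DPSS}). Writing $\mB_{N,W}=\mS_{N,W}\mLambda_{N,W}\mS_{N,W}^*$ and using unitary invariance,
\begin{align*}
\left\|(\mId-\mQ\mQ^*)\mB_{N,W}\right\| \;\ge\; \left\|\left(\mId-\mS_K^*\mQ\mQ^*\mS_K\right)\mLambda_K\right\| \;\ge\; \eps\left\|\mId-\mS_K^*\mQ\mQ^*\mS_K\right\|,
\end{align*}
while $\|(\mId-\mQ\mQ^*)\mB_{N,W}\| = \|(\mId-\mV\mV^*)\overline{\mF}_{N,W}^*\mB_{N,W}\|$. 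The key observation is that the positive semidefinite matrix $\mId-\mS_K^*\mQ\mQ^*\mS_K$ has diagonal entries equal to the \emph{squared} residuals $\|\vs_{N,W}^{(\ell)}-\mQ\mQ^*\vs_{N,W}^{(\ell)}\|_2^2 = 1-\|\mQ^*\vs_{N,W}^{(\ell)}\|_2^2$, and that $\|\mS_K\mS_K^*-\mQ\mQ^*\mS_K\mS_K^*\|^2\le\|\mId-\mS_K^*\mQ\mQ^*\mS_K\|$; hence only a single factor of $1/\eps$ is paid for the squared quantities, and taking $R$ large enough that $\|(\mId-\mV\mV^*)\overline{\mF}_{N,W}^*\mB_{N,W}\|\le\eps^2$ finishes the first two claims. (The angle bound then comes from an $\ell_1$--$\ell_2$ estimate over unit vectors in the span of $\mS_K$, which is where the extra factor of $N$ enters.) Your second suggested repair---rerunning Theorem~\ref{thm:prolateFFTLR} with a tighter parameter---would in fact work: demanding $2\delta/\lambda_{N,W}^{(\ell)}\le\sqrt{\eps}$ forces $\delta\le\eps^{3/2}/2$ and hence $R\approx\tfrac{3}{2}C_N\log(1/\eps)$ up to constants, which is of the same order as the $R=C_N\log(15/\eps^2)$ that the paper's own proof actually uses (neither route attains the $R=\lceil C_N\log(15/\eps)\rceil$ literally stated in the theorem). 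But you would need to commit to that choice and carry it through; as it stands the decisive step is missing.
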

The formal definition of (the largest principal) angle between two subspaces is given in Definition~\ref{def:subspace angle}. Informally, if the subspace angle $\Theta$ is small, the two subspaces are nearly linearly dependent and one subspace is almost ``contained'' in the other subspace. Here, to guarantee that the column space of $\mS_K$ is almost ``contained'' in the column space of $\mQ$, one can make $\Theta_{\mS_K,\mQ}$ arbitrary small by increasing $R$. However, we note that we are not guaranteed that $\|\mQ\mQ^* - \mS_K \mS_K^*\|$ is small since in general $\|\mQ\mQ^* - \mS_K \mS_K^*\| =1$ if $\mQ$ and $\mS_K$ have a different number of columns. Instead, we are guaranteed that the subspace spanned by the columns of $\mS_K$ is approximately within the column space of $\mQ$ and the angle between the two subspaces is small by Theorem~\ref{thm:subspace relationship between S_K and Q}. We also note that the bound on $\|\mS_K\mS_K^* - \mQ\mQ^*\mS_K\mS_K^*\|$ is useful since  for any vector $\va\in\C^N$
\begin{align*}
\|\va - \mQ \mQ^*\va\|_2 &\leq  \|\va - \mQ \mQ^* \mS_K \mS_K^*\va\|_2\\
&\leq  \|\va -\mS_K \mS_K^*\va\|_2 + \|\mS_K\mS_K^* - \mQ\mQ^*\mS_K\mS_K^*\|_2\|\va\|_2\\
&\leq \|\va -\mS_K \mS_K^*\va\|_2 + \sqrt{\epsilon} \| \va\|_2,
\end{align*}
which\footnote{Here the first inequality holds because $\mQ \mQ^*\va$ is the orthogonal projection of $\va$ onto $\Span(\mQ)$ (the column space of $\mQ$) and thus is closest to $\va$ among all points in $\Span(\mQ)$, in which $\mQ \mQ^* \mS_K \mS_K^*\va$ also lies.} implies any representation guarantee for $\mS_K$ can be utilized for $\mQ$.

\subsubsection{Representations of sampled sinusoids and oversampled bandlimited signals}

As illustrated in~\eqref{eq:rho Q to MSE}, the orthonormal matrix obtained by minimizing $\varrho(\mQ)$ is also expected to accurately represent sampled sinusoids within the band of interest in the MSE sense.  This is formally established in the following results.
\begin{thm}\label{thm:average repres error}(Average representation error) Fix $W\in(0,\frac{1}{2})$ and $N\in\N$. For any $\eps \in (0,\tfrac{1}{2})$, set
	\[ R = \max\left\{ \left\lceil C_N \log \left( \frac{15 C_N }{N\eps} \right)\right\rceil + 1, 0\right\},\]
	where $C_N$ is the constant specified in Theorem~\ref{thm:concentration of the DPSS eigenvalues}. Then  the orthonormal basis $\mQ = \left[\mF_{N,W} \quad \overline{\mF}_{N,W} \mV  \right]$ with $\mV\in\C^{(N-2\lfloor NW\rfloor -1)\times R}$ containing the $R$ dominant left singular vectors of $\overline{\mF}_{N,W}^*\mB_{N,W}$ satisfies
	\[
	\int_{-W}^W\frac{\left\|\ve_f - \mQ\mQ^*\ve_f\right\|_2^2}{\|\ve_f\|_2^2}df \leq \epsilon
	\]
	
\end{thm}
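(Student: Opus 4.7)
The plan is to convert the integral into a trace over a reduced subspace, exploit the explicit SVD construction of $\mV$, and invoke Theorem~\ref{thm:prolateFFTLR} as a low-rank-plus-small-error decomposition of $\mB_{N,W}$. Since $\|\ve_f\|_2^2 = N$, the trace identity in~\eqref{eq:rho Q to MSE} reduces the desired inequality to the equivalent bound $\varrho(\mQ) \le N\eps$. The block structure $\mQ = [\mF_{N,W}\ \ \overline{\mF}_{N,W}\mV]$, together with the unitarity of $[\mF_{N,W}\ \ \overline{\mF}_{N,W}]$ and $\mV^*\mV = \mId$, yields $\mId - \mQ\mQ^* = \overline{\mF}_{N,W}(\mId - \mV\mV^*)\overline{\mF}_{N,W}^*$, so by cyclicity of trace
\[
\varrho(\mQ) \;=\; \trace\bigl((\mId - \mV\mV^*)\,\tilde B\bigr),\qquad \tilde B \;:=\; \overline{\mF}_{N,W}^*\mB_{N,W}\overline{\mF}_{N,W}.
\]

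Next, I would invoke Theorem~\ref{thm:prolateFFTLR} with the calibrated parameter $\eps' = N\eps/C_N$, chosen so that the corresponding rank bound is $R-1$ and $\|\mE\| \le \eps'$. Setting $M := \overline{\mF}_{N,W}^*\mB_{N,W}$ and using $\overline{\mF}_{N,W}^*\mF_{N,W} = \mzero$ gives $M = \overline{\mF}_{N,W}^*\mL + \overline{\mF}_{N,W}^*\mE$, so Weyl's inequality yields $\sigma_i(M) \le \eps'$ for all $i \ge R$. To tie $\tilde B$ back to the singular values of $M$, note that $\overline{\mF}_{N,W}\overline{\mF}_{N,W}^* = \mId - \mF_{N,W}\mF_{N,W}^*$ implies
\[
\tilde B^2 \;=\; MM^* - CC^*,\qquad C \;:=\; \overline{\mF}_{N,W}^*\mB_{N,W}\mF_{N,W}.
\]
Since $CC^* \succeq \mzero$, we have $\tilde B^2 \preceq MM^*$, and by operator monotonicity of the matrix square root, $\tilde B \preceq (MM^*)^{1/2}$. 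Because $\mV$ collects the top-$R$ eigenvectors of $(MM^*)^{1/2}$, we conclude
\[
\trace\bigl((\mId - \mV\mV^*)\tilde B\bigr) \;\le\; \trace\bigl((\mId - \mV\mV^*)(MM^*)^{1/2}\bigr) \;=\; \sum_{i > R}\sigma_i(M).
\]

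The hard part will be turning this into the sharp bound $N\eps$: the naive termwise estimate $\sigma_i(M) \le \eps'$ for $i \ge R$ yields only $(N-R)\eps'$, which is loose by a factor of order $N/C_N$. I would close the gap by applying Weyl's inequality with an $(R-1)$-offset to write $\sigma_{R-1+j}(M) \le \sigma_j(\overline{\mF}_{N,W}^*\mE)$ for each $j \ge 1$, and then invoking a finer control of $\sum_{j \ge 2}\sigma_j(\overline{\mF}_{N,W}^*\mE)$—for instance via Cauchy--Schwarz on $\|\overline{\mF}_{N,W}^*\mE\|_F$ combined with the DPSS spectrum concentration of Theorem~\ref{thm:concentration of the DPSS eigenvalues} applied to the implicit structure of $\mE$ in the construction underlying Theorem~\ref{thm:prolateFFTLR}. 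Striking this balance is precisely what the constant in $R = \lceil C_N\log(15 C_N/(N\eps))\rceil + 1$ is calibrated to afford, and it is the main obstacle to a clean derivation.
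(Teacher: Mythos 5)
Your reduction is sound up to the point you yourself flag as the obstacle. The identity $\varrho(\mQ)=\trace\bigl((\mId-\mV\mV^*)\,\overline{\mF}_{N,W}^*\mB_{N,W}\overline{\mF}_{N,W}\bigr)$ is exactly the paper's equation~\eqref{eq:rho Q to trace FBF}, and your comparison $\overline{\mF}_{N,W}^*\mB_{N,W}\overline{\mF}_{N,W}\preceq (MM^*)^{1/2}$ via operator monotonicity is a legitimate (and slightly different) substitute for the paper's Lemma~\ref{lem:average SNR bound}, which instead uses Von Neumann's trace inequality (Lemma~\ref{lem:Von Neumann}) with $\mA=M-\mV\mV^*M$ and $\mB=\overline{\mF}_{N,W}^*$; both routes land on the same quantity $\sum_{i\ge R}\sigma_i(M)$.

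The genuine gap is the last step, and your proposed repair would not work. A single application of Theorem~\ref{thm:prolateFFTLR} at one calibrated $\eps'$ only yields the spectral-norm bound $\|\mE\|\le\eps'$; it gives no control on $\|\overline{\mF}_{N,W}^*\mE\|_F$ or on the trace of $\mE$, so the Cauchy--Schwarz route you sketch cannot beat the termwise estimate $(N-R)\eps'$, which is off by the factor $N/C_N$ you computed. The missing idea is to apply Theorem~\ref{thm:prolateFFTLR} at \emph{every} scale simultaneously: for each index $\ell$, choosing $\eps=15e^{-\ell/C_N}$ makes $\rank(\mL)\le\ell$ and hence $\sigma_\ell(M)\le 15e^{-\ell/C_N}$ by Weyl/Eckart--Young (this is the paper's Lemma~\ref{lem:singular value decay of FhighB}). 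The tail is then a geometric series,
\begin{equation*}
\sum_{\ell\ge R}\sigma_\ell(M)\;\le\;\sum_{\ell\ge R}15e^{-\ell/C_N}\;\le\;15\,C_N\,e^{-(R-1)/C_N},
\end{equation*}
which costs only a factor $C_N$ over the single term $\sigma_R$ rather than a factor $N$; with $R=\lceil C_N\log(15C_N/(N\eps))\rceil+1$ this is exactly $N\eps$, and dividing by $\|\ve_f\|_2^2=N$ finishes the proof. (You also need the paper's degenerate case: when $C_N\log(15C_N/(N\eps))+1\le 0$ one takes $R=0$ and bounds the full sum $\sum_{\ell\ge 0}\sigma_\ell\le 15C_N e^{1/C_N}\cdot e^{-1/C_N}\le N\eps$ directly, so $\mQ=\mF_{N,W}$ already suffices.)
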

A similar approximation guarantee can be established for vectors arising from sampling random bandlimited signals by using~\eqref{eq:mse for sampled bandlimited signal}.

In~\cite{ZhuWakin2015MDPSS}, we rigorously show that every discrete-time sinusoid with a frequency $f\in[-W,W]$ is well-approximated by the DPSS basis $\mS_K$ when $K$ is slightly larger than $2NW$. The proof is based on an asymptotic result on the DTFT of the DPSS basis functions (which are known as discrete prolate spheroidal wave functions (DPSWF's)) and the result is thus asymptotic. Here we use a different approach to obtain a non-asymptotic guarantee for approximating every discrete-time sinusoid with a frequency $f\in[-W,W]$. Noting that $\left\|\ve_f - \mQ\mQ^*\ve_f\right\|_2^2$ is differentiable everywhere, we first show that its derivative is bounded above by $2\pi N^2$. Then by utilizing the previous result on $\int_{-W}^W\left\|\ve_f - \mQ\mQ^*\ve_f\right\|_2^2df$, one obtains a similar bound on $\left\|\ve_f - \mQ\mQ^*\ve_f\right\|_2^2$.

\begin{thm}\label{thm:fast for sinusoid}(Representation guarantee for pure sinusoids)
	Let $N \in \N$ and $W \in (0, \tfrac{1}{2})$ be given such that $W\geq \frac{1}{4\pi N}$.  For any $\eps \in (0,\tfrac{1}{2})$, set
	\[
	R =  \max\left(\left\lceil C_N \log( \frac{60\pi C_N}{\eps^2} )\right\rceil +1,~ \left\lceil C_N \log ( \frac{15 C_N}{NW\eps})\right\rceil +1 \right),\]
	where $C_N$ is the constant specified in Theorem~\ref{thm:concentration of the DPSS eigenvalues}. Then the orthonormal basis $\mQ = \left[\mF_{N,W} \quad \overline{\mF}_{N,W} \mV  \right]$ with $\mV\in\C^{(N-2\lfloor NW\rfloor -1)\times R}$ containing the $R$ dominant left singular vectors of $\overline{\mF}_{N,W}^*\mB_{N,W}$ satisfies
	\[
	\frac{\left\|\ve_f - \mQ\mQ^*\ve_f\right\|_2^2}{\|\ve_f\|_2^2} \leq \epsilon
	\]
	for all $f\in[-W,W]$.
\end{thm}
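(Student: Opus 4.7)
The plan is to bootstrap the averaged bound of Theorem~\ref{thm:average repres error} into a uniform pointwise bound by combining it with a Lipschitz estimate on $g(f) := \|\ve_f - \mQ\mQ^*\ve_f\|_2^2$: if $g$ has a small average over $[-W,W]$ and its derivative is uniformly bounded, then $g$ cannot be large at any single point. Let $g_{\max} := \max_{f\in[-W,W]} g(f)$, attained at some $f_\star \in [-W,W]$; the goal is to show $g_{\max}/N \leq \epsilon$.

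First I would bound $|g'(f)|$. Writing $g(f) = \ve_f^*\mP\ve_f$ with $\mP := \mId - \mQ\mQ^*$ (a self-adjoint orthogonal projector, so $\mP^2 = \mP$), differentiation gives $g'(f) = 2\Re{(\mP\ve_f)^*(\mP\ve_f')}$, where $\ve_f'[n] = j2\pi n\,e^{j2\pi fn}$. Cauchy--Schwarz then yields $|g'(f)| \leq 2\|\mP\ve_f\|_2\|\mP\ve_f'\|_2 \leq 2\|\ve_f\|_2\|\ve_f'\|_2$, and combined with $\|\ve_f\|_2 = \sqrt N$ and $\|\ve_f'\|_2^2 = 4\pi^2\sum_{n=0}^{N-1} n^2 = O(N^3)$, this produces a Lipschitz bound $|g'(f)| \leq M$ with $M$ of order $N^2$ (the precise value $M = 2\pi N^2$ is the one asserted in the discussion preceding the theorem).

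Next, set $\delta := g_{\max}/(2M)$. The Lipschitz bound ensures $g(f) \geq g_{\max}/2$ whenever $|f - f_\star| \leq \delta$, so on $[f_\star - \delta, f_\star + \delta]\cap[-W,W]$, whose length is at least $\min(\delta,\,2W)$ even when $f_\star$ sits at the boundary of $[-W,W]$. This gives
\[
\int_{-W}^W g(f)\,df \;\geq\; \begin{cases} g_{\max}^2/(2M), & \delta \leq W, \\[2pt] W g_{\max}, & \delta > W. \end{cases}
\]
Invoking Theorem~\ref{thm:average repres error} with parameter $\epsilon_0$ provides the upper bound $\int_{-W}^W g(f)\,df \leq N\epsilon_0$, so together the two force $g_{\max} \leq \max\bigl(\sqrt{2MN\epsilon_0},\, N\epsilon_0/W\bigr)$.

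To obtain $g_{\max}/N \leq \epsilon$ in either regime, we require both $\epsilon_0 \leq \epsilon^2/(4\pi N)$ (using $M = 2\pi N^2$) and $\epsilon_0 \leq W\epsilon$. Each choice produces a value of $R$ via Theorem~\ref{thm:average repres error}, namely $\lceil C_N\log(60\pi C_N/\epsilon^2)\rceil + 1$ and $\lceil C_N\log(15 C_N/(NW\epsilon))\rceil + 1$ respectively, which are precisely the two arguments of the maximum in the theorem statement; taking the larger $R$ ensures both constraints on $\epsilon_0$ are met. The hypothesis $W \geq 1/(4\pi N)$ keeps $1/(NW)$ bounded so that the second argument of the max does not blow up. The main obstacle is Step 1---obtaining a sharp enough constant in the derivative bound, which relies on exploiting $\mP^2 = \mP$ in the Cauchy--Schwarz step---together with the need to handle $f_\star$ lying at the boundary of $[-W,W]$ when converting the Lipschitz estimate into an integral lower bound; the remainder is routine algebra and an invocation of Theorem~\ref{thm:average repres error}.
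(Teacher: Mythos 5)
Your strategy is exactly the paper's: bound the derivative of $g(f)=\|\ve_f-\mQ\mQ^*\ve_f\|_2^2$ by $O(N^2)$, convert the averaged bound into a pointwise one with a two-case Lipschitz argument (the paper packages this as Lemma~\ref{lem:uniform guarantee to pure sinusoid}), and then pick the target average error $\eps'$ separately in each case to produce the two arguments of the $\max$ defining $R$. The structure, the case split at $\delta\sim W$, and the final bookkeeping all match. However, two constant-level steps do not close as written, and together they mean your argument proves the theorem only with somewhat larger constants inside the logarithms, not the stated $R$.

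First, plain Cauchy--Schwarz does not give $M=2\pi N^2$: with $\|\ve_f\|_2=\sqrt N$ and $\|\ve_f'\|_2=2\pi\sqrt{\sum_{n=0}^{N-1}n^2}\approx 2\pi N^{3/2}/\sqrt3$, you get $|g'(f)|\le 4\pi N^2/\sqrt3\approx 1.15\cdot 2\pi N^2$, and exploiting $\mP^2=\mP$ does not improve this. You correctly flag this as the main obstacle but leave it open; the standard fix is to note that $g'(f)=2\Re{\ve_f^*\mP(\mtx{\Pi}-j\pi(N-1)\mId)\ve_f}$ (the added multiple of $\mId$ contributes $2\Re{j\pi(N-1)g(f)}=0$), after which Cauchy--Schwarz with the centered diagonal gives $2\pi N^2/\sqrt3<2\pi N^2$. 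Second, your level-set argument ($g\ge g_{\max}/2$ on an interval of length $\ge\min(\delta,2W)$ with $\delta=g_{\max}/(2M)$) yields only $\int g\ge g_{\max}^2/(4M)$ in the first case and $\int g\ge Wg_{\max}/2$ in the second, each a factor of $2$ weaker than the displayed bounds $g_{\max}^2/(2M)$ and $Wg_{\max}$ that your final computation relies on. The paper recovers that factor by integrating the full linear Lipschitz minorant $g(f)\ge g_{\max}-M|f-f_\star|$ over $[-W,W]$ (the triangle/trapezoid areas in the proof of Lemma~\ref{lem:uniform guarantee to pure sinusoid}); substituting that step makes your displayed inequalities correct and the rest of your derivation then reproduces the stated $R$ exactly.
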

\begin{remark}
	In~\cite{ZhuWakin2015MDPSS}, we show a similar but asymptotic result for the Slepian basis as follows. Fix $W \in (0, \tfrac{1}{2})$ and $\delta\in(0,\frac{1}{2W}-1)$. Let $K = 2NW(1+\delta)$. Then there exist constants $\widetilde C_1,\widetilde C_2$ and $N_0\in\N$ (which may depend on $W$ and $\delta$) such that
	\[
	\frac{\left\|\ve_f - \mS_K\mS_K^*\ve_f\right\|_2^2}{\|\ve_f\|_2^2} \leq \widetilde C_1 N^{3/2}e^{-\widetilde C_2 N}
	\]
	for all $N\geq N_0$ and $f\in[-W,W]$. Compared with this result, Theorem~\ref{thm:fast for sinusoid} is non-asymptotic and provides  detail on the constants involved.
\end{remark}

Finally, we remark that for $\mQ = \left[\mF_{N,W} \quad \overline{\mF}_{N,W} \mV  \right]$ with $\mV\in\C^{(N-2\lfloor NW\rfloor -1)\times R}$, both $\mQ$ and $\mQ^*$ can be applied to a vector with computational complexity $O(N\log N + NR)$. As an example, for any $\va\in\C^N$, $\widetilde\va = \left[\mF_{N,W} \quad \overline{\mF}_{N,W} \right]^\H \va$ can be efficiently computed using the FFT with  complexity  $O(N\log N)$. Then $\mV^*\widetilde \va_2$ can be computed via conventional matrix-vector multiplication with  complexity  $O(NR)$, where $\widetilde \va_2$ is the sub-vector obtained by taking the last $N-2\lfloor NW\rfloor -1$ entries of $\widetilde\va_2$. Thus the total computational complexity for computing $\mQ^*\va$ is $O(N\log N + NR)$.

\subsubsection{ROAST construction with a randomized algorithm}
\label{sec:roast2}

We note that the DPSS vectors are not involved  in constructing $\mV$ and $\mQ$. Directly computing $\mV$ with the Businger-Golub algorithm~\cite{businger1969algorithm} has complexity $O(N(N-2\lfloor NW\rfloor -1)R)$. Noting that $\overline{\mF}_{N,W}^*\mB_{N,W}$ is effectively low rank, however, we can apply a fast randomized algorithm~\cite{halkoRandomizedAlgorithm} to compute an approximate basis for the range of $\overline{\mF}_{N,W}^*\mB_{N,W}$.  Let $\mOmega$ be an $N\times P$ standard Gaussian matrix. We construct a matrix $\overline\mV$ whose columns form an orthonormal basis for the range of $\overline{\mF}_{N,W}^*\mB_{N,W}\mOmega$. By applying the FFT, the complexity of computing $\overline{\mF}_{N,W}^*\mB_{N,W} \mOmega$ is $O(PN\log N)$. Computing an orthonormal basis for the range of $\overline{\mF}_{N,W}^*\mB_{N,W} \mOmega$ requires $O(NP^2)$ flops.
The following results establish the dimensionality of $\overline\mV$ needed and the representation guarantee with the corresponding basis.

\begin{thm}\label{thm:randomized algorithm for finding V}(Guarantee for randomized algorithm)
	Fix $N \in \N$ and $W \in (0, \tfrac{1}{2})$. For any $\eps \in (0,\tfrac{1}{2})$, fix $K$ to be such that $\lambda_{N,W}^{(K-1)}\geq\epsilon$. Let $\mOmega$ be an $N\times P$ standard Gaussian matrix, with $P$ specified as below. Also let $\overline\mV$ be an orthonormal basis for the column space of the sample matrix $\overline{\mF}_{N,W}^*\mB_{N,W}\mOmega$. Then the orthonormal basis $\mQ = \left[\mF_{N,W} \quad \overline{\mF}_{N,W} \overline\mV  \right]$ has the following expression ability in expectation.
	\begin{enumerate}[(i)]
		\item Setting
		\[
		P = \left\lceil 2C_N \log \left( \frac{30+15e}{\eps} \right)\right\rceil +3,
		\]
		we are guaranteed that
		\begin{align*}
		&\mathbb{E}\left[ \left\|\mS_K\mS_K^* - \mQ\mQ^*\mS_K \mS_K^*\right\|^2\right] \leq \epsilon,\\
		&\mathbb{E} \left[\left \| \vs_{N,W}^{(\ell)} - \mQ \mQ^*\vs_{N,W}^{(\ell)}\right\|_2^2\right] \leq \epsilon
		\end{align*}
		for all $l = 0,1,\ldots, K-1$. By slightly increasing $P$ to
		\[P = \left\lceil 2C_N \log \left( \frac{\left(30+15e\right)N}{\eps} \right)\right\rceil+3,\]
		we have
		\[
		\mathbb{E} \left[\cos\left(\Theta_{\mS_K,\mQ}\right)\right]  \geq \sqrt{1 - N\eps}.
		\]
		
		\item Sampled sinusoids within the band of interest are well-approximated by $\mQ$ in expectation:
		\[
		\mathbb{E} \left[\int_{-W}^W\frac{\left\|\ve_f - \mQ\mQ^*\ve_f\right\|_2^2}{\|\ve_f\|_2^2}df \right] \leq \epsilon
		\]
		with
		\[ P = \left\lceil\frac{4}{3}C_N \log \left( \frac{15\sqrt{2C_N}}{\eps} \right) + \frac{7}{3}\right\rceil.
		\]
		\item The orthonormal basis $\mQ$ can also  capture most of the energy in each pure sinusoid:
		\[
		\mathbb{E} \left[\frac{\left\|\ve_f - \mQ\mQ^*\ve_f\right\|_2^2}{\|\ve_f\|_2^2}\right] \leq \epsilon
		\]
		for all $f\in[-W,W]$ with
		\begin{align*}
		P = \max\bigg(&\left\lceil\frac{4}{3}C_N \log ( \frac{
			60\pi N\sqrt{2C_N}}{\eps^2} )+\frac{7}{3}\right\rceil,\left\lceil \frac{4}{3}C_N \log ( \frac{
			15\pi \sqrt{2C_N }}{W\eps} )+\frac{7}{3}\right\rceil\bigg).
		\end{align*}
	\end{enumerate}
	Here $\mathbb{E}$ denotes expectation with respect to the random matrix $\mOmega$.
\end{thm}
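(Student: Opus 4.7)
The strategy is to mirror the proofs of Theorems~\ref{thm:subspace relationship between S_K and Q}, \ref{thm:average repres error}, and \ref{thm:fast for sinusoid}, but replace the exact top-$R$ singular vectors of $\mA:=\overline{\mF}_{N,W}^*\mB_{N,W}$ with the randomized range finder of \cite{halkoRandomizedAlgorithm}, substituting the Halko-Martinsson-Tropp (HMT) expected-error bounds for the deterministic Eckart-Young truncation estimate. The central algebraic identity I would use throughout is
\[
\mId - \mQ\mQ^* \;=\; \overline{\mF}_{N,W}\bigl(\mId - \overline\mV\overline\mV^*\bigr)\overline{\mF}_{N,W}^*,
\]
which follows from $[\mF_{N,W},\,\overline{\mF}_{N,W}]$ being a unitary DFT matrix. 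Consequently every quantity appearing in parts (i)--(iii) can be expressed in terms of $(\mId-\overline\mV\overline\mV^*)\overline{\mF}_{N,W}^*(\cdot)$, and further reduced to $\|(\mId-\overline\mV\overline\mV^*)\mA\|$ or $\|(\mId-\overline\mV\overline\mV^*)\mA\|_F$ by invoking the DPSS eigen-relation $\mB_{N,W}\mS_K=\mS_K\mLambda_K$ (which yields $\overline{\mF}_{N,W}^*\mS_K = \mA\mS_K\mLambda_K^{-1}$) together with the identity $\varrho(\mQ) = \trace((\mId-\overline\mV\overline\mV^*)\overline{\mF}_{N,W}^*\mB_{N,W}\overline{\mF}_{N,W})$ derived in the paragraph following \eqref{eq:rho Q to MSE}. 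All of the randomness in the proof is thereby funneled into a single scalar: the expected approximation error of $\mA$ by the random sketch.

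The second ingredient is the HMT bound. For a matrix $\mA$ with singular values $\sigma_1\geq\sigma_2\geq\cdots$, a standard Gaussian test matrix $\mOmega$ with $P=k+p$ columns, and $\overline\mV$ an orthonormal basis for $\Range(\mA\mOmega)$, HMT give (for $p\geq 2$)
\[
\mathbb{E}\bigl[\|(\mId - \overline\mV\overline\mV^*)\mA\|_F^2\bigr] \;\leq\; \Bigl(1+\tfrac{k}{p-1}\Bigr)\sum_{j>k}\sigma_j^2(\mA),
\]
together with an analogous spectral-norm estimate involving an extra $(e\sqrt{k+p}/p)\sqrt{\sum_{j>k}\sigma_j^2(\mA)}$ term. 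Applying Theorem~\ref{thm:prolateFFTLR} with threshold $\eta$ to the decomposition $\mA = \overline{\mF}_{N,W}^*\mL + \overline{\mF}_{N,W}^*\mE$ immediately yields $\sigma_{k+1}(\mA)\leq \eta$ for $k=\lceil C_N\log(15/\eta)\rceil$ and, via $\|\mE\|_F\leq \sqrt{N}\|\mE\|$, the tail bound $\sum_{j>k}\sigma_j^2(\mA)\leq N\eta^2$. Plugging these into the HMT inequality and choosing $\eta$ of order $\eps$, $\eps/N$, or similar (depending on the part) with $p$ also of order $C_N\log(1/\eps)$ so that $k/(p-1)$ collapses to a universal constant, produces the explicit values of $P$ stated in each case. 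For (i), the additional factor $1/\lambda_{N,W}^{(K-1)}$ inherited from the DPSS reduction, combined with the hypothesis $\lambda_{N,W}^{(K-1)}\geq\eps$, is what introduces the constant $30+15e$ inside the logarithm; the extra factor of $N$ in the sharper angle bound arises because $\cos\Theta_{\mS_K,\mQ}$ is controlled by $\sigma_{\min}(\mS_K^*\mQ)$, which in turn requires uniform control of $\|\vs_{N,W}^{(\ell)} - \mQ\mQ^*\vs_{N,W}^{(\ell)}\|_2^2$ across all $\ell < K$.

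For parts (ii) and (iii) I would follow the deterministic proofs of Theorems~\ref{thm:average repres error} and \ref{thm:fast for sinusoid} nearly verbatim, prepending $\mathbb{E}$ and invoking Fubini to interchange expectation with the integral over $f\in[-W,W]$ (valid because the sketch $\mOmega$ is drawn once, independently of $f$). In particular, the deterministic Lipschitz estimate $\tfrac{d}{df}\|\ve_f - \mQ\mQ^*\ve_f\|_2^2\leq 2\pi N^2$ underlying the proof of Theorem~\ref{thm:fast for sinusoid} holds for every realization of $\overline\mV$, so the pointwise-to-integrated reduction carries over unchanged and taking expectations completes (iii). The main obstacle is the careful bookkeeping required to recover the exact ceiling expressions in the theorem statement: one must split the error budget $\eps$ between the $\sigma_{k+1}$ contribution and the oversampling-tail contribution $(e\sqrt{k+p}/p)\sqrt{\sum_{j>k}\sigma_j^2(\mA)}$, and then invert the logarithmic decay $\sigma_{k+1}(\mA)\lesssim \eta \Leftrightarrow k \gtrsim C_N\log(15/\eta)$ to solve for $P$. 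A secondary technical point is that HMT's spectral-norm theorem bounds $\mathbb{E}\|\cdot\|$ rather than $\mathbb{E}\|\cdot\|^2$; for part (i) I would upper bound $\|\cdot\|^2 \leq \|\cdot\|_F^2$ and work entirely in the Frobenius form, absorbing the mild loss into the logarithmic constant.
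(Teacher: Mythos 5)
Your overall architecture matches the paper's: funnel all the randomness into the single scalar $\mathbb{E}\bigl[\|(\mId-\overline{\mV}\,\overline{\mV}^*)\overline{\mF}_{N,W}^*\mB_{N,W}\|\bigr]$ (or its Frobenius analogue) via the HMT bounds, then feed that scalar into the deterministic Lemmas~\ref{lem:guarantee for representing DPSS}, \ref{lem:average SNR bound} and \ref{lem:uniform guarantee to pure sinusoid} exactly as in Theorems~\ref{thm:subspace relationship between S_K and Q}--\ref{thm:fast for sinusoid}. The genuine gap is in how you control the singular-value tail $\sum_{j>k}\sigma_j^2(\mA)$. Your estimate $\sum_{j>k}\sigma_j^2\le N\eta^2$ (one rank threshold from Theorem~\ref{thm:prolateFFTLR} plus $\|\mE\|_F\le\sqrt{N}\|\mE\|$) is valid but too lossy to recover the stated values of $P$: parts (ii) and (iii) carry $\sqrt{2C_N}$, not $\sqrt{N}$, inside the logarithm. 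The paper instead uses the full decay profile $\sigma_\ell\le 15e^{-\ell/C_N}$ of Lemma~\ref{lem:singular value decay of FhighB} (valid for \emph{every} $\ell$, not just at one threshold) and sums the geometric series to get $\sum_{\ell\ge R}\sigma_\ell^2\le 225\,e^{-2(R-1)/C_N}\,C_N/2$, a tail scaling with $C_N=O(\log N)$ rather than with $N$. Along your route the ``bookkeeping'' you defer is not actually closable: you would only obtain the theorem with $P$ inflated by an additive term of order $C_N\log(N/C_N)$.

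A second, smaller issue concerns part (i). Your plan to replace $\mathbb{E}\|\cdot\|$ by $\mathbb{E}\|\cdot\|_F$ and ``absorb the mild loss'' is both unnecessary and insufficient. Lemma~\ref{lem:guarantee for representing DPSS} already delivers $\|\mS_K\mS_K^*-\mQ\mQ^*\mS_K\mS_K^*\|^2\le\eta$ where $\eta$ is the \emph{first} power of the spectral residual divided by $\eps$, so taking expectations requires only $\mathbb{E}\bigl[\|(\mId-\overline{\mV}\,\overline{\mV}^*)\mA\|\bigr]$, which is precisely what the average spectral-error bound (Theorem~\ref{thm:average spectral error}) provides; no squaring issue arises. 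Moreover the constant $30+15e$ does not come from the factor $1/\lambda_{N,W}^{(K-1)}$ (that hypothesis is what produces the $\eps^2$ target); it comes from the two terms of the spectral HMT bound evaluated at $P=2R+1$: the term $(1+\sqrt{R/(P-R-1)})\sigma_R$ contributes $2\cdot 15=30$ and the term $\tfrac{e\sqrt{P}}{P-R}(\sum_{\ell\ge R}\sigma_\ell^2)^{1/2}$ contributes $15e\sqrt{C_N/(R+1)}\le 15e$. Working in the Frobenius norm would reintroduce a $\sqrt{C_N}$ factor into that constant and again fail to match the statement. The remainder of your outline --- the identity $\mId-\mQ\mQ^*=\overline{\mF}_{N,W}(\mId-\overline{\mV}\,\overline{\mV}^*)\overline{\mF}_{N,W}^*$, the reduction $\overline{\mF}_{N,W}^*\mS_K=\mA\mS_K\mLambda_{N,W}^{-1}$ behind the $1/\eps$, and carrying the deterministic Lipschitz bound of Lemma~\ref{lem:uniform guarantee to pure sinusoid} through the expectation for part (iii) --- is consistent with what the paper does.
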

\begin{remark}
	Using concentration of measure~\cite{halkoRandomizedAlgorithm}, we can argue that the results above hold for a particular sampling matrix $\mOmega$ with high probability.
\end{remark}

In summary, the ROAST offers a computationally efficient alternative to the DPSS with virtually the same approximation performance. The ROAST could therefore be considered for use in many of the applications involving DPSS's that were described earlier in this introduction. For example, in through-the-wall radar imaging using stepped-frequency SAR~\cite{AhmadQianAmin2015WallCluterDPSS,Zhu2015targetDetectDPSS}, the wall return is modeled as a sampled bandpass signal and thus the ROAST can be used to efficiently mitigate the wall return at each antenna.

\subsection{Benefits of an orthonormal basis}
\label{sec:orthapps}

For any $\delta \in (0,\tfrac{1}{2})$, fix $K$ to be such that $\lambda_{N,W}^{(K-1)}\geq\delta$. In~\cite{Karnik2016FAST}, we demonstrated a fast factorization of $\mS_K\mS_K^*$ by constructing $\mT_1 = \begin{bmatrix}\mF_{N,W} & \mD_1  \end{bmatrix}$ and $\mT_2 = \begin{bmatrix}\mF_{N,W} & \mD_2 \end{bmatrix}$ (with $\mD_1,\mD_2\in\R^{N\times r}$) such that
\begin{align}
\|\mS_K\mS_K^* - \mT_1\mT_2^*\|\leq 2\delta , \ \text{with} \ r \leq 3 C_N \log \left( \frac{15}{\delta} \right).
\label{eq:FST}\end{align}
%Both $\mT_2^* \vx$ and $\mT_1\mT_2^* \vx$ (as an approximation for $\mS_K\mS_K^*\vx$) can be efficiently computed with $O(N\log N\log \frac{1}{\epsilon})$  operations.
We utilize FST to denote the approximate projection  $\mT_1\mT_2^*$.

However, neither $\mT_1$ nor $\mT_2$ is orthonormal and in general $\|\mT_2^*\vx\| \neq \|\mT_1\mT_2^* \vx\|$ and $\|\mT_2^*\vx\| \neq \|\mT_2\mT_2^* \vx\|$. Moreover, neither $\mT_1$ nor $\mT_2$ is well conditioned (i.e., both have a large condition number). In some applications like orthogonal precoding for wireless communication~\cite{zemen2017orthogonal}, an orthonormal transform $\mQ$ is required or preferred, in order to ensure that $\left\|\mP_{\mQ}\vx\right\| = \left\|\mQ^*\vx\right\|$ or that $\mQ$ is well conditioned. We list two more stylized applications in signal processing below.

\subsubsection{Signal recovery}

Suppose $\vx\in \C^{N}$ is a sampled bandlimited signal with digital frequencies within the band $[-W,W]$ and we observe it through
\[
\vy = \mPhi \vx,
\]
where $\mPhi \in \C^{M\times N}$ ($2NW\leq M\leq N$) is the sensing matrix. Knowing that $\vx$ approximately lives in the subspace spanned by $\mS_K$, we recover $\vx$ by solving
\[
\minimize_{\valpha} \|\vy - \mPhi \mS_K \valpha\|_2^2,
\]
which is also a key part in a compressive sensing recovering algorithm for multiband analog signals~\cite{DavenportWakin2012CSDPSS} (see also~\cite{wakin2012nonuniform}). The above least-squares problem is equivalent to the following system of linear equations
\begin{align}
\mS_K^*\mPhi^*\mPhi\mS_K \valpha =  \mS_K^*\mPhi^*\vy,
\label{eq:linear system}\end{align}
which can be solved by numerical algorithms such as conjugate gradient descent (CGD)~\cite{saad2003iterative}. The computational complexity of the CGD method depends on two factors: the convergence speed, which depends on the condition number of the system $\mA:=\mS_K^*\mPhi^*\mPhi\mS_K$ and determines the number of iterations required, and the computational burden in each iteration, mainly involving the application of $\mA$ to a length-$M$ vector. Utilizing a structured sensing matrix $\mPhi$ that has a fast implementation (such as the fast Johnson-Lindenstrauss transform~\cite{ailon2009fast}), we can efficiently implement $\mA$ if we replace $\mS_K$ by the fast transform $\mT_1$ or $\mT_2$~\cite{Karnik2016FAST} or the ROAST $\mQ$ of the form~\eqref{eq:form of Q}. Unfortunately, both $\mT_1$ and $\mT_2$ have large condition number, resulting in slow convergence of the CGD method since the corresponding system $\mA$ in general also has large condition number. Thus, in this case, the orthonormal basis $\mQ$ is preferable.

\subsubsection{Line spectral estimation}

Consider a measurement vector $\vy$ consisting of a superposition of $r$ sampled exponentials:
\[
\vy = \sum_{i=1}^r \alpha_i^\star \ve_{f_i^\star},
\]
where $\{f_i^\star\}$ are the frequencies and $\{\alpha_i^\star\}$ are the corresponding coefficients.
We may attempt to recover the frequencies $\{f_1^\star,\ldots,f_r^\star\}$ by solving the following nonlinear least squares problem
\begin{align}
\{\widehat f_i,\widehat \alpha_i\} := \arg\min_{f_i,\alpha_i} \left\|\vy -  \sum_{i=1}^r \alpha_i \ve_{f_i}\right\|_2^2.
\label{eq:full prob}\end{align}
Suppose we are given a priori knowledge that the frequencies $f_i^*\in[-W,W]$ for all $i\in\{1,\ldots,r\}$. Then we can reduce the computational cost of solving by~\eqref{eq:full prob} by projecting the measurements $\vy$ onto the range space of $\mQ$ \cite{hokanson2015}:
\begin{equation}\begin{split}
\{\overline f_i,\overline\alpha_i\} &:=\arg\min_{f_i,\alpha_i} \left\|\mP_{\mQ}\left( \vy -  \sum_{i=1}^r \alpha_i \ve_{f_i}\right)\right\|_2^2 = \arg\min_{f_i,\alpha_i} \left\|\mQ^*\left( \vy -  \sum_{i=1}^r \alpha_i \ve_{f_i}\right)\right\|_2^2.
\end{split}\label{eq:projected prob}\end{equation}
It is shown in \cite{hokanson2015} that the projected problem~\eqref{eq:projected prob} has the same stationary points as the full problem~\eqref{eq:full prob} under certain conditions on the range space of $\mQ$. When applying an optimization method like Gauss-Newton, the advantage of the projected problem~\eqref{eq:projected prob} over the full problem~\eqref{eq:full prob} is that each optimization step is much cheaper since the projected Jacobian has much smaller size.

Based on this observation, for the general case where the frequencies  lie in multiple bands, \cite{hokanson2015} provides an iterative algorithm that in each iteration finds one underlying band and projects the signal onto this band, then applies Gauss-Newton to solve the projected problem. We also note that our $\mQ$ can be further reduce the computational cost in \cite{hokanson2015} since $\mQ$ can be efficiently applied to a vector, while the orthonormal basis utilized in \cite{hokanson2015} is a numerical approximation (obtained by performing PCA on a set of sinusoids) to the Slepian basis $\mS_K$.

\subsection{Comparision of ROAST and FST~\cite{Karnik2016FAST}}\label{sec:comparision}

There are some similarities and differences between ROAST and FST (i.e., $\mT_1\mT_2^*$ in \eqref{eq:FST}). With respect to the similarities, both ROAST and FST consist of two parts: the partial DFT matrix (which can be applied to a vector efficiently via the FFT) and a skinny matrix (which can also be efficiently applied to any vector with standard matrix-vector multiplication since the number of columns is $O(\log N)$). Aside from the fact that ROAST corresponds to an orthonormal basis while FST is not an exact orthogonal projection, ROAST and FST also differ in the following respects.

$(i)$ FST explicitly attempts to approximate the operator $\mS_K\mS_K^*$, while ROAST is motivated by the goal of approximating the subspace spanned by the DPSS basis vectors. To better reveal the subtle difference between these two goals, let us take a closer look at the objective function~\eqref{eq:define rho Q} corresponding to ROAST:
\begin{align*}
&\minimize_{\mQ} \sum_{\ell = 0}^{N-1}\lambda_{N,W}^{(\ell)}\left\|\vs_{N,W}^{(\ell)} -\mQ\mQ^*\vs_{N,W}^{(\ell)}\right\|_2^2= \trace\left(\mS\mLambda\mS^* - \mQ\mQ^*\mS\mLambda\mS^* \right).
\end{align*}
In this expression, note that we use the DPSS eigenvalue $\lambda_{N,W}^{(\ell)}$ to weight the energy in the DPSS vector $\vs_{N,W}^{(\ell)}$ that is not captured by $\mQ$. Such an eigenvalue-based weighting (most of the weights are either very close to 1 or 0) is not present in the FST objective $\|\mS_K\mS_K^* - \mT_1\mT_2^*\|$. To see why it may not be appropriate to approximate $\mS_K\mS_K^*$ with $\mQ\mQ^*$, we first note that for two orthogonal projectors $\mP_{\calA}$ and $\mP_\calB$, $\|\mP_\calA - \mP_\calB\| =1$ if the dimension of subspace $\calA$ does not equal the dimension of subspace $\calB$. Therefore, $\|\mS_K\mS_K^* - \mQ\mQ^*\|$ will always equal $1$ unless the number of columns in $\mQ$ is set exactly equal to $K$. %As we discuss below, the number of columns we use for $\mQ$ is actually different than the value of $K$ used for the FST. \note{Is the ROAST dimension smaller or larger than $K$? From the discussion below and comparing \eqref{eq:K} to \eqref{eq:mse ROAST}, it seems exactly equal?}
Even if we set the number of columns we use for $\mQ$ equal to $K$, let us take a closer look at what form $\|\mS_K\mS_K^* - \mQ\mQ^*\|$ would take if $\mQ$ has the form $\left[\mF_{N,W} \quad \overline{\mF}_{N,W} \mV  \right]$:
\begin{equation}\begin{split}
&\left\|\mS_K\mS_K^* - \mQ\mQ^*\right\|=\left\|\begin{bmatrix}\mF_{N,W}^*\mS_K\mS_K^*\mF_{N,W} - \mId & \mF_{N,W}^*\mS_K\mS_K^*\overline\mF_{N,W}\\ \overline\mF_{N,W}^*\mS_K\mS_K^*\mF_{N,W}  & \overline\mF_{N,W}^*\mS_K\mS_K^*\overline\mF_{N,W} - \mV\mV^*
\end{bmatrix}\right\|.
\end{split}\label{eq:SS - QQ}\end{equation}
In \eqref{eq:SS - QQ}, we see that regardless of the choice of $\mV$ (even if we could make the bottom right block equal to zero), the overall quantity $\left\|\mS_K\mS_K^* - \mQ\mQ^*\right\|$ could be still large since the other three blocks in the right hand side of \eqref{eq:SS - QQ} are not negligible (in terms of the spectral norm), though probably all of them are low-rank.

$(ii)$ Although \cite{Karnik2016FAST} focuses on approximating $\mS_K\mS_K^*$, it is also possible to derive signal approximation guarantees for the FST. However, these will be slightly weaker than those for ROAST in that we require a slightly larger size (number of columns in $\mT_1$ and $\mT_2$) for FST to have a similar approximation guarantee. In particular, using a similar approach to that used for establishing Theorem~\ref{thm:average repres error}, we have
\begin{equation}\begin{split}
&\int_{-W}^W\frac{\left\|\ve_f - \mT_1\mT_2^*\ve_f\right\|_2^2}{\|\ve_f\|_2^2} df  \leq 2\epsilon,\ \text{with} \ r = 3\max\left\{ \left\lceil C_N \log \left( \frac{15 C_N }{N\eps} \right)\right\rceil + 1, 0\right\}.
\end{split}
\label{eq:mse FST}\end{equation}
Comparing \eqref{eq:mse FST} and Theorem~\ref{thm:average repres error}, we see that FST requires a slightly larger size for a comparable approximation guarantee. In practice, we observe that ROAST requires much smaller size than FST to achieve a similar approximation quality (see Section~\ref{sec:simulations}), since ROAST  is constructed by explicitly minimizing the MSE for approximating sampled sinusoids.

$(iii)$ We note that although the approximation guarantee \eqref{eq:mse FST} for FST  and the one in Theorem~\ref{thm:average repres error} for ROAST provide similar upper bounds on the number of columns in the skinny matrices $\mD_1$, $\mD_2$ and $\mV,$ the construction of these matrices is different. For FST, given $K$ and $\delta$, we provided an explicit construction for the skinny matrices $\mD_1,\mD_2\in \R^{N\times r}$ in~\cite{Karnik2016FAST} with an upper bound on $r$ given in~\eqref{eq:FST}. For ROAST, we construct $\mV$ by computing the singular vectors of $\overline{\mF}_{N,W}^*\mB_{N,W}$ and thus there is freedom to choose the number of singular vectors to be utilized.\footnote{Though the number of columns for $\mV$ in Theorems~\ref{thm:average repres error}-\ref{thm:randomized algorithm for finding V} matches the information-theoretical bound, it is still quite conservative compared to  experimental results. The simulation results in Section~\ref{sec:simulations} indicate that ROAST with $R = 4\log N$ gives very accurate representations for most sampled sinusoids and bandlimited signals.} This freedom is useful in applications like orthogonal precoding for wireless communication~\cite{zemen2017orthogonal} where one has a requirement on the size of the transforms (and thus $R$).

We compare the speed and approximation performance of ROAST and FST using numerical experiments in \Cref{sec:simulations}.
%which can be slow. However, we also provide a randomized algorithm for computing $\mV$; the corresponding transform is denoted by ROAST-R.

%We finally note that both FST and ROAST-R have a precomputation time of $O(N\log^2N)$.

\section{Proof of main results}
\label{sec:proof}

\subsection{Supporting results}
We first establish the following definition of angle between subspaces to compare subspaces of possibly different dimensions.
\begin{definition}\label{def:subspace angle}
	Let $\calS_{\mA}$ and $\calS_{\mB}$ be the subspaces formed by the columns of the matrices $\mA$ and $\mB$ respectively. The subspace angle $\Theta_{\mA,\mB}$ between $\mathcal{S}_{\mA}$ and $\mathcal{S}_{\mB}$ is given by
	$$
	\cos(\Theta_{\mA,\mB}):=\inf_{\va\in\mathcal{S}_{\mA}, \left\|\va\right\|_2=1}\left\|\mP_{\mB}\va\right\|_2
	$$
	if $\dim(\mathcal{S}_{\mB})\geq \dim(\mathcal{S}_{\mA})$, or
	$$
	\cos(\Theta_{\mA,\mB}):=\inf_{\vb\in\mathcal{S}_{\mB}, \|\vb\|_2=1}\|\mP_{\mA}\vb\|_2
	$$
	if $\dim(\mathcal{S}_{\mB})<\dim(\mathcal{S}_{\mA})$. Here $\mP_{\mB}$ (or $\mP_{\mA}$) denotes the orthogonal projection onto the column space of $\mB$ (or $\mA$).
\end{definition}
We remark that when the subspaces $\calS_{\mA}$ and $\calS_{\mB}$ have the same dimension, our definition of subspace angle coincides with the {\em subspace gap}~\cite{duggal2004subspace}, defined as $\sin(\Theta_{\mA,\mB})$. Smaller $\Theta_{\mA,\mB}$ indicates a smaller gap between $\calS_{\mA}$ and $\calS_{\mB}$. We also connect our definition of subspace angle to {\em principal angles} between two subspaces defined as follows.

\begin{definition}\label{def:principal angle}
	\cite{bjorck1973numerical} Suppose $\mA\in\R^{N\times p}$ and $\mB\in\R^{N\times q}$ are orthonormal bases for the subspaces $\calS_{\mA}\subset \R^{N\times N}$ and $\calS_{\mB}$, respectively. Suppose $p\geq q$. Then the principal angles between $\calS_{\mA}$ and $\calS_{\mB}$, $\theta_1(\mA,\mB)\leq \theta_2(\mA,\mB) \leq \cdots \le \theta_q(\mA,\mB)$, are defined as
	\begin{align*}
	\cos\left(\theta_i(\mA,\mB)\right) = \sigma_i(\mA^*\mB)
	\end{align*}
	for all $i\in\left\{1,2,\ldots,q\right\}$, where $\sigma_i(\cdot)$ denotes the $i$-th largest singular value.
\end{definition}
We note that the subspace angle $\Theta_{\mA,\mB}$ is equivalent to the largest principal angle $\theta_q(\mA,\mB)$. To see this, we rewrite the smallest singular value:
\begin{align*}
&\cos\left(\theta_q(\mA,\mB)\right) = \sigma_q(\mA^*\mB) = \inf_{\|\valpha\|_2 =1}\left\|\mA^*\mB \valpha\right\|_2= \inf_{\vb\in\calS_{\mB}\|\vb\|_2 =1,}\left\|\mA^*\vb\right\|_2  = \inf_{\vb\in\calS_{\mB}\|\vb\|_2 =1,}\left\|\mP_{\mA}\vb\right\|_2,
\end{align*}
where the last inequality follows because by assumption $\mA$ is an orthonormal basis for $\calS_{\mA}$. Thus, our definition of subspace angle captures the largest possible principal angle between two subspaces.

Before moving on to prove the main result, we present several results which will also be useful in the remaining proofs. We start with the following result, a variant of Von Neumann's trace inquality~\cite{mirsky1975trace}.

\begin{lem}\label{lem:Von Neumann}\cite{mirsky1975trace}
	For any $M\times N$ (suppose $M\leq N$) matrices $\mA$ and $\mB$ with singular values $\alpha_0\geq \alpha_1 \geq \cdots \geq \alpha_{M-1} \geq 0$ and $\beta_0\geq \beta_1 \geq \cdots \geq \beta_{M-1}\geq 0$, we have
	\[
	\left|\trace(\mA \mB^*) \right| \leq \sum_{m=0}^{M-1} \alpha_m\beta_m.
	\]
\end{lem}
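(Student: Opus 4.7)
The plan is to reduce $\trace(\mA\mB^*)$ to a bilinear form in the singular-value vectors of $\mA$ and $\mB$ via SVDs, and then to conclude by a Birkhoff--von~Neumann / rearrangement argument.

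First I would write the thin SVDs $\mA = \mU_\mA \mSigma_\mA \mV_\mA^*$ and $\mB = \mU_\mB \mSigma_\mB \mV_\mB^*$, where $\mU_\mA, \mU_\mB \in \C^{M\times M}$ are unitary, $\mSigma_\mA, \mSigma_\mB \in \R^{M\times M}$ are diagonal holding the singular values in decreasing order, and $\mV_\mA, \mV_\mB \in \C^{N\times M}$ have orthonormal columns. Cycling the trace yields
\[
\trace(\mA\mB^*) \;=\; \trace\!\bigl(\mSigma_\mA \,\mQ\, \mSigma_\mB \,\mP\bigr),
\]
with $\mP := \mU_\mB^* \mU_\mA$ unitary and $\mQ := \mV_\mA^* \mV_\mB$ a contraction ($\|\mQ\| \le 1$, since both $\mV_\mA, \mV_\mB$ have orthonormal columns). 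Expanding entrywise and using that $\mSigma_\mA, \mSigma_\mB$ are diagonal gives
\[
|\trace(\mA\mB^*)| \;\le\; \sum_{i,j=0}^{M-1} \alpha_i \beta_j R_{ij}, \qquad R_{ij} := |Q_{ij}|\cdot|P_{ji}|.
\]

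Second I would verify that the nonnegative matrix $\mR$ is doubly sub-stochastic. For each $i$, Cauchy--Schwarz gives $\sum_j R_{ij} \le \bigl(\sum_j |Q_{ij}|^2\bigr)^{1/2}\bigl(\sum_j |P_{ji}|^2\bigr)^{1/2} \le 1$, because the rows of $\mQ$ have norm at most one (from $\mQ\mQ^* \preceq \mId$) and the columns of $\mP$ have norm exactly one; the column sums of $\mR$ are bounded identically. By an extension of the Birkhoff--von~Neumann theorem, $\mR$ then decomposes as a convex combination (with total weight at most one) of sub-permutation matrices $\Pi_\sigma$. For each such $\sigma$ encoding a partial injection on $\{0,\dots,M-1\}$, the rearrangement inequality gives $\sum_i \alpha_i \beta_{\sigma(i)} \le \sum_{m=0}^{M-1} \alpha_m \beta_m$, since $\{\alpha_m\}$ and $\{\beta_m\}$ are nonnegative and sorted decreasingly and dropping nonnegative summands can only shrink the total. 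Averaging over the decomposition delivers the stated bound.

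The main obstacle I anticipate is cleanly invoking the sub-stochastic version of Birkhoff--von~Neumann, since the classical statement applies only to genuinely doubly stochastic matrices. I would dispatch this by the standard trick of padding $\mR$ to a $(2M)\times(2M)$ doubly stochastic matrix $\widetilde{\mR}$ via slack rows and columns, applying the classical theorem to write $\widetilde{\mR}$ as a convex combination of $(2M)\times(2M)$ permutation matrices, and restricting the decomposition to the original $M\times M$ block to read off sub-permutations. Once this step is in place, the remaining arithmetic is routine.
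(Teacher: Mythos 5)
Your argument is correct, but it takes a genuinely different route from the paper. The paper's proof is a two-line reduction: it pads $\mA$ and $\mB$ with $N-M$ zero rows to obtain square $N\times N$ matrices (which preserves the nonzero singular values and the trace of the product), and then invokes the classical square-matrix form of Von Neumann's trace inequality from the cited reference as a black box. You instead prove the rectangular inequality from first principles: writing thin SVDs to reduce $\trace(\mA\mB^*)$ to $\trace(\mSigma_\mA\mQ\mSigma_\mB\mP)$ with $\mP$ unitary and $\mQ$ a contraction, checking via Cauchy--Schwarz that the matrix $R_{ij}=|Q_{ij}|\,|P_{ji}|$ is doubly sub-stochastic, decomposing it into sub-permutation matrices by padding to a $2M\times 2M$ doubly stochastic matrix and applying Birkhoff--von Neumann, and finishing with the rearrangement inequality for sorted nonnegative sequences. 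All of these steps are sound (in particular, the row and column norm bounds on $\mQ$ follow from $\mQ\mQ^*\preceq\mId$ and $\mQ^*\mQ\preceq\mId$, and extending a partial injection to a full permutation only adds nonnegative terms before rearrangement is applied). What the paper's approach buys is brevity, at the cost of relying entirely on the cited classical result; what yours buys is a self-contained proof that in effect reproves Von Neumann's inequality rather than citing it, which is more than the lemma requires here but is not wrong. If you wanted to match the paper's economy, the zero-padding reduction is the shortcut to notice: the singular values of $\left[\begin{smallmatrix}\mA\\ \mzero\end{smallmatrix}\right]$ are those of $\mA$ together with zeros, so the square case immediately implies the rectangular one.
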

\begin{proof}[Proof of Lemma~\ref{lem:Von Neumann}]
	We enlarge $\mA$ and $\mB$ into $N\times N$ matrices $\mA'$ and $\mB'$ with zero rows, i.e.,
	\[
	\mA' = \left[\begin{array}{c}\mA \\ \mzero  \end{array}\right], \quad \mB' = \left[\begin{array}{c}\mB \\ \mzero  \end{array}\right].
	\]
	Let $\alpha_0'\geq \alpha_1'\geq\cdots \alpha_{N-1}'$ and $\beta_0'\geq \beta_1'\geq\cdots \geq \beta_{N-1}'$ be the singular values of $\mA'$ and $\mB'$, respectively. Note that $\alpha_n = \alpha_n', \beta_n = \beta_n'$ for all $n\leq M-1$ and $\alpha_n'=0, \beta_n'=0$ for all $n\geq M$. It follows from Von Neumann's trace inquality~\cite{mirsky1975trace} that
	\begin{align*}
	&\left|\trace(\mA \mB^*) \right|=\left|\trace(\mA' (\mB')^\H) \right|\leq \sum_{n=0}^{N-1}\alpha_n'\beta_n' = \sum_{m=0}^{M-1} \alpha_m\beta_m.
	\end{align*}
\end{proof}

The following result establishes an upper bound on $\varrho(\mQ)$ in terms of the singular values of $\overline{\mF}_{N,W}^* \mB_{N,W}  - \mV \mV^* \overline{\mF}_{N,W}^* \mB_{N,W}$.
\begin{lem}\label{lem:average SNR bound}
	Let $\mV\in\C^{(N-2\lfloor NW\rfloor -1)\times R}$ be an orthonormal basis with $R\leq (N-2\lfloor NW\rfloor -1)$.
	Let $\pi_0\geq \pi_1 \geq \cdots \geq \pi_{N-2\lfloor NW\rfloor -2}$ denote the singular values of $\overline{\mF}_{N,W}^* \mB_{N,W}  - \mV \mV^* \overline{\mF}_{N,W}^* \mB_{N,W}$. Then
	\[
	\varrho(\mQ) = \int_{-W}^W\left\|\ve_f - \mQ\mQ^*\ve_f\right\|_2^2df \leq \sum_{l=0}^{N-2\lfloor NW\rfloor -2} \pi_l,
	\]
	where $\mQ = \left[\mF_{N,W} \quad \overline{\mF}_{N,W} \mV  \right]$.
\end{lem}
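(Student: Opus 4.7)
The plan is to start from the trace identity $\varrho(\mQ) = \trace(\mB_{N,W}(\mId - \mQ\mQ^*))$ established in~\eqref{eq:rho Q to MSE}, exploit the block structure of $\mQ$ to isolate the role of $\mV$, and then invoke the variant of Von Neumann's trace inequality in Lemma~\ref{lem:Von Neumann}.

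First I would use the fact that $\mF_N = \left[\mF_{N,W} \; \overline{\mF}_{N,W}\right]$ is unitary, so
\[
\mF_{N,W}\mF_{N,W}^* + \overline{\mF}_{N,W}\overline{\mF}_{N,W}^* = \mId,
\]
while $\mQ\mQ^* = \mF_{N,W}\mF_{N,W}^* + \overline{\mF}_{N,W}\mV\mV^*\overline{\mF}_{N,W}^*$. Subtracting yields the clean factorization
\[
\mId - \mQ\mQ^* = \overline{\mF}_{N,W}(\mId - \mV\mV^*)\overline{\mF}_{N,W}^*.
\]
Substituting this into $\varrho(\mQ) = \trace(\mB_{N,W}(\mId - \mQ\mQ^*))$ and using the cyclic property of the trace gives
\[
\varrho(\mQ) = \trace\bigl((\mId - \mV\mV^*)\overline{\mF}_{N,W}^*\mB_{N,W}\,\overline{\mF}_{N,W}\bigr).
\]

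The key observation is that the matrix $\mA := (\mId - \mV\mV^*)\overline{\mF}_{N,W}^*\mB_{N,W}$ has singular values $\pi_0 \geq \pi_1 \geq \cdots \geq \pi_{N-2\lfloor NW\rfloor-2}$ by definition, while $\overline{\mF}_{N,W}$ has orthonormal columns, so $\overline{\mF}_{N,W}^*$ has all singular values equal to one. Writing
\[
\varrho(\mQ) = \trace\bigl(\mA\,(\overline{\mF}_{N,W}^*)^*\bigr)
\]
and applying Lemma~\ref{lem:Von Neumann} to the two matrices $\mA$ and $\overline{\mF}_{N,W}^*$ (both of size $(N-2\lfloor NW\rfloor-1) \times N$) bounds the magnitude of this trace by $\sum_l \pi_l \cdot 1$. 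Since $\varrho(\mQ)$ is a nonnegative real number (it equals a sum of squared norms), we conclude $\varrho(\mQ) \leq \sum_l \pi_l$, as claimed.

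The only step that requires any thought is step 2 (the block-algebra identity for $\mId - \mQ\mQ^*$), and there is no real obstacle: once this identity is in hand, the bound follows immediately from Von Neumann's inequality together with the fact that the partial DFT matrix has unit singular values. No analytic estimates on DPSS eigenvalues are needed here; the result is purely a linear-algebraic reduction that prepares the ground for the subsequent spectral estimates on $\overline{\mF}_{N,W}^*\mB_{N,W}$ via Theorem~\ref{thm:prolateFFTLR}.
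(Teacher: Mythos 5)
Your proposal is correct and follows essentially the same route as the paper: both reduce $\varrho(\mQ)=\trace((\mId-\mQ\mQ^*)\mB_{N,W})$ to $\trace\bigl((\mId-\mV\mV^*)\overline{\mF}_{N,W}^*\mB_{N,W}\overline{\mF}_{N,W}\bigr)$ (the paper conjugates by the unitary $\mF_N$, you write the equivalent block identity $\mId-\mQ\mQ^*=\overline{\mF}_{N,W}(\mId-\mV\mV^*)\overline{\mF}_{N,W}^*$ directly) and then apply Lemma~\ref{lem:Von Neumann} with exactly the same choice of $\mA=(\mId-\mV\mV^*)\overline{\mF}_{N,W}^*\mB_{N,W}$ and $\mB=\overline{\mF}_{N,W}^*$. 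Your explicit remark that nonnegativity of $\varrho(\mQ)$ lets the absolute-value bound become an upper bound is a small clarification the paper leaves implicit, but it is the same argument.
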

\begin{proof}[Proof of Lemma~\ref{lem:average SNR bound}]
	Recall~\eqref{eq:rho Q to MSE} that
	\begin{align*}
	\varrho(\mQ) = \int_{-W}^W\left\|\ve_f - \mQ\mQ^*\ve_f\right\|_2^2df = \trace\left((\mId - \mQ\mQ^*)\mB_{N,W} \right).
	\end{align*}
	Plugging in $\mQ = \left[\mF_{N,W} \quad \overline{\mF}_{N,W}\mV  \right]$, we have
	\begin{equation}\label{eq:rho Q to trace FBF}\begin{split}
	\varrho(\mQ)&=\trace\left( (\mId - \left[\mF_{N,W} \quad \overline{\mF}_{N,W}\mV  \right]\left[\mF_{N,W} \quad \overline{\mF}_{N,W}\mV  \right]^*)\mB_{N,W}\right)\\
	&= \trace\bigg(\mF_N^*\mB_{N,W}\mF_N - \mF_N^*\begin{bmatrix}\mF_{N,W} & \overline{\mF}_{N,W}\mV  \end{bmatrix}\begin{bmatrix}\mF_{N,W} & \overline{\mF}_{N,W}\mV  \end{bmatrix}^*\mB_{N,W}\mF_N\bigg)\\
	&= \trace\left(\overline{\mF}_{N,W}^* \mB_{N,W}\overline{\mF}_{N,W}  - \mV \mV^* \overline{\mF}_{N,W}^* \mB_{N,W}\overline{\mF}_{N,W}    \right)\\
	&\leq \sum_{l=0}^{N-2\lfloor NW\rfloor -2} \pi_{l}
	\left\|\overline{\mF}_{N,W} \right\| \leq \sum_{l=0}^{N-2\lfloor NW\rfloor -2} \pi_{l},
	\end{split}\end{equation}
	where the first inequality follows from Lemma~\ref{lem:Von Neumann} by setting $\mA = \overline{\mF}_{N,W}^* \mB_{N,W}  - \mV \mV^* \overline{\mF}_{N,W}^* \mB_{N,W}$ and $\mB = \overline{\mF}_{N,W}^*$.
\end{proof}

In order to utilize Lemma~\ref{lem:average SNR bound}, we need the distribution of the singular values of $\overline{\mF}_{N,W}^*\mB_{N,W}$. This is established by the following result, whose proof is given in Appendix~\ref{sec:prf singular value decay of FhighB}.
\begin{lem}\label{lem:singular value decay of FhighB}
	(singular value decay)
	Let $\sigma_0\geq \sigma_1 \geq \cdots \geq \sigma_{N-2\lfloor NW\rfloor -2}$ denote the singular values of $\overline{\mF}_{N,W}^*\mB_{N,W}$. Then
	\[
	\sigma_{\ell} \leq \epsilon
	\]
	when
	$\ell =  C_N \log \left( \frac{15}{\eps} \right) $ for any $\epsilon \in (0,1)$.
	%Let $C(N) = \frac{4}{\pi^2}\log(8N)+6$ .
	%Noting that $\|\overline{\mF}_{N,W}^*\mB_{N,W}\|\leq \|\overline{\mF}_{N,W}^*\| \|\mB_{N,W}\|<1$, we have
	Also
	\[\sigma_{\ell} \leq 15e^{-\frac{\ell}{C_N}}.\]
\end{lem}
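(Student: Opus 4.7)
The plan is to deduce the singular value decay of $\overline{\mF}_{N,W}^*\mB_{N,W}$ directly from the structural decomposition of $\mB_{N,W}$ provided by Theorem~\ref{thm:prolateFFTLR}. The key observation is that the columns of $\overline{\mF}_{N,W}$ are DFT basis vectors orthogonal to the columns of $\mF_{N,W}$, so $\overline{\mF}_{N,W}^*\mF_{N,W}=\mzero$ and consequently $\overline{\mF}_{N,W}^*(\mF_{N,W}\mF_{N,W}^*)=\mzero$. Multiplying the decomposition $\mB_{N,W}=\mF_{N,W}\mF_{N,W}^*+\mL+\mE$ on the left by $\overline{\mF}_{N,W}^*$ therefore eliminates the leading term and yields
\[
\overline{\mF}_{N,W}^*\mB_{N,W} \;=\; \overline{\mF}_{N,W}^*\mL \;+\; \overline{\mF}_{N,W}^*\mE.
\]

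Next, I would extract bounds on the two pieces. Since rank is non-increasing under left multiplication, $\rank(\overline{\mF}_{N,W}^*\mL)\le \rank(\mL)\le C_N\log(15/\epsilon)$. Because $\overline{\mF}_{N,W}$ has orthonormal columns (being a submatrix of the unitary DFT), $\|\overline{\mF}_{N,W}^*\|\le 1$, so $\|\overline{\mF}_{N,W}^*\mE\|\le \|\mE\|\le \epsilon$. Then I would invoke the standard singular-value perturbation inequality $\sigma_{\ell}(\mA+\mE')\le \sigma_{\ell}(\mA)+\|\mE'\|$ (a consequence of Weyl's theorem) with $\mA=\overline{\mF}_{N,W}^*\mL$ and $\mE'=\overline{\mF}_{N,W}^*\mE$: for any integer $\ell\ge C_N\log(15/\epsilon)$, the $\ell$-th singular value (0-indexed) of $\overline{\mF}_{N,W}^*\mL$ vanishes, giving $\sigma_{\ell}\le \epsilon$.

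For the exponential version $\sigma_{\ell}\le 15e^{-\ell/C_N}$, I would simply invert the relationship. Setting $\epsilon = 15e^{-\ell/C_N}$ makes $C_N\log(15/\epsilon)=\ell$, so the first bound immediately delivers the claim whenever this choice of $\epsilon$ lies in $(0,1)$, i.e., for $\ell > C_N\log 15$. For small $\ell$ with $15e^{-\ell/C_N}\ge 1$, the bound is vacuous because every singular value of $\overline{\mF}_{N,W}^*\mB_{N,W}$ is automatically at most one: $\|\overline{\mF}_{N,W}^*\mB_{N,W}\|\le \|\overline{\mF}_{N,W}^*\|\,\|\mB_{N,W}\|\le 1$ since the eigenvalues of the prolate matrix lie in $(0,1)$.

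The only subtlety I expect is bookkeeping around Theorem~\ref{thm:prolateFFTLR}: that theorem allows the decomposition $\mL+\mE$ to depend on $\epsilon$, so each application of the first bound uses a tailored low-rank/error split rather than a single fixed one. This is harmless, but I would be explicit that for each target $\epsilon$ we choose the decomposition from Theorem~\ref{thm:prolateFFTLR} corresponding to that $\epsilon$, and only then apply the rank-plus-norm argument. With this caveat addressed, both bounds follow in a few lines.
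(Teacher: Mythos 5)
Your proposal is correct and follows essentially the same route as the paper: both kill the $\mF_{N,W}\mF_{N,W}^*$ term using $\overline{\mF}_{N,W}^*\mF_{N,W}=\mzero$, split $\overline{\mF}_{N,W}^*\mB_{N,W}$ into a low-rank piece $\overline{\mF}_{N,W}^*\mL$ and a small piece $\overline{\mF}_{N,W}^*\mE$ via Theorem~\ref{thm:prolateFFTLR}, and conclude $\sigma_{\ell}\le\eps$ by a rank-plus-norm argument (the paper cites Eckart--Young--Mirsky where you cite Weyl, which is an equivalent tool here), then obtain the exponential bound by inverting $\eps=15e^{-\ell/C_N}$ with the same case split on whether this exceeds $1$, using $\|\overline{\mF}_{N,W}^*\mB_{N,W}\|<1$. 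Your explicit remark that the decomposition $\mL+\mE$ is chosen anew for each target $\eps$ is a point the paper leaves implicit, but the argument is the same.
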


Now we are well equipped to prove the main results.
\subsection{Proof of Theorem~\ref{thm:subspace relationship between S_K and Q}}
\begin{proof}[Proof of Theorem~\ref{thm:subspace relationship between S_K and Q}]
	We first provide the following results on the representation guarantee for the leading DPSS vectors and the subspace angle between the column spaces of $\mS_K$ and $\mQ$. The proof of Lemma~\ref{lem:guarantee for representing DPSS} is given in Appendix~\ref{sec:prf guarantee for representing DPSS}.
	\begin{lem}\label{lem:guarantee for representing DPSS}
		Let $\mV\in\C^{(N-2\lfloor NW\rfloor -1)\times R}$ be an orthonormal basis with $R\leq (N-2\lfloor NW\rfloor -1)$. For any $\eps \in (0,\tfrac{1}{2})$, fix $K$ to be such that $\lambda_{N,W}^{(K-1)}\geq\epsilon$. Let
		\[
		\eta:=\frac{  \left\| \overline{\mF}_{N,W}^*\mB_{N,W}- \mV\mV^* \overline{\mF}_{N,W}^* \mB_{N,W} \right\| }{\epsilon}
		\]
		Then
		the orthonormal basis $\mQ = \left[\mF_{N,W} \quad \overline{\mF}_{N,W} \mV  \right]$ satisfies
		\begin{align*}
		&\|\mS_K\mS_K^* - \mQ\mQ^*\mS_K \mS_K^*\|^2 \leq \eta,\\
		& \cos\left(\Theta_{\mS_K,\mQ}\right)\geq \sqrt{1 - N\eta}, \\
		&\left\| \vs_{N,W}^{(\ell)} - \mQ \mQ^*\vs_{N,W}^{(\ell)}  \right\|_2^2\leq \eta
		\end{align*}
		for all $l = 0,1,\ldots, K-1.$
	\end{lem}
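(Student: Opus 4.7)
The plan is to reduce all three inequalities to a single algebraic identity together with one operator-norm bound. The first step is to exploit the block structure of $\mQ$. Since $\mF_N = [\mF_{N,W}\;\;\overline{\mF}_{N,W}]$ is unitary, $\mF_{N,W}\mF_{N,W}^* + \overline{\mF}_{N,W}\overline{\mF}_{N,W}^* = \mId$, and substituting $\mQ\mQ^* = \mF_{N,W}\mF_{N,W}^* + \overline{\mF}_{N,W}\mV\mV^*\overline{\mF}_{N,W}^*$ yields the decomposition
\[
\mId - \mQ\mQ^* \;=\; \overline{\mF}_{N,W}(\mId - \mV\mV^*)\overline{\mF}_{N,W}^*.
\]
In particular, every residual $(\mId-\mQ\mQ^*)\vx$ passes first through $\overline{\mF}_{N,W}^*$, so the task becomes relating $\overline{\mF}_{N,W}^*\vs_{N,W}^{(\ell)}$ (or $\overline{\mF}_{N,W}^*\mS_K$) to the matrix $\overline{\mF}_{N,W}^*\mB_{N,W}$ whose projection residual $\alpha := \eta\epsilon$ is controlled by assumption.

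The bridge is the DPSS eigenvalue identity, which I use in the rewritten forms $\vs_{N,W}^{(\ell)} = (\lambda_{N,W}^{(\ell)})^{-1}\mB_{N,W}\vs_{N,W}^{(\ell)}$ and collectively $\mS_K = \mB_{N,W}\mS_K\mLambda_K^{-1}$, where $\mLambda_K = \diag(\lambda_{N,W}^{(0)},\ldots,\lambda_{N,W}^{(K-1)})$. Inserting these on the right of $\overline{\mF}_{N,W}^*$ conjures the factor $\overline{\mF}_{N,W}^*\mB_{N,W}$, at which point $(\mId-\mV\mV^*)\overline{\mF}_{N,W}^*\mB_{N,W}$ has operator norm exactly $\alpha$ by the very definition of $\eta$.

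From here each of the three bounds follows from a short chain of sub-multiplicative norm estimates together with the hypothesis $\lambda_{N,W}^{(\ell)} \ge \epsilon$ for $\ell \le K-1$. For the single-vector inequality, combining the two ingredients above gives $\|(\mId-\mQ\mQ^*)\vs_{N,W}^{(\ell)}\|_2 \le \alpha/\lambda_{N,W}^{(\ell)} \le \alpha/\epsilon = \eta$; squaring produces $\eta^2 \le \eta$ whenever $\eta \le 1$, and otherwise the lemma's bound is trivial since $\|(\mId-\mQ\mQ^*)\vs_{N,W}^{(\ell)}\|_2\le 1$. For the $\mS_K\mS_K^*$ bound, I first use $\|(\mId-\mQ\mQ^*)\mS_K\mS_K^*\| = \|(\mId-\mQ\mQ^*)\mS_K\|$ (because $\mS_K$ is an isometry), and then for each unit $\vy$ the same substitution yields $\|(\mId-\mQ\mQ^*)\mS_K\vy\|_2 \le \alpha\|\mLambda_K^{-1}\vy\|_2 \le \alpha/\epsilon = \eta$. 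For the subspace angle, Definition~\ref{def:subspace angle} gives $\cos(\Theta_{\mS_K,\mQ}) = \inf_{\|\vy\|_2 = 1}\|\mQ\mQ^*\mS_K\vy\|_2$, and Pythagoras combined with the previous line yields $\cos^2(\Theta_{\mS_K,\mQ}) = 1 - \sup_{\|\vy\|_2 = 1}\|(\mId-\mQ\mQ^*)\mS_K\vy\|_2^2 \ge 1 - \eta^2 \ge 1 - N\eta$, valid whenever $\eta \le N$ and vacuous otherwise.

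The main obstacle is identifying where to substitute $\mB_{N,W}\vs_{N,W}^{(\ell)}/\lambda_{N,W}^{(\ell)}$ for $\vs_{N,W}^{(\ell)}$; without this trick, $\overline{\mF}_{N,W}^*\vs_{N,W}^{(\ell)}$ has no direct link to the hypothesis encoded in $\eta$. Once that insertion is made, the remainder is routine bookkeeping: the unitarity of $\mF_N$ supplies the factorization of $\mId - \mQ\mQ^*$, and the DPSS eigenvalue lower bound $\lambda_{N,W}^{(\ell)} \ge \epsilon$ converts $\alpha$ into $\eta$ at the very end.
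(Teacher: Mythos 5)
Your proof is correct, and it rests on the same two ingredients as the paper's: the factorization $\mId-\mQ\mQ^* = \overline{\mF}_{N,W}(\mId-\mV\mV^*)\overline{\mF}_{N,W}^*$ and the eigenrelation $\mB_{N,W}\mS_K=\mS_K\mLambda_K$ with $\lambda_{N,W}^{(K-1)}\geq\eps$. The organization differs in a way that actually buys something. The paper first sandwiches $\|\mId-\mS_K^*\mQ\mQ^*\mS_K\|\leq\eta$ (by lower-bounding $\|\mB_{N,W}-\mQ\mQ^*\mB_{N,W}\|$ by $\eps\,\|\mId-\mS_K^*\mQ\mQ^*\mS_K\|$) and then reads the first and third claims off the diagonal entries of that matrix; for the angle it expands a unit vector in the DPSS basis and applies the $\ell_1$--$\ell_2$ inequality, which is where the factor $K\leq N$ and hence the $\sqrt{1-N\eta}$ enters. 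You instead bound the residual operator directly, $\|(\mId-\mQ\mQ^*)\mS_K\|\leq\alpha\|\mLambda_K^{-1}\|\leq\eta$, via the substitution $\mS_K=\mB_{N,W}\mS_K\mLambda_K^{-1}$; since $\mId-\mQ\mQ^*$ is a projector this is equivalent to $\|\mId-\mS_K^*\mQ\mQ^*\mS_K\|\leq\eta^2$, strictly stronger than the paper's bound. Consequently your versions of the three conclusions are sharper: $\eta^2$ (capped at $1$) in place of $\eta$ for the first and third, and $\cos^2(\Theta_{\mS_K,\mQ})\geq 1-\eta^2$ in place of $1-N\eta$, with no dimension factor. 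If propagated forward, this would remove the need to inflate $R$ from $\lceil C_N\log(15/\eps)\rceil$ to $\lceil C_N\log(15N/\eps)\rceil$ in the subspace-angle part of Theorem~\ref{thm:subspace relationship between S_K and Q}. Your case splits ($\eta\leq 1$ versus the trivial bound $\|(\mId-\mQ\mQ^*)\vx\|_2\leq\|\vx\|_2$, and $\eta\leq N$ versus vacuity of $\sqrt{1-N\eta}$) are all legitimate, and the identity $\|(\mId-\mQ\mQ^*)\mS_K\mS_K^*\|=\|(\mId-\mQ\mQ^*)\mS_K\|$ holds because $\mS_K^*$ is a co-isometry. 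One shared caveat, not a defect relative to the paper: both proofs invoke the first branch of Definition~\ref{def:subspace angle}, which presumes $\dim(\Span(\mQ))\geq K$; neither argument checks this, though it holds in the regime where the lemma is applied.
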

	
	Since $\mV$ contains the first $R$ principal eigenvectors of $\overline{\mF}_{N,W}^*\mB_{N,W}$, using Lemma~\ref{lem:singular value decay of FhighB}, we obtain
	\[
	\left\| \overline{\mF}_{N,W}^*\mB_{N,W}- \mV\mV^* \overline{\mF}_{N,W}^* \mB_{N,W} \right\| \leq 15e^{-\frac{R}{C_N}}.
	\]
	If we set $R = C_N \log \left( \frac{15}{\epsilon^2} \right)$, we have
	\[\left\| \overline{\mF}_{N,W}^*\mB_{N,W}- \mV\mV^* \overline{\mF}_{N,W}^* \mB_{N,W} \right\|\leq \eps^2.\]
	Alternatively, if one set $R = C_N \log \left( \frac{15N}{\epsilon^2} \right)$:
	\[\left\| \overline{\mF}_{N,W}^*\mB_{N,W}- \mV\mV^* \overline{\mF}_{N,W}^* \mB_{N,W} \right\| \leq \frac{\epsilon^2}{N}.\]
	
	The proof of Theorem~\ref{thm:subspace relationship between S_K and Q} completed by utilizing Lemma~\ref{lem:guarantee for representing DPSS}.
\end{proof}

\subsection{Proof of Theorem~\ref{thm:average repres error}}\label{sec: proof average repres error}
\begin{proof}[Proof of Theorem~\ref{thm:average repres error}]
	Let $\sigma_0\geq \sigma_1 \geq \cdots \geq \sigma_{N-2\lfloor NW\rfloor -2}$ denote the singular values of $\overline{\mF}_{N,W}^*\mB_{N,W}$. Since $\mV$ consists of the $R$ dominant left singular vectors of $\overline{\mF}_{N,W}^* \mB_{N,W}$, the singular values of
	$\overline{\mF}_{N,W}^* \mB_{N,W} - \mV \mV^* \overline{\mF}_{N,W}^* \mB_{N,W}$ are $\sigma_{R},\sigma_{R+1},\ldots, \sigma_{N - 2\lfloor NW\rfloor -2}$ and $R$ zeros.
	It follows from Lemma~\ref{lem:singular value decay of FhighB} that
	\begin{equation}\label{eq:sum of sig values}
	\begin{split}
	\sum_{\ell = R}^{N-2\lfloor NW\rfloor -2} \sigma_\ell \leq & \sum_{\ell = R}^{N-2\lfloor NW\rfloor -2} 15e^{-\frac{\ell}{C_N}}\\
	= & 15 \frac{e^{-\frac{R}{C_N}} (1- e^{-\frac{N-2\lfloor NW\rfloor  -R-1}{C_N}})}{1-e^{-\frac{1}{C_N}}}\\
	\leq & 15 \frac{e^{-\frac{R}{C_N}}}{1-e^{-\frac{1}{C_N}}} = 15 \frac{e^{-\frac{R-1}{C_N}}}{e^{\frac{1}{C_N}}-1}\\
	\leq & 15 e^{-\frac{R-1}{C_N}} C_N,
	\end{split}\end{equation}
	where the last line  holds because $e^{a-1}\geq a$ for all $a\geq 0$.
	
	If $C_N \log \left( \frac{15 C_N}{N\eps} \right) +1\leq 0$, which implies that
	\[
	\sum_{\ell = 0}^{N-2\lfloor NW\rfloor -2} \sigma_\ell \leq N\epsilon,
	\]
	then by setting $R=0$ and $\mQ = \mF_{N,W}$ we are guaranteed that
	\[
	\int_{-W}^W\frac{\left\|\ve_f - \mQ\mQ^*\ve_f\right\|_2^2}{\|\ve_f\|_2^2}df \le \frac{1}{N} \sum_{\ell=0}^{N-2\lfloor NW\rfloor -2} \sigma_{\ell} \le \frac{1}{N}N\eps= \epsilon.
	\]
	Otherwise, choosing $R = C_N \log \left( \frac{15 C_N}{N\eps} \right) +1$, we have
	\begin{align*}
	\sum_{\ell = R}^{N-2\lfloor NW\rfloor -2} \sigma_\ell \leq & N\eps.
	\end{align*}
	Now applying Lemma~\ref{lem:average SNR bound}, we have
	\[
	\int_{-W}^W\frac{\left\|\ve_f - \mQ\mQ^*\ve_f\right\|_2^2}{\|\ve_f\|_2^2}df \le \frac{1}{N} \sum_{\ell=R}^{N-2\lfloor NW\rfloor -2} \sigma_{\ell} \le \frac{1}{N}N\eps= \epsilon,
	\]
	where we utilize the fact that each sinusoid has energy $\|\ve_f\|_2^2 = N$. This completes the proof of Theorem~\ref{thm:average repres error}.
\end{proof}

\begin{remark}By~\eqref{eq:rho Q to trace FBF}, we have
	\begin{align*}
	\varrho(\mQ) = \trace\left(\overline{\mF}_{N,W}^* \mB_{N,W}\overline{\mF}_{N,W}  - \mV \mV^* \overline{\mF}_{N,W}^* \mB_{N,W}\overline{\mF}_{N,W}    \right).
	\end{align*}
	Directly solving
	\begin{align*}
	\minimize_{\mV \in \C^{(N-2\lfloor NW\rfloor -1)\times R} }\trace\big(&\overline{\mF}_{N,W}^* \mB_{N,W}\overline{\mF}_{N,W}  - \mV \mV^* \overline{\mF}_{N,W}^* \mB_{N,W}\overline{\mF}_{N,W}    \big),
	\end{align*}
	we obtain an alternative optimal solution $\widetilde\mV$ consisting of the first $R$ principal eigenvectors of $\overline{\mF}_{N,W}^* \mB_{N,W}\overline{\mF}_{N,W}$. The orthonormal basis $\mQ' = \left[\mF_{N,W} \quad \overline{\mF}_{N,W} \widetilde\mV  \right]$  is optimal in terms of minimizing $\varrho(\mQ)$ and also for representing all discrete-time sinusoids with a frequency $f\in[-W,W]$ in the least square sense. Similar to Theorem \ref{thm:average repres error}, we can also establish an approximation guarantee for $\widetilde\mV$. Note that
	\begin{align*}
	&\mF_N^*\left(\mB_{N,W} - \mF_{N,W}\mF_{N,W}^*\right) \mF_N= \left[\begin{array}{cc}\mF_{N,W}^*\mB_{N,W}\mF_{N,W} - \mId &  \mF_{N,W}^*\mB_{N,W}\overline{\mF}_{N,W}\\
	\overline{\mF}_{N,W}^*\mB_{N,W}\mF_{N,W} &  \overline{\mF}_{N,W}^*\mB_{N,W}\overline{\mF}_{N,W}
	\end{array}\right].
	\end{align*}
	By utilizing the result that \[
	\mB_{N,W} = \mF_{N,W} \mF_{N,W}^* + \mL + \mE,
	\]
	where
	\[
	\rank(\mL) \le C_N \log \left( \frac{15}{\eps} \right) \quad \quad \text{and} \quad \quad  \| \mE \| \le \eps,
	\]
	we can rewrite $\overline{\mF}_{N,W}^*\mB_{N,W}\overline{\mF}_{N,W} = \mL_2 + \mE_2$, where
	\[
	\mL_2: =\overline{\mF}_{N,W}^*\mL \overline{\mF}_{N,W} \quad \quad \text{and} \quad\quad  \mE_2: =\overline{\mF}_{N,W}^*\mE \overline{\mF}_{N,W}.
	\]
	Thus,
	\[
	\rank(\mL_2) \le C_N \log \left( \frac{15}{\eps} \right) \quad \quad \text{and} \quad \quad  \| \mE_2 \| \le \eps.
	\]
	It follows from the Eckart-Young-Mirsky theorem~\cite{eckart1936approximation} that
	\[
	\|\overline{\mF}_{N,W}^*\mB_{N,W}\overline{\mF}_{N,W} - \widetilde\mV\widetilde\mV^* \overline{\mF}_{N,W}^* \mB_{N,W}\overline{\mF}_{N,W}\|\leq \|\mE_2\| \leq \epsilon.
	\]
	Therefore, choosing $R = C_N \log \left( \frac{15 C_N}{N\eps} \right) +1$, with a similar argument we also have
	\begin{align*}
	&\int_{-W}^W\frac{\left\|\ve_f - \mQ\mQ^*\ve_f\right\|_2^2}{\|\ve_f\|_2^2}df \le \frac{1}{N} \trace\left(\overline{\mF}_{N,W}^* \mB_{N,W}\overline{\mF}_{N,W}  - \widetilde\mV\widetilde\mV^* \overline{\mF}_{N,W}^* \mB_{N,W}\overline{\mF}_{N,W}\right) \le \epsilon.
	\end{align*}
	We note that all other results in this paper involving $\mV$ can also be applied to $\widetilde\mV$ with similar or slightly different guarantees.
\end{remark}
\subsection{Proof of Theorem \ref{thm:fast for sinusoid}}
By Theorem~\ref{thm:average repres error}, we are guaranteed that the pure sinusoids have, on average, a small representation residual in the basis $\mQ$. Intuitively, the representation error for each pure sinusoid is also guaranteed to be small. The following result provides an upper bound on the representation error for each pure sinusoid in terms of the average representation error. Its proof is given in Appendix~\ref{sec:prf uniform guarantee to pure sinusoid}.
\begin{lem}\label{lem:uniform guarantee to pure sinusoid}
	For any $q\in\{1,2,\ldots,N\}$, suppose $\mU\in\C^{N\times q}$ is an orthonormal basis such that $\mU^*\mU = \mId$. Also suppose $W\geq \frac{1}{4\pi N}$. Then
	\begin{align*}
	\frac{\left\|\ve_f - \mU\mU^*\ve_f\right\|_2^2}{\|\ve_f\|_2^2} \leq \max\bigg(&2\sqrt{\pi}\sqrt{ \int_{-W}^W\left\|\ve_f - \mU\mU^*\ve_f\right\|_2^2df}, \frac{1}{NW}\int_{-W}^W\left\|\ve_f - \mU\mU^*\ve_f\right\|_2^2 \bigg).
	\end{align*}
\end{lem}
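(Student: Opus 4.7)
The plan is to leverage the smoothness of $g(f) := \|\ve_f - \mU\mU^*\ve_f\|_2^2$ to upgrade the integrated bound into a pointwise one. First I would establish a uniform Lipschitz estimate on $g$ via Bernstein's inequality for trigonometric polynomials. Writing $g(f) = \ve_f^* \mP \ve_f$ with $\mP := \mId - \mU\mU^*$ and expanding, $g$ is a nonnegative real trigonometric polynomial in $f$ of degree at most $N-1$; since $\mP$ is an orthogonal projection, $g(f) \leq \|\ve_f\|_2^2 = N$ for every $f$. Bernstein then yields $|g'(f)| \leq 2\pi(N-1)\|g\|_\infty \leq 2\pi N^2 =: L$.

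Let $M := \max_{f \in [-W,W]} g(f) = g(f_0)$ for some $f_0 \in [-W,W]$. The Lipschitz estimate gives $g(f) \geq M - L|f - f_0|$, so $g$ is bounded below by a ``cone'' of height $M$ and half-width $M/L$ centered at $f_0$. \emph{Regime A ($M \leq 2LW$):} if one of the one-sided cones $[f_0 - M/L, f_0]$ or $[f_0, f_0 + M/L]$ fits inside $[-W, W]$, then integrating the linear lower bound over that side gives $\int_{-W}^W g\,df \geq M^2/(2L)$. Otherwise we must have $L(W + |f_0|) < M \leq 2LW$, forcing $|f_0| < M/L - W \leq W$, and the full cone engulfs $[-W,W]$, so $\int_{-W}^W g\,df \geq 2WM - L(W^2 + f_0^2)$. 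Taking the worst case $|f_0| = M/L - W$ and setting $y := LW/M$, the desired inequality $\int g \geq M^2/(2L)$ reduces to $4y^2 - 8y + 3 \leq 0$; the quadratic has roots $y = 1/2, 3/2$, and since the sub-case constraint $LW < M \leq 2LW$ gives $y \in [1/2, 1)$, the inequality holds. Either way, $M \leq \sqrt{2L \int g} = 2N\sqrt{\pi \int g}$, yielding $M/N \leq 2\sqrt{\pi}\sqrt{\int g}$.

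\emph{Regime B ($M > 2LW$):} the full cone strictly contains $[-W,W]$, so $\int_{-W}^W g\,df \geq 2W(M - LW) > WM$ using $M > 2LW$, giving $M/N < \int g /(NW)$. Combining the two regimes yields the claimed max. Under the stated hypothesis $W \geq 1/(4\pi N)$ we have $2LW \geq N \geq M$, so Regime B is in fact vacuous and the first bound always suffices; the max is retained for robustness. The main obstacle in the plan is the truncated sub-case of Regime A, where one must verify that the partial-cone integral still dominates $M^2/(2L)$; this is exactly what the quadratic manipulation above accomplishes, exploiting the simultaneous upper bound $M \leq 2LW$ and lower bound $M > LW$ that define this sub-case.
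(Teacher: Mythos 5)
Your proof is correct and follows essentially the same route as the paper's: a uniform derivative bound $|g'(f)|\le 2\pi N^2$ for $g(f)=\left\|\ve_f-\mU\mU^*\ve_f\right\|_2^2$ (you obtain it via Bernstein's inequality for trigonometric polynomials, the paper by direct differentiation, with the same constant), followed by the same two-regime tent/triangle comparison of the pointwise value against $\int_{-W}^W g\,df$. Your explicit quadratic check $4y^2-8y+3\le 0$ in the truncated-cone sub-case rigorously establishes the inequality $\int_{-W}^W g\,df\ge M^2/(2L)$ that the paper justifies only by appeal to its figure, so nothing is missing.
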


\begin{proof}[Proof of Theorem~\ref{thm:fast for sinusoid}]
	It follows from \eqref{eq:sum of sig values} that by choosing $R = C_N \log \left( \frac{15 C_N}{\eps'} \right) +1$, we have
	\begin{align*}
	\sum_{l = R}^{N-2\lfloor NW\rfloor -1} \sigma_l \leq & \eps'.
	\end{align*}
	Utilizing Lemma~\ref{lem:average SNR bound} gives
	\[
	\int_{-W}^W\left\|\ve_f - \mQ\mQ^*\ve_f\right\|_2^2df \leq \sum_{l = R}^{N-2\lfloor NW\rfloor -1} \sigma_l\leq \epsilon'.
	\]
	The proof of Theorem~\ref{thm:fast for sinusoid} is completed by setting
	\[\eps' = \frac{\eps^2}{4\pi}, \quad R = C_N \log \left( \frac{60\pi C_N}{\eps^2} \right) +1,\]
	or
	\[\eps' = NW\eps, \quad R = C_N \log \left( \frac{15 C_N}{NW\eps} \right) +1.\]

\end{proof}

\subsection{Proof of Theorem~\ref{thm:randomized algorithm for finding V}}
We first present the following guarantees on randomized algorithms for computing orthonormal bases from~\cite{halkoRandomizedAlgorithm}.
\begin{thm}\label{thm:average Frobenius norm}
	\cite[Theorem 10.5]{halkoRandomizedAlgorithm} (Average Frobenius norm)
	Let $\mA$ be an $M\times N$ (suppose $M\leq N$) matrix with singular values $\alpha_0 \geq \alpha_1 \geq \cdots \ \alpha_{M-1}$. Choose a target rank $R\geq 2$ and an oversampling parameter $p\geq2$, where $P = R+p \leq M$. Let $\mOmega$ be an $N\times P$ standard Gaussian matrix. Let $\mP_{\mY}$ be an orthogonal projector onto the column space of the sample matrix $\mY = \mA \mOmega$. Then the expected approximation error
	\[
	\mathbb{E}\left[\left\|\mA - \mP_{\mY} \mA \right\|_F\right] \leq \left(1 + \frac{R}{p-1}  \right)^{1/2}\left(\sum_{m=R}^{M-1} \alpha_m^2 \right)^{1/2},
	\]
	where $\mathbb{E}$ denotes expectation with respect to the random matrix $\mOmega$.
\end{thm}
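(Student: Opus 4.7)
The plan is to reproduce the standard argument of Halko--Martinsson--Tropp underlying their Theorem~10.5. The structure breaks into three conceptual stages: a reduction to a partitioned model via the SVD, a deterministic Frobenius-norm error identity, and a Gaussian expectation calculation.

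I would start by taking a thin singular value decomposition $\mA = \mU \mSigma \mV^*$ and splitting the singular triple at index $R$: write $\mSigma = \diag(\mSigma_1, \mSigma_2)$ with $\mSigma_1 = \diag(\alpha_0, \ldots, \alpha_{R-1})$ and $\mSigma_2 = \diag(\alpha_R, \ldots, \alpha_{M-1})$, and partition $\mV = [\mV_1 \; \mV_2]$ accordingly. Define
\[
\mOmega_1 := \mV_1^* \mOmega \in \C^{R \times P}, \qquad \mOmega_2 := \mV_2^* \mOmega \in \C^{(M-R) \times P}.
\]
Because $[\mV_1 \; \mV_2]$ has orthonormal columns, $\mOmega_1$ and $\mOmega_2$ are mutually independent standard Gaussian matrices by orthogonal invariance. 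The hypothesis $P = R+p \leq M$ with $p \geq 2$ ensures that $\mOmega_1$ almost surely has a right inverse and that its relevant Wishart moment is finite.

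The central step is the deterministic bound: whenever $\mOmega_1$ has full row rank,
\[
\bigl\|(\mId - \mP_\mY) \mA \bigr\|_F^2 \;\leq\; \|\mSigma_2\|_F^2 + \bigl\|\mSigma_2 \mOmega_2 \mOmega_1^\pinv \bigr\|_F^2.
\]
This comes from writing $\mY = \mU \mSigma (\mV^* \mOmega)$ and observing that $\Range(\mY)$ contains the span of the top $R$ left singular vectors whenever $\mSigma_1 \mOmega_1$ has full row rank; projection onto $\Range(\mY)$ therefore dominates projection onto a carefully-chosen rank-$R$ subspace, and the residual energy splits exactly into the two terms above. Reproducing this cleanly requires the block-matrix manipulation of Halko--Martinsson--Tropp's Theorem~9.1, and this algebraic bookkeeping is the most technically involved part of the proof.

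The final step is to take expectation. Conditional on $\mOmega_1$, the block $\mOmega_2$ remains standard Gaussian and independent of $\mOmega_1$, so for any fixed matrix $\mM$ the identity $\mathbb{E}[\mOmega_2^* \mM^*\mM \mOmega_2] = \trace(\mM^*\mM)\mId$ gives $\mathbb{E}\bigl[\|\mM \mOmega_2 \mOmega_1^\pinv\|_F^2 \,\big|\, \mOmega_1\bigr] = \|\mM\|_F^2 \cdot \|\mOmega_1^\pinv\|_F^2$. Applying this with $\mM = \mSigma_2$, taking outer expectation, and invoking the inverse-Wishart identity $\mathbb{E}\bigl[\|\mOmega_1^\pinv\|_F^2\bigr] = \mathbb{E}\bigl[\trace((\mOmega_1\mOmega_1^*)^{-1})\bigr] = R/(p-1)$ (valid precisely because $p \geq 2$) yields
\[
\mathbb{E}\bigl[\|(\mId - \mP_\mY) \mA\|_F^2\bigr] \;\leq\; \Bigl(1 + \tfrac{R}{p-1}\Bigr) \sum_{m=R}^{M-1} \alpha_m^2.
\]
A final application of Jensen's inequality to the concave map $t \mapsto \sqrt{t}$ converts this second-moment bound into the stated first-moment bound on $\mathbb{E}[\|\mA - \mP_\mY \mA\|_F]$. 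The main obstacle throughout is the deterministic inequality; once that is in hand, the probabilistic half is a routine Wishart-moment computation.
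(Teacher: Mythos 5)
The paper offers no proof of this statement: it is imported verbatim as \cite[Theorem 10.5]{halkoRandomizedAlgorithm}, so there is no internal argument to compare against. Your outline correctly reconstructs the standard Halko--Martinsson--Tropp proof of that result: the deterministic bound $\|(\mId-\mP_{\mY})\mA\|_F^2 \le \|\mSigma_2\|_F^2 + \|\mSigma_2\mOmega_2\mOmega_1^\pinv\|_F^2$ (their Theorem~9.1), the conditional Gaussian identity $\mathbb{E}\bigl[\|\mSigma_2\mOmega_2\mOmega_1^\pinv\|_F^2\,\big|\,\mOmega_1\bigr]=\|\mSigma_2\|_F^2\|\mOmega_1^\pinv\|_F^2$, the inverse-Wishart moment $\mathbb{E}\|\mOmega_1^\pinv\|_F^2=R/(p-1)$ (which is where $p\ge 2$ is used), and Jensen to pass from the second moment to the first. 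The only sentence to repair is the claim that $\Range(\mY)$ contains the span of the top $R$ left singular vectors when $\mOmega_1$ has full row rank---it does not; rather $\Range(\mY)\supseteq\Range(\mY\mOmega_1^\pinv\mSigma_1^{-1})$, a rank-$R$ subspace that is only a perturbation of that span, and this discrepancy is precisely the source of the cross term $\|\mSigma_2\mOmega_2\mOmega_1^\pinv\|_F^2$; since your very next clause states the correct version, this is a wording slip rather than a gap.
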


\begin{thm}\label{thm:average spectral error}\cite[Theorem10.6]{halkoRandomizedAlgorithm} (Average spectral error)
	Under the setup of Theorem~\ref{thm:average Frobenius norm},
	\[
	\mathbb{E}\left[\left\|\mA - \mP_{\mY} \mA \right\|\right] \leq \left(1 + \sqrt{\frac{R}{p-1}} \right)\alpha_{R} + \frac{e\sqrt{P}}{p}\left(\sum_{m=R}^{M-1} \alpha_m^2 \right)^{1/2}.
	\]
\end{thm}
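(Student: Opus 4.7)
The plan is to combine a deterministic structural error bound with sharp moment estimates for Gaussian matrices. First I would write the SVD $\mA = \mU\mSigma\mV^*$ and split $\mV = \begin{bmatrix}\mV_1 & \mV_2\end{bmatrix}$ so that $\mV_1$ holds the leading $R$ right-singular vectors. With block singular-value matrices $\mSigma_1 = \diag(\alpha_0,\ldots,\alpha_{R-1})$ and $\mSigma_2 = \diag(\alpha_R,\ldots,\alpha_{M-1})$, I define $\mOmega_1 := \mV_1^*\mOmega$ (size $R\times P$) and $\mOmega_2 := \mV_2^*\mOmega$ (size $(M-R)\times P$). Rotational invariance of the standard Gaussian distribution makes $\mOmega_1$ and $\mOmega_2$ independent standard Gaussian matrices, and $\mOmega_1$ has full row rank almost surely since $P = R+p > R$. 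On that event one has the deterministic structural inequality
$$\|\mA - \mP_{\mY}\mA\|^2 \leq \|\mSigma_2\|^2 + \|\mSigma_2\mOmega_2\mOmega_1^{\pinv}\|^2,$$
which is obtained by bounding the true projection residual against the oracle subspace $\Range(\mU\mSigma_1\mOmega_1)$ and using the fact that $\mP_{\mY}$ minimizes the residual among projectors onto $\Range(\mY)$.

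Given that inequality, I would apply $\sqrt{a+b}\leq \sqrt{a}+\sqrt{b}$ and take expectations to obtain
$$\mathbb{E}\|\mA-\mP_{\mY}\mA\| \leq \|\mSigma_2\| + \mathbb{E}\|\mSigma_2\mOmega_2\mOmega_1^{\pinv}\|.$$
The remaining expectation I would handle by conditioning on $\mOmega_1$ and viewing $\mSigma_2\mOmega_2\mOmega_1^{\pinv}$ as a deterministic left/right rescaling $\mS\mG\mT$ of the Gaussian matrix $\mG = \mOmega_2$. The key Gaussian operator-norm inequality
$$\mathbb{E}\|\mS\mG\mT\| \leq \|\mS\|\cdot\|\mT\|_F + \|\mS\|_F \cdot \|\mT\|,$$
valid for fixed $\mS,\mT$ and standard Gaussian $\mG$, then yields
$$\mathbb{E}_{\mOmega_2}\bigl[\|\mSigma_2\mOmega_2\mOmega_1^{\pinv}\| \,\big|\, \mOmega_1\bigr] \leq \alpha_R \cdot \|\mOmega_1^{\pinv}\|_F + \|\mSigma_2\|_F\cdot\|\mOmega_1^{\pinv}\|.$$

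Finally, I would take expectation over $\mOmega_1$ and invoke two standard estimates for the pseudoinverse of an $R\times(R+p)$ standard Gaussian matrix: the identity $\mathbb{E}\|\mOmega_1^{\pinv}\|_F^2 = R/(p-1)$ (valid for $p\geq 2$), which combined with Jensen's inequality gives $\mathbb{E}\|\mOmega_1^{\pinv}\|_F \leq \sqrt{R/(p-1)}$, together with the sharper operator-norm estimate $\mathbb{E}\|\mOmega_1^{\pinv}\| \leq e\sqrt{P}/p$. Substituting and using $\|\mSigma_2\| = \alpha_R$ and $\|\mSigma_2\|_F^2 = \sum_{m=R}^{M-1}\alpha_m^2$ assembles exactly the stated bound. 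The main obstacle I anticipate is the spectral-norm estimate on $\mOmega_1^{\pinv}$: the Frobenius surrogate $\|\mOmega_1^{\pinv}\|\leq\|\mOmega_1^{\pinv}\|_F$ is too crude to deliver the clean factor $e\sqrt{P}/p$, so one must work directly with the Wishart-type density of the smallest singular value of $\mOmega_1$ and integrate $1/\sigma_{\min}$ against it, a delicate calculation in which the hypothesis $p\geq 2$ is exactly what guarantees integrability of the reciprocal tail.
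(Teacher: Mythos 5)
Your proposal is correct, and it is essentially the argument behind the cited result: the paper itself offers no proof of this theorem, simply quoting it from Halko, Martinsson, and Tropp, and your route --- the deterministic bound $\left\|\mA - \mP_{\mY}\mA\right\|^2 \leq \left\|\mSigma_2\right\|^2 + \|\mSigma_2\mOmega_2\mOmega_1^{\pinv}\|^2$ (their Theorem~9.1), followed by conditioning on $\mOmega_1$, the Gaussian estimate $\mathbb{E}\left\|\mS\mG\mT\right\| \leq \left\|\mS\right\|\left\|\mT\right\|_F + \left\|\mS\right\|_F\left\|\mT\right\|$, and the pseudoinverse moments $\mathbb{E}\|\mOmega_1^{\pinv}\|_F^2 = R/(p-1)$ and $\mathbb{E}\|\mOmega_1^{\pinv}\| \leq e\sqrt{P}/p$ --- is exactly their proof of Theorem~10.6, including your correct observations that Jensen's inequality handles the Frobenius factor and that the spectral-norm moment of $\mOmega_1^{\pinv}$ requires the density of the smallest singular value rather than the crude Frobenius surrogate.
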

\begin{proof}[Proof of Theorem~\ref{thm:randomized algorithm for finding V}]
	Let $\sigma_0\geq \sigma_1 \geq \cdots \geq \sigma_{N-2\lfloor NW\rfloor -2}$ denote the singular values of $\overline{\mF}_{N,W}^*\mB_{N,W}$. Utilizing Lemma~\ref{lem:singular value decay of FhighB}, we have
	\begin{align*}
	\sum_{l = R}^{N-2\lfloor NW\rfloor -2} \sigma_l^2 &\leq  \sum_{l = R}^{N-2\lfloor NW\rfloor -2} \left(15e^{-\frac{\ell}{C_N}}\right)^2\\
	& =  225 \frac{e^{-2\frac{R}{C_N}} (1- e^{-2\frac{N-2\lfloor NW\rfloor  -R-1}{C_N}})}{1-e^{-\frac{2}{C_N}}}\\
	& \leq 225 \frac{e^{-2\frac{R}{C_N}}}{1-e^{-\frac{2}{C_N}}} = 225 \frac{e^{-2\frac{R-1}{C_N}}}{e^{\frac{2}{C_N}}-1}\\
	& \leq  225 e^{-2\frac{R-1}{C_N}} \frac{C_N}{2}.
	\end{align*}
	Note that here $\overline\mV$ is an orthonormal basis for the column space of the sample matrix $\overline{\mF}_{N,W}^*\mB_{N,W}\mOmega$. Let $\pi_0\geq \pi_1 \geq \cdots \geq \pi_{N-2\lfloor NW\rfloor -2}$ denote the singular values of $\overline{\mF}_{N,W}^* \mB_{N,W}  - \overline\mV \overline\mV^* \overline{\mF}_{N,W}^* \mB_{N,W}$.
	
	Show $(i)$:
	Utilizing Theorem~\ref{thm:average spectral error}, we have
	\begin{align*}
	&\mathbb{E}\left[\left\|\overline{\mF}_{N,W}^* \mB_{N,W} - \overline\mV \overline\mV^* \overline{\mF}_{N,W}^* \mB_{N,W}\right\|\right]\\
	& \leq \left(1 + \sqrt{\frac{R}{P-R-1}} \right)\sigma_{R} + \frac{e\sqrt{P}}{P-R}\left(\sum_{l=R}^{M-1} \sigma_l^2 \right)^{1/2}\\
	&\leq \left(1 + \sqrt{\frac{R}{P-R-1}} \right)15e^{-\frac{R}{C_N}}+ \frac{e\sqrt{P}}{P-R}\left(225 e^{-2\frac{R-1}{C_N}} \frac{C_N}{2} \right)^{1/2}\\
	&=  \left(1 + \sqrt{\frac{R}{P-R-1}} \right)15e^{-\frac{R}{C_N}} + 15\frac{e\sqrt{P}}{P-R} e^{-\frac{R-1}{C_N}} \sqrt{\frac{C_N}{2}}.
	\end{align*}
	Setting $R = C_N \log \left( \frac{30+15e}{\eps^2} \right)+1$ and $P = 2R+1$, we have
	\begin{align*}
	&\mathbb{E}\left[\left\|\overline{\mF}_{N,W}^* \mB_{N,W} - \overline\mV \overline\mV^* \overline{\mF}_{N,W}^* \mB_{N,W}\right\|\right]\leq 30\frac{\epsilon^2}{30+15e} + 15e\sqrt{\frac{C_N}{R+1}}\frac{\epsilon^2}{30+15e}\leq \epsilon^2
	\end{align*}
	since $C_N\leq R$ for any $\epsilon^2 \in(0,1)$. It follows from Lemma~\ref{lem:guarantee for representing DPSS} that
	\begin{align*}
	&\mathbb{E} \left[\|\mS_K\mS_K^* - \mQ\mQ^*\mS_K \mS_K^*\|^2\right] \leq \frac{\mathbb{E}\left[\left\|\overline{\mF}_{N,W}^* \mB_{N,W} - \overline\mV \overline\mV^* \overline{\mF}_{N,W}^* \mB_{N,W}\right\|\right]}{\epsilon}\leq \epsilon,\ \mathbb{E} \left[\| \vs_{N,W}^{(\ell)} - \mQ \mQ^*\vs_{N,W}^{(\ell)}  \|^2\right]\leq  \epsilon
	\end{align*}
	for all $l = 0,1,\ldots, K-1$. Alternatively, setting $R = C_N \log \left( \frac{\left(30+15e\right)N}{\eps^2} \right)+1$ and $P = 2R+1$, we have
	\begin{align*}
	\mathbb{E}\left[\left\|\overline{\mF}_{N,W}^* \mB_{N,W} - \overline\mV \overline\mV^* \overline{\mF}_{N,W}^* \mB_{N,W}\right\|\right] \leq \frac{\epsilon^2}{N}.
	\end{align*}
	Thus applying Lemma~\ref{lem:guarantee for representing DPSS} gives
	\begin{align*}
	&\mathbb{E}\left[\cos\left(\Theta_{\mS_K,\mQ}\right)\right]\geq \sqrt{1 - N\frac{\mathbb{E}\left[\left\|\overline{\mF}_{N,W}^* \mB_{N,W} - \overline\mV \overline\mV^* \overline{\mF}_{N,W}^* \mB_{N,W}\right\|\right]}{\epsilon}}\geq \sqrt{1-\eps}.
	\end{align*}
	
	Show $(ii)$: Set $p = \frac{R}{3} + 1$, i.e., $P = \frac{4}{3}R+1$. It follows from Theorem~\ref{thm:average Frobenius norm} that
	\begin{align*}
	\mathbb{E}\left\|\overline{\mF}_{N,W}^* \mB_{N,W} - \overline\mV \overline\mV^* \overline{\mF}_{N,W}^* \mB_{N,W}\right\|_F& \leq  \left(1 + \frac{R}{p-1}  \right)^{1/2}\left(\sum_{l=R}^{N-2\lfloor NW\rfloor-2} \sigma_l^2 \right)^{1/2}\\
	&\leq 2\sqrt{225 e^{-2\frac{R-1}{C_N}} \frac{C_N}{2}}=  15 e^{-\frac{R-1}{C_N}} \sqrt{2C_N}.
	\end{align*}
	By applying Lemma~\ref{lem:average SNR bound} and utilizing the inequality between the Frobenius norm and the nuclear norm, we have
	\begin{equation}\label{eq:average snr for randomized algorithm}\begin{split}
	&\mathbb{E}\int_{-W}^W\frac{\left\|\ve_f - \mQ\mQ^*\ve_f\right\|_2^2}{\|\ve_f\|_2^2}df = \frac{1}{N}\mathbb{E} \sum_{m=0}^{N-2\lfloor NW\rfloor-2}\pi_m \\
	&\leq \frac{1}{N} N\mathbb{E}\left\|\overline{\mF}_{N,W}^* \mB_{N,W} - \overline\mV \overline\mV^* \overline{\mF}_{N,W}^* \mB_{N,W}\right\|_F\\ &\leq 15 e^{-\frac{R-1}{C_N}} \sqrt{2C_N}.
	\end{split}\end{equation}
	Setting $R = C_N \log \left( \frac{15\sqrt{2C_N}}{\eps} \right)+1$, we obtain
	\[
	\mathbb{E}\int_{-W}^W\frac{\left\|\ve_f - \mQ\mQ^*\ve_f\right\|_2^2}{\|\ve_f\|_2^2}df \leq \epsilon.
	\]

	Show $(iii)$: Set $p = \frac{R}{3} + 1$, i.e., $P = \frac{4}{3}R+1$. From~\eqref{eq:average snr for randomized algorithm}, it follows that
	\[
	\mathbb{E}\left[\int_{-W}^W\left\|\ve_f - \mQ\mQ^*\ve_f\right\|_2^2\right]
	\le 15N e^{-\frac{R-1}{C_N}} \sqrt{2C_N}.
	\]
	Utilizing Lemma~\ref{lem:uniform guarantee to pure sinusoid}, we have
	\begin{align*}
	\mathbb{E}\left[\frac{\left\|\ve_f - \mQ\mQ^*\ve_f\right\|_2^2}{\|\ve_f\|_2^2}\right]
	&\leq  \max\bigg(\mathbb{E}\bigg[ 2\sqrt{\pi}\sqrt{ \int_{-W}^W\left\|\ve_f - \mQ\mQ^*\ve_f\right\|_2^2df}\bigg],\mathbb{E}\bigg[ NW\int_{-W}^W\left\|\ve_f - \mQ\mQ^*\ve_f\right\|_2^2\bigg]\bigg)\\
	&\leq  \max\left(2\sqrt{15\pi N \sqrt{2C_N}}e^{-\frac{R-1}{2C_N}}, \frac{15N e^{-\frac{R-1}{C_N}} \sqrt{2C_N}}{NW} \right).
	\end{align*}
	Setting
	\begin{align*}
	R = \max\bigg(& C_N \log \left( \frac{
		60\pi N\sqrt{2C_N }}{\eps^2} \right)+1,C_N \log \left( \frac{
		15\pi \sqrt{2C_N }}{W\eps} \right)+1 \bigg)
	\end{align*}
	yields
	\[
	\mathbb{E}\left[\frac{\left\|\ve_f - \mQ\mQ^*\ve_f\right\|_2^2}{\|\ve_f\|_2^2}\right] \leq \epsilon.
	\]
\end{proof}

\section{Simulations}\label{sec:simulations}
In this section, we present some experiments to illustrate the effectiveness of our proposed ROAST and ROAST-R (which is short for ROAST with a Randomized algorithm for computing $\mV$---see Section~\ref{sec:roast2}). Throughout this section, we use $R$ (which is typically $O(\log(N))$) to denote the the dimensionality of $\mV$ for ROAST. For ROAST-R, we set $P$, the dimensionality of $\overline\mV$, as $P=R$ here.

For comparison, we also compute the projection onto the column space of $\mF_{N,W+\frac{R}{2N}}$ which is the $N\times (2\lfloor  NW\rfloor+1 + R)$ DFT matrix with frequencies in $[-W-\frac{R}{2N},W+\frac{R}{2N}]$. Such a projection is simply denoted by Sub-DFT.  Note that the dimension of the column space of $\mF_{N,W+\frac{R}{2N}}$ is $2\lfloor  NW\rfloor+1 + R$ and is equal to the dimension of the column space of ${\mQ}$  in ROAST and ROAST-R. We also compare with DPSS since it provides the gold standard in approximation performance. Specifically, the projection onto the column space of the leading DPSS vectors $\mS_K$ is computed and denoted simply by DPSS in the legends of the figures. We also choose $K = 2\lfloor  NW\rfloor+1 + R$ so that all these subspaces have the same dimensionality.

We quantify the ability of the different projections to capture a given signal $\vx\in\C^N$ in terms of
$$\text{SNR} = 20\log_{10}\left(\frac{\|\vx\|_2}{\|\vx - \widehat {\vx}\|_2}\right)\text{dB},$$
where $\widehat \vx$ is the resulting projection of $\vx$ by the above mentioned methods.

Figure~\ref{fig:SNR Pure Sinusoid and Bandlimited}(a) shows the SNR captured by different projections for various pure sinusoids $\ve_f$. We observe that the DPSS basis, ROAST, ROAST-R and provide almost equal approximation performance for the pure sinusoids with frequencies in the band of interest. Also as guaranteed by Theorems~\ref{thm:fast for sinusoid}, \ref{thm:randomized algorithm for finding V} and \cite[Theorem 3.9]{ZhuWakin2015MDPSS}, any sinusoid in the band of interest can be well represented by the DPSS basis, ROAST, and ROAST-R.

We also generate a sampled bandlimited signal $\vx$ by adding $10^5$ complex exponentials with frequencies selected uniformly at random within the frequency band $[-W,W]$.
Figure~\ref{fig:SNR Pure Sinusoid and Bandlimited}(b) shows the ability of the different projections to capture this vector in terms of $\text{SNR}$.  Again, it can be observed that the DPSS basis, ROAST, and ROAST-R provide almost equal approximation performance for sampled bandlimited signals.

We now compare ROAST with FST (see \eqref{eq:FST}) which involves two skinny matrices $\mD_1,\mD_2\in\R^{N\times r}$ with
\begin{align*}
r \leq 3 C_N \log \left( \frac{15}{\delta} \right),
\end{align*}
where $\delta$ is the approximation accuracy and is chosen as $\delta = 10^{-5}$ unless stated otherwise. As we explained in Section~\ref{sec:comparision}, in some applications $r$ is prescribed instead of the approximation accuracy $\delta$. For these cases, we modify the FST such that $\mD_1$ and $\mD_2$ have the same number of columns as $\mV$ (i.e., $r = R$). The corresponding transform is denoted by  FST-FR (shorted for FST with Fixed Rank)\footnote{We note that the code for this transform is not optimized. For FST-FR, we set $K = 2\lfloor  NW\rfloor+1 + \lfloor\frac{1}{4}R\rfloor$ as we require that $\mD_1$ and $\mD_2$ have the same number of columns as $\mV$.}.

We compare the size, speed, and approximation performance the six projection methods. In these experiments, we fix $R = \lfloor 3\log(N)\rfloor$ and $\delta = 10^{-5}$.
Figures~\ref{fig:vary N}(a) and (b) respectively plot SNR as a function of dimension $N$ and  the relationship between the run time and $N$ for the six projection methods.  As observed, the DPSS has the best approximation performance as guaranteed by \eqref{eq:mse with trace} and \eqref{eq:redidual by DPSS}, but
the running time of DPSS has a quadratic increase. FST, FST-FR, ROAST and ROAST-R\footnote{FST-FR, ROAST, and ROAST-R are expected to have the same running time since these three transforms have the same dimensionality and form.} are nearly as fast as the DFT, but with much better approximation performance (except FST-FR which only has slightly better approximation quality than the DFT). Figure~\ref{fig:vary N}(c) shows the precomputation time needed for the five projection methods. For the DPSS basis, the first $K$ DPSS vectors are precomputed with the Matlab command {\tt dpss} (which actually computes the eigenvectors of a tridiagonal matrix with computational complexity of $O(N^2)$).  As can be seen in Figure~\ref{fig:vary N}(c), the precomputation time required by the DPSS grows roughly quadratically with $N$, while the precomputation time required by other fast transforms grows just faster than linearly in $N$. Figure~\ref{fig:vary N}(d) compares the value of $r$ (the number of columns of $\mD_1$ and $\mD_2$ for FST) and $R$ (the number of columns of the skinny matrices in ROAST, ROAST-R, and FST-FR). In a nutshell, we see that FST has a similar approximation quality, but at the expense of a larger and slower transform. On the other hand, when we fix the size of FST the same as ROAST and ROAST-R, as depicted in Figure~\ref{fig:vary N}(a), FST-FR has inferior approximation quality to ROAST and ROAST-R.

\begin{figure*}[htb!]
	\begin{minipage}{0.48\linewidth}
		\centerline{
			\includegraphics[width=2.9in]{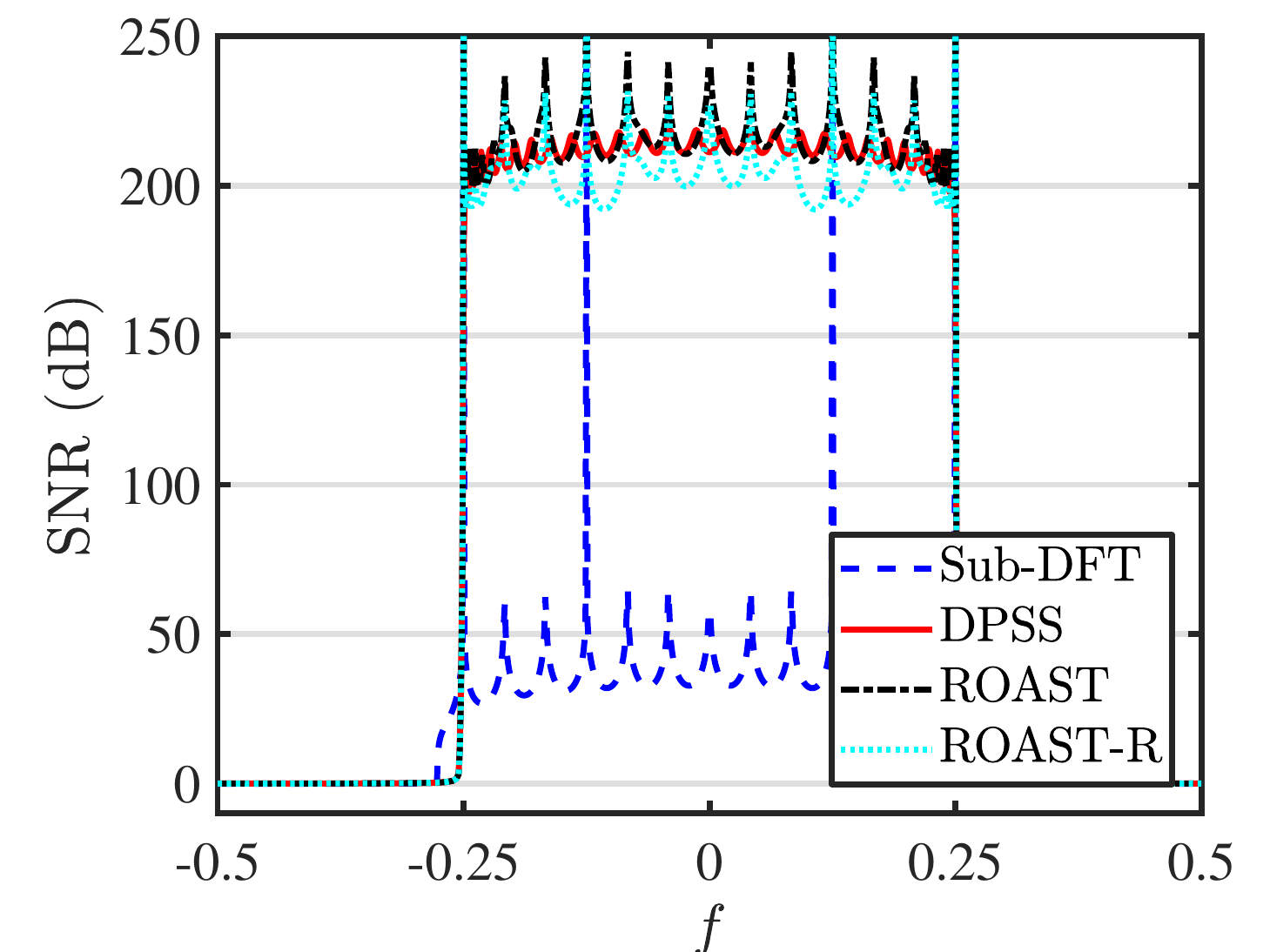}}
		\centering{(a)}
	\end{minipage}
	\hfill
	\begin{minipage}{0.48\linewidth}
		\centerline{
			\includegraphics[width=2.9in]{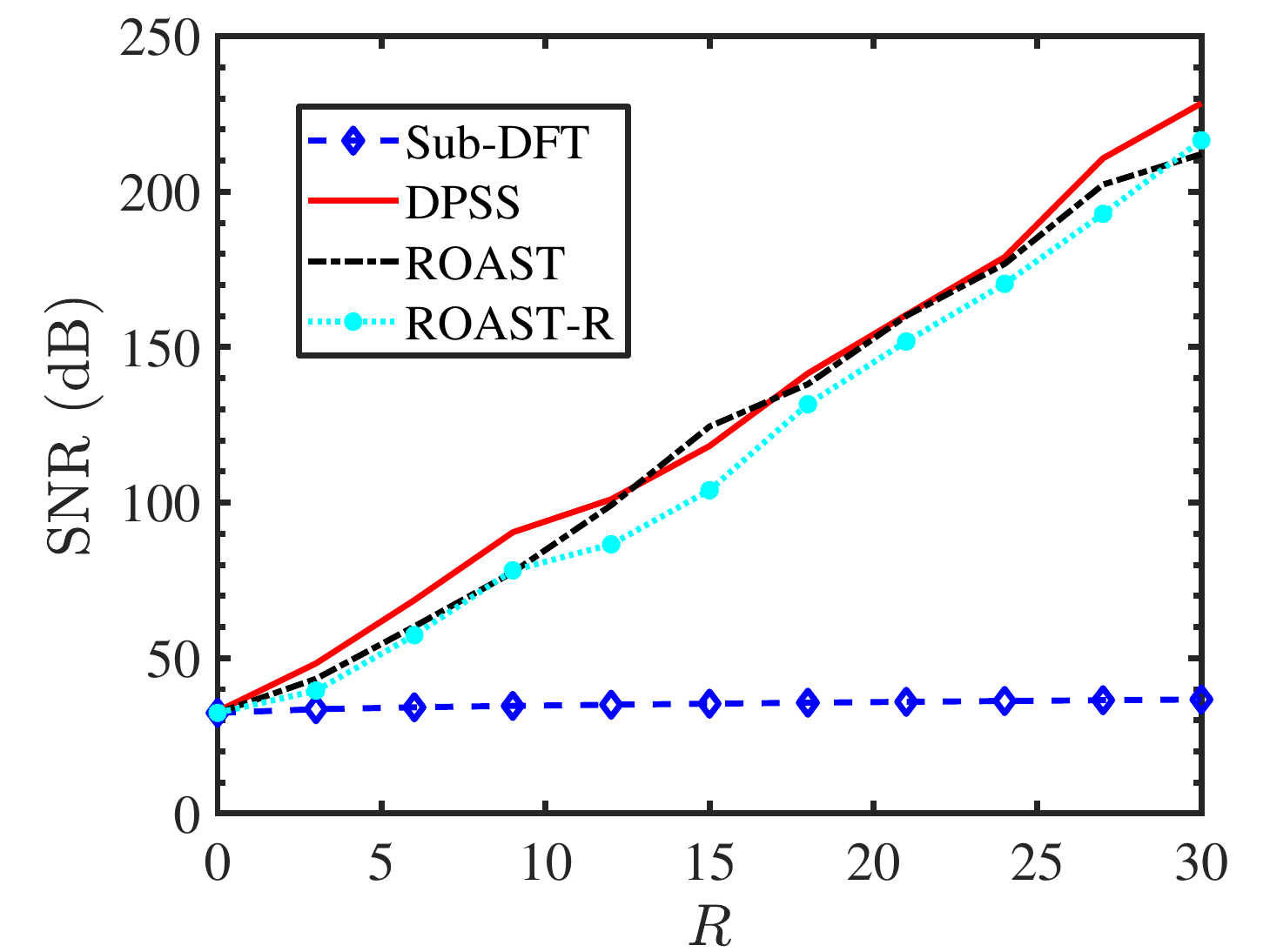}}
		\centering{(b)}
	\end{minipage}
	\caption{ (a) SNR captured by different projections  for pure sinusoids $\ve_f$ with $R = 4\log(N)$; (b) SNR captured by different projections for a sampled bandlimited signal $\vx$ with $R$ ranging from $0$ to $30\approx 5\log(N)$. Here $N = 1024$, $W = \frac{1}{4}$. }\label{fig:SNR Pure Sinusoid and Bandlimited}
\end{figure*}

%\begin{figure}[htb!]
%\centerline{\includegraphics[width=3.5in]{figs/SNRPureSinusoid.pdf}}
%\caption{ SNR captured by different projections  for pure sinusoids $\ve_f$ with $R = 4\log(N)$. Here $N = 1024$, $W = \frac{1}{4}$. }
%\label{fig:SNR Pure Sinusoid}
%\end{figure}
%
%\begin{figure}[htb!]
%\centerline{\includegraphics[width=3.5in]{figs/SNRBandlimitSignal}}
%\caption{ SNR captured by different projections  for a sampled bandlimited signal $\vx$ with $R$ ranging from $0$ to $30\approx 5\log(N)$. Here $N = 1024$, $W = \frac{1}{4}$. }
%\label{fig:SNR Bandlimit Signal}
%\end{figure}

\begin{figure*}[htb!]
	\begin{minipage}{0.48\linewidth}
		\centerline{
			\includegraphics[width=2.9in]{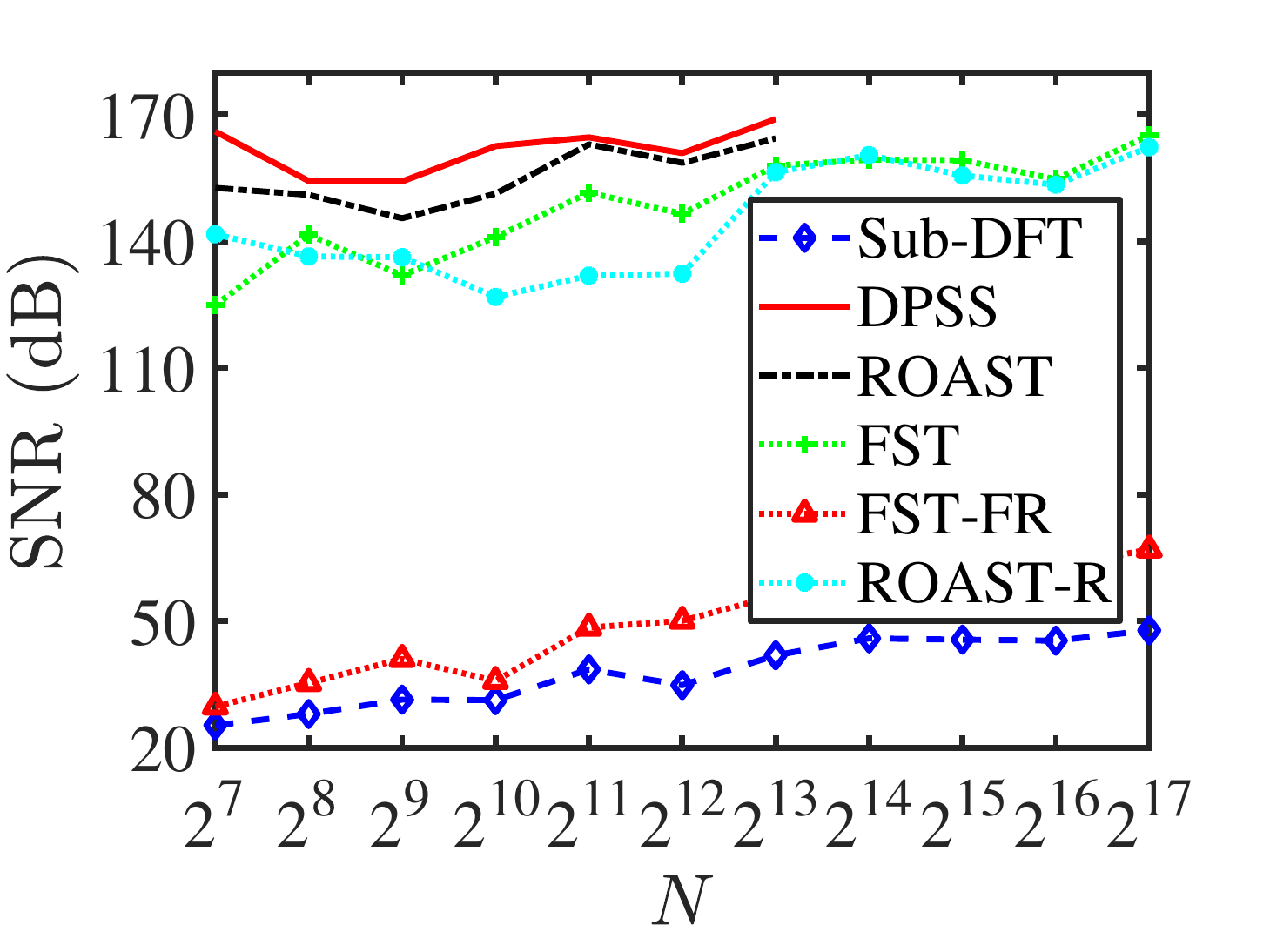}}
		\centering{(a)}
	\end{minipage}
	\hfill
	\begin{minipage}{0.48\linewidth}
		\centerline{
			\includegraphics[width=2.9in]{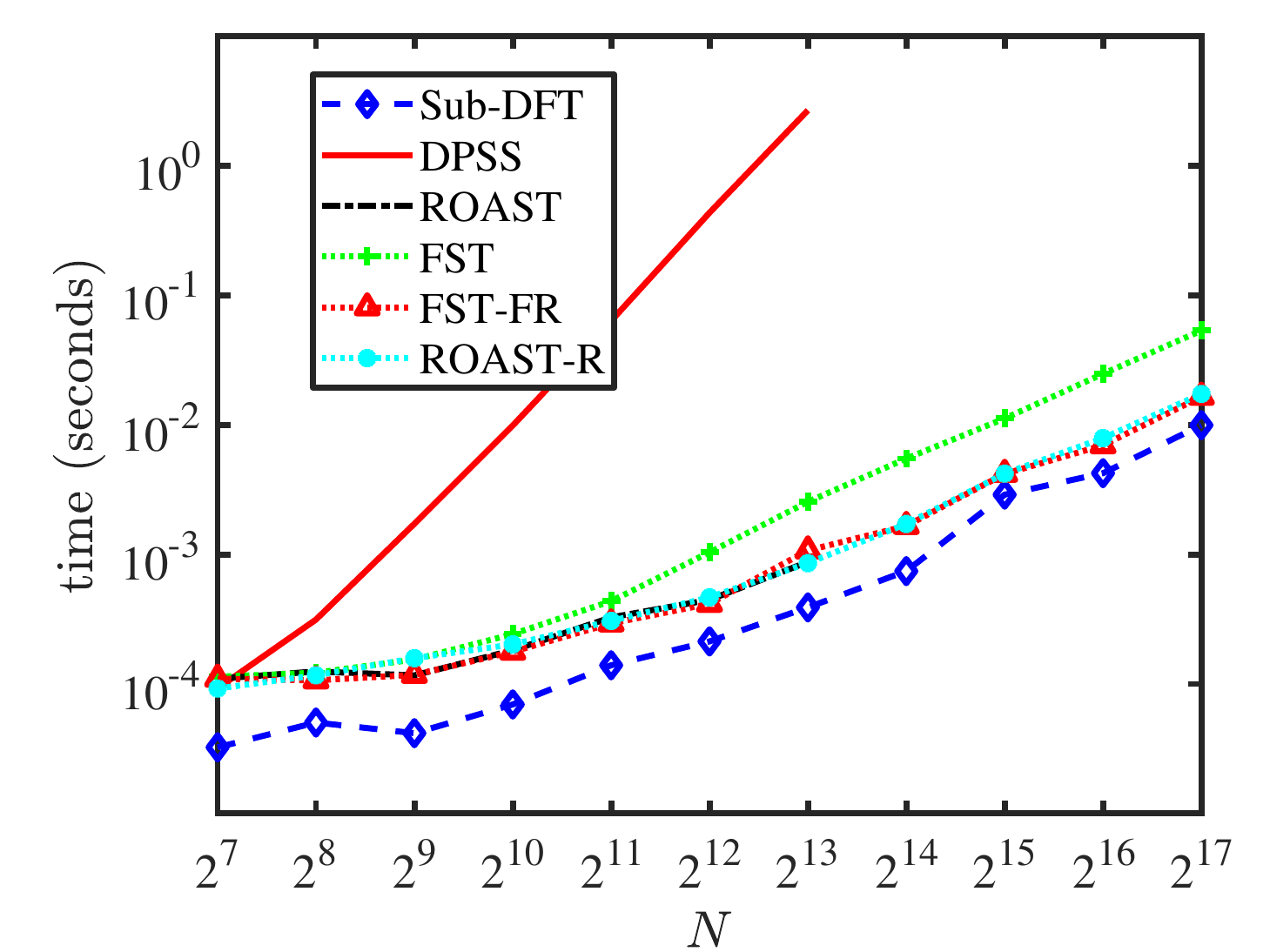}}
		\centering{(b)}
	\end{minipage}
	\begin{minipage}{0.48\linewidth}
		\centerline{
			\includegraphics[width=2.9in]{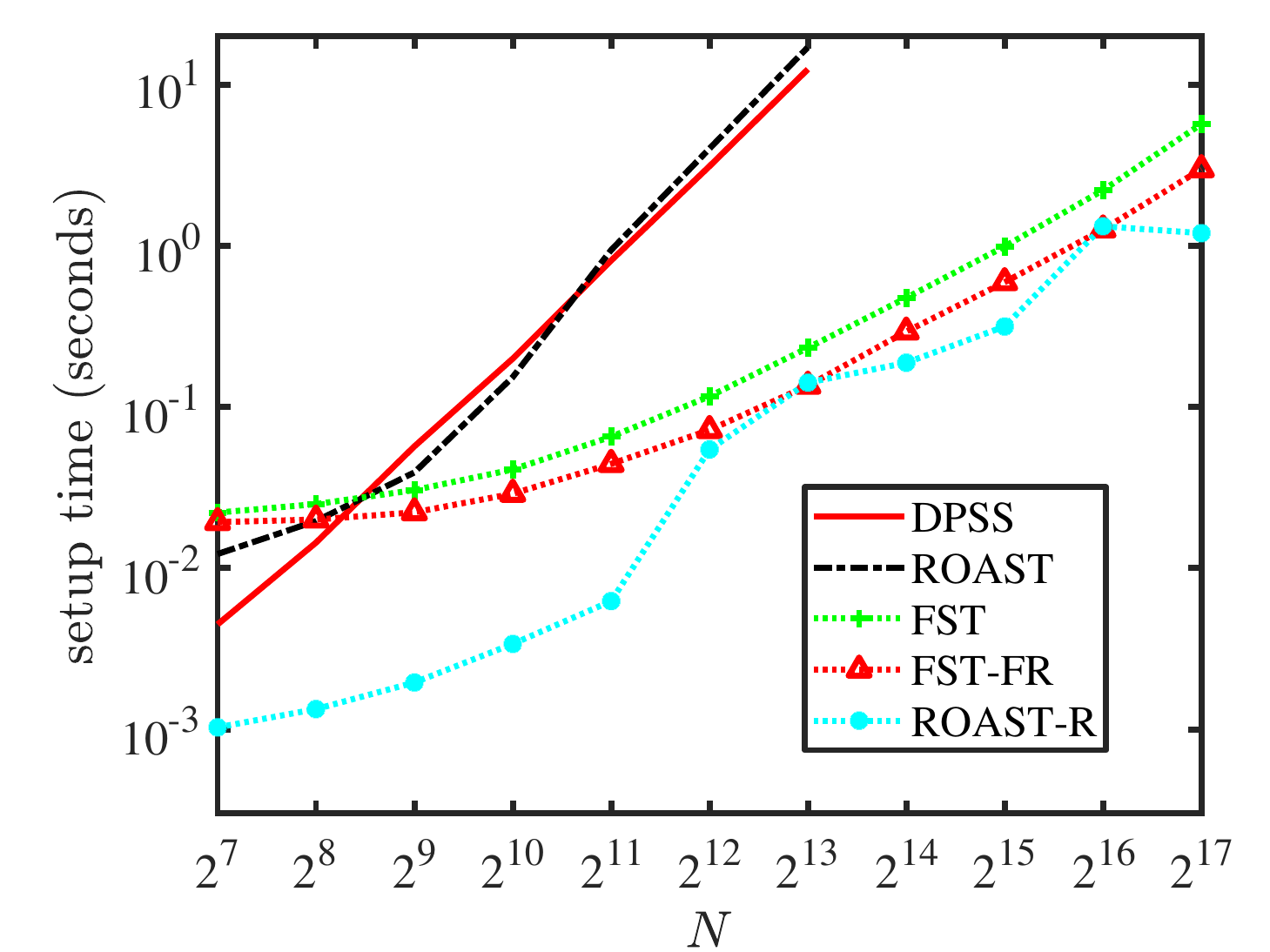}}
		\centering{(c)}
	\end{minipage}
	\hfill
	\begin{minipage}{0.48\linewidth}
		\centerline{
			\includegraphics[width=2.9in]{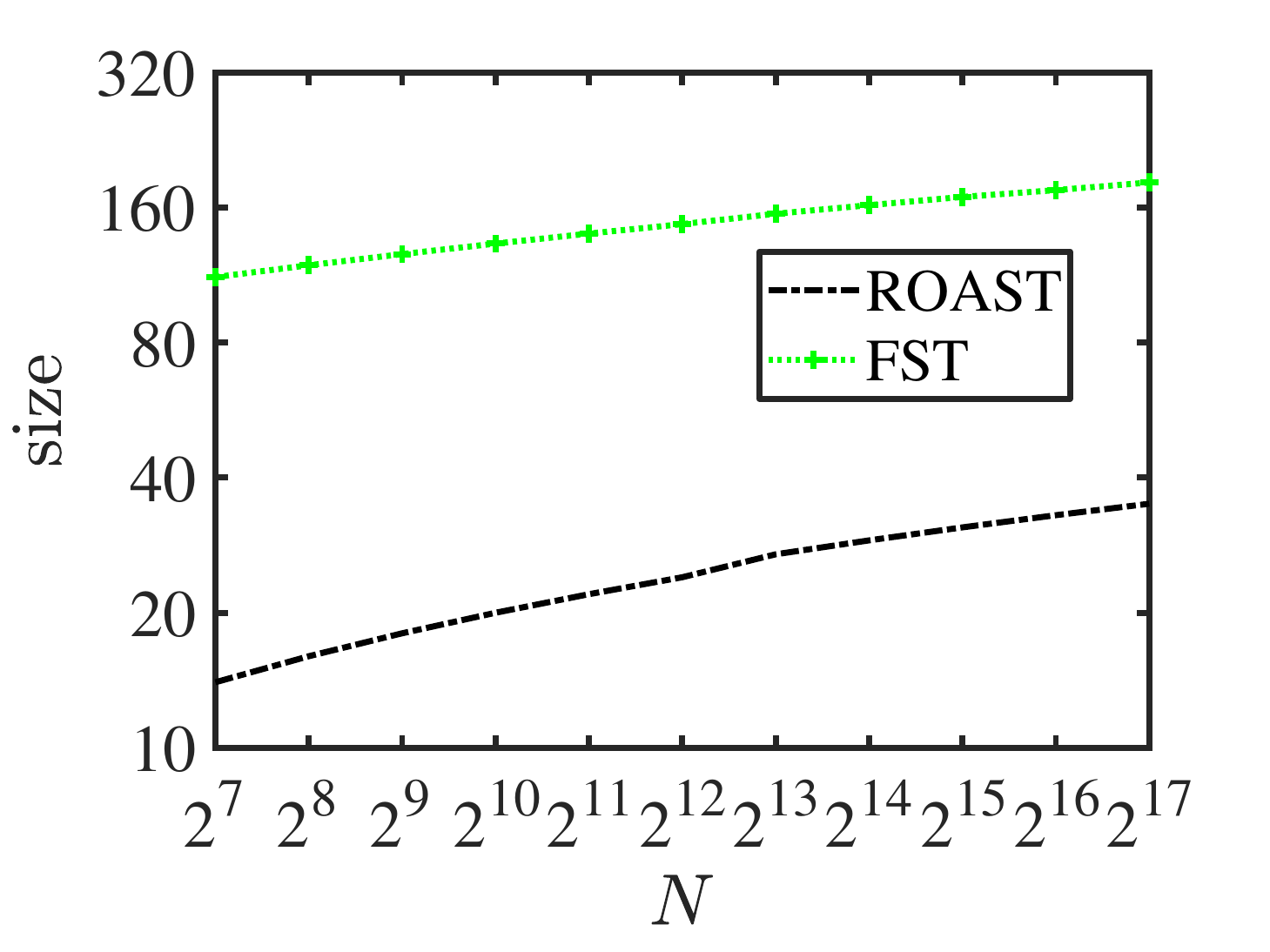}}
		\centering{(d)}
	\end{minipage}
	\caption{Comparison of different projections for a sampled bandlimited signal $\vx$: (a) SNR as a function of $N$;  (b) computation time as a function of $N$ in a logarithmic scale for both axes (here the lines for ROAST, ROAST-R and FST-FR overlapped together since these three transforms have the same dimension); (c) precomputation time; (d) number of columns of the skinny matrices (i.e., $R$ for ROAST and $r$ for FST). In all plots, $W = \frac{1}{4}$, $R = \lfloor 3\log(N)\rfloor$ for Sub-DFT, DPSS, ROAST, ROAST-R and FST, and $\delta = 10^{-5}$ for FST-FR. We omit some tests for DPSS and ROAST when $N$ is large due to computational limitations.
	}  \label{fig:vary N}
\end{figure*}

\appendix
\section{Proof of Lemma~\ref{lem:singular value decay of FhighB}}
\label{sec:prf singular value decay of FhighB}
\begin{proof}[Proof of Lemma~\ref{lem:singular value decay of FhighB}]
	Note that
	\[
	\mF_N^*\left(\mB_{N,W} - \mF_{N,W}\mF_{N,W}^*\right) = \left[\begin{array}{c}\mF_{N,W}^*\mB_{N,W}- \mF_{N,W}^*\\
	\overline{\mF}_{N,W}^*\mB_{N,W}
	\end{array}\right].
	\]
	By utilizing the result that \[
	\mB_{N,W} = \mF_{N,W} \mF_{N,W}^* + \mL + \mE,
	\]
	where
	\[
	\rank(\mL) \le C_N \log \left( \frac{15}{\eps} \right) \quad \quad \text{and} \quad \quad  \| \mE \| \le \eps,
	\]
	we can rewrite $\overline{\mF}_{N,W}^*\mB_{N,W} = \mL_1 + \mE_1$, where
	\[
	\mL_1: =\overline{\mF}_{N,W}^*\mL  \quad \quad \text{and} \quad\quad  \mE_1: =\overline{\mF}_{N,W}^*\mE .
	\]
	Thus,
	\[
	\rank(\mL_1) \le C_N \log \left( \frac{15}{\eps} \right) \quad \quad \text{and} \quad \quad  \| \mE_1 \| \le \eps.
	\]
	It follows from the Eckart-Young-Mirsky theorem~\cite{eckart1936approximation} that
	\[
	\sigma_{\rank(\mL_1)}\leq \|\mE_1\| \leq \epsilon
	\]
	for any $\eps\in(0,1)$.
	Noting that $\|\overline{\mF}_{N,W}^*\mB_{N,W}\|\leq \|\overline{\mF}_{N,W}^*\| \|\mB_{N,W}\|<1$, we have
	\[\sigma_{\ell} \leq 15e^{-\frac{\ell}{C_N}}.\]
	for all $\ell=0,1,\ldots,N-2\lfloor NW\rfloor -2$. Otherwise, suppose $\sigma_{\ell} > 15e^{-\frac{\ell}{C_N}}$. If $15e^{-\frac{\ell}{C_N}}\geq 1$, then this is in contradiction to the fact that $\sigma_\ell<1$. If $15e^{-\frac{\ell}{C_N}}< 1$, let $\eps = 15e^{-\frac{\ell}{C_N}}$. Then we have a contradiction to the fact that $\sigma_{\rank(\mL_1)}\leq \epsilon$ and $\rank(\mL_1)\leq C_N \log \left( \frac{15}{\eps} \right) = \ell$.
\end{proof}

\section{Proof of Lemma~\ref{lem:guarantee for representing DPSS}}
\label{sec:prf guarantee for representing DPSS}
\begin{proof}[Proof of Lemma~\ref{lem:guarantee for representing DPSS}]
	Fix $K$ to be such that $\lambda_{N,W}^{(K-1)}>\epsilon$. Utilizing $\mB_{N,W} = \mS_{N,W} \mLambda_{N,W} \mS_{N,W}^*$, we have
	\begin{align*}
	&\|\mB_{N,W} - \mQ\mQ^*\mB_{N,W}\|\\ &=\|\mS_{N,W} \mLambda_{N,W} \mS_{N,W}^* - \mQ\mQ^*\mS_{N,W} \mLambda_{N,W} \mS_{N,W}^*\|\\
	&=\|\mLambda_{N,W} - \mS_{N,W}^*\mQ\mQ^*\mS_{N,W} \mLambda_{N,W} \|\\
	&\geq  \|\mLambda_{K} - \mS_K^*\mQ\mQ^*K \mLambda_{K} \|= \|\left( \mId -  \mS_{K}^*\mQ\mQ^*\mS_{K} \right) \mLambda_{K} \|\\
	&\geq  \left\| \mId -  \mS_{K}^*\mQ\mQ^*\mS_{K} \right\|\eps.
	\end{align*}
	On the other hand,
	\begin{align*}
	&\|\mB_{N,W} - \mQ\mQ^*\mB_{N,W}\| \\&=  \left\|\mB_{N,W} - \left[\mF_{N,W} \ \overline{\mF}_{N,W} \mV  \right] \left[\mF_{N,W} \ \overline{\mF}_{N,W} \mV  \right]^\H\mB_{N,W}\right\|\\
	%= & \left\|\mF_N^*\mB_{N,W} - \mF_N^*\left[\mF_{N,W} \quad \overline{\mF}_{N,W} \mV  \right] \left[\mF_{N,W} \quad \overline{\mF}_{N,W} \mV  \right]^\H\mB_{N,W}  \right\|\\
	&=  \left\| \overline{\mF}_{N,W}^*\mB_{N,W}- \mV\mV^* \overline{\mF}_{N,W}^* \mB_{N,W} \right\|.
	\end{align*}
	Combining the above two set of equations yields
	\[
	\left\| \mId -  \mS_{K}^*\mQ\mQ^*\mS_{K} \right\| \leq \eta = \frac{  \left\| (\mId- \mV\mV^*) \overline{\mF}_{N,W}^* \mB_{N,W} \right\| }{\epsilon}.
	\]
	Now exploit the relationship between $\mS_K\mS_K^* - \mQ\mQ^*\mS_K \mS_K^*$ and $\mId -  \mS_{K}^*\mQ\mQ^*\mS_{K}$ as follows
	\begin{align*}
	&\left\|\mS_K\mS_K^* - \mQ\mQ^*\mS_K \mS_K^*\right\|^2\\  &= \left\|\left(\mS_K\mS_K^* - \mQ\mQ^*\mS_K \mS_K^*\right)^\T\left( \mS_K\mS_K^* - \mQ\mQ^*\mS_K \mS_K^* \right)\right\| \\
	&= \left\|\mS_K\left(\mId - \mS_K^*\mQ\mQ^*\mS_K\right)\mS_K^\star \right\|\\
	&\leq  \left\|\left(\mId - \mS_K^*\mQ\mQ^*\mS_K\right) \right\|\leq  \eta.
	\end{align*}
	Then, utilizing the inequality $\left\| \mId -  \mS_{K}^*\mQ\mQ^*\mS_{K} \right\|_{\max} \leq \left\| \mId -  \mS_{K}^*\mQ\mQ^*\mS_{K} \right\|$, where $ \left\|\mId -  \mS_{K}^*\mQ\mQ^*\mS_{K} \right\|_{\max}$ is the maximum absolute entry of $\mId -  \mS_{K}^*\mQ\mQ^*\mS_{K}$, we have
	\begin{align*}
	\left|\left(\vs_{N,W}^{(\ell)}\right)^\H\mQ\mQ^*\vs_{N,W}^{(l')}\right| \leq \left\| \mId -  \mS_{K}^*\mQ\mQ^*\mS_{K} \right\|\leq \eta
	\end{align*}
	for all $l\neq l', l,l'=0,1,\ldots,K-1$, and
	\begin{align*}
	&\left\|\vs_{N,W}^{(\ell)} - \mQ \mQ^*\vs_{N,W}^{(\ell)} \right\|_2^2 = 1- \left\|\mQ^*\vs_{N,W}^{(\ell)}\right\|_2^2\\&\leq \left\| \mId -  \mS_{K}^*\mQ\mQ^*\mS_{K} \right\|\leq \eta
	\end{align*}
	for all $l=0,1,\ldots,K-1$.
	
	Let $\vs$ be an arbitrary unit vector in the subspace spanned by $\mS_K$, i.e., $\vs = \sum_{\ell = 0}^{K-1} \alpha_\ell \vs_{N,W}^{(\ell)}$ with $\|\vs\|_2 = \sum_{\ell=0}^{K-1}\alpha_\ell^2 = 1$. We have
	\begin{align*}
	&\left\|\vs - \mQ\mQ^*\vs\right\|_2 = \left\|\sum_{\ell = 0}^{K-1}\alpha_\ell\left(\vs_{N,W}^{(\ell)}-\mQ\mQ^*\vs_{N,W}^{(\ell)}\right)\right\|_2\\
	& \leq \sum_{\ell = 0}^{K-1}\left|\alpha_\ell\right|\left\|\vs_{N,W}^{(\ell)}-\mQ\mQ^*\vs_{N,W}^{(\ell)}\right\|_2\\
	&\leq \sqrt{\eta} \sum_{\ell = 0}^{K-1}\left|\alpha_\ell\right|\leq \sqrt{K\eta} \leq \sqrt{N\eta}
	\end{align*}
	where the last line follows from the inequality between the $\ell_1$-norm and the $\ell_2$-norm: $\|\va\|_1\leq \sqrt{K}\|\va\|_2$ for any $\va\in\R^K$. Thus, we obtain
	\begin{align*}
	\left\|\mQ\mQ^*\vs\right\|_2^2  = 1-  \left\|\vs - \mQ\mQ^*\vs\right\|_2^2 \geq 1-N\eta.
	\end{align*}
	Since this result holds for an arbitrary unit vector $\vs$ in the subspace spanned by $\mS_K$, we finally have
	\[
	\cos(\Theta_{\mS_K,\mQ})  \geq \sqrt{1-N\eta}.
	\]
\end{proof}

\section{Proof of Lemma~\ref{lem:uniform guarantee to pure sinusoid}}
\label{sec:prf uniform guarantee to pure sinusoid}
\begin{proof}[Proof of Lemma~\ref{lem:uniform guarantee to pure sinusoid}]
	Let $\mtx{\Pi}$ be an $N\times N$ diagonal matrix with diagonal entries $j2\pi0, j2\pi, \ldots, j2\pi(N-1)$. The derivative of $\left\|\ve_f - \mU\mU^*\ve_f\right\|_2^2$ in terms of $f$ can be computed as
	\[ \frac{d}{df}\left\|\ve_f - \mU\mU^*\ve_f\right\|_2^2 = 2\real\left(\ve_f^* \left(\mId - \mU\mU^* \right) \mtx{\Pi}\ve_f \right).\]
	We first obtain an upper bound for its derivative
	\begin{align*}
	&\left|\frac{d}{df}\left\|\ve_f - \mU\mU^*\ve_f\right\|_2^2\right| \leq  2\left|\ve_f^* \left(\mId - \mU\mU^* \right) \mtx{\Pi}\ve_f \right|\leq  2 \left|\ve_f^* \mtx{\Pi}\ve_f \right|\|\mId - \mU\mU^* \| \leq  2 \left|\ve_f^* \mtx{\Pi}\ve_f \right| \leq 2\pi N(N-1)\leq 2\pi N^2
	\end{align*}
	for all $f\in[-\frac{1}{2},\frac{1}{2}]$. Since $\left\|\ve_f - \mU\mU^*\ve_f\right\|_2^2$ is nonnegative and its derivative is bounded above, $\left\|\ve_f - \mU\mU^*\ve_f\right\|_2^2$ cannot be too large if $\int_{-W}^W\left\|\ve_f - \mU\mU^*\ve_f\right\|_2^2df$ is very small.
	
	Suppose $\frac{\left\|\ve_f - \mU\mU^*\ve_f\right\|_2^2} {2\pi N^2} \leq 2W$. As illustrated in Figure~\ref{fig:illustration of bounded derivative}, for any $f\in[-W,W]$, we can always find a triangle with area either \[\frac{\left\|\ve_f - \mU\mU^*\ve_f\right\|_2^4} {2\sup_{f\in[-W,W]}\left|\frac{d}{df}\left\|\ve_f - \mU\mU^*\ve_f\right\|_2^2\right|}
	\]
	(the area of the left and right red triangles) or \[
	\frac{\left\|\ve_f - \mU\mU^*\ve_f\right\|_2^4} {\sup_{f\in[-W,W]}\left|\frac{d}{df}\left\|\ve_f - \mU\mU^*\ve_f\right\|_2^2\right|}
	\](the area of the middle red triangle)
	that is smaller than $\int_{-W}^W\left\|\ve_f - \mU\mU^*\ve_f\right\|_2^2df$ (the area under the black curve). This is made more precise as
	\begin{equation}\begin{split}
	&\frac{\left\|\ve_f - \mU\mU^*\ve_f\right\|_2^4}{4\pi N^2}\leq  \frac{\left\|\ve_f - \mU\mU^*\ve_f\right\|_2^4}{2\sup\limits_{f\in[-W,W]}\left|\frac{d}{df}\left\|\ve_f - \mU\mU^*\ve_f\right\|_2^2\right|} \leq \int_{-W}^W\left\|\ve_f - \mU\mU^*\ve_f\right\|_2^2df
	\end{split}\label{eq:bouned derivative integration}\end{equation}
	for all $f\in[-W,W]$.  Thus, we have
	\begin{align*}
	&\frac{\left\|\ve_f - \mU\mU^*\ve_f\right\|_2^2}{\|\ve_f\|_2^2} = \frac{\left\|\ve_f - \mU\mU^*\ve_f\right\|_2^2}{N}\leq 2\sqrt{\pi}\sqrt{ \int_{-W}^W\left\|\ve_f - \mU\mU^*\ve_f\right\|_2^2df}
	\end{align*}
	for all $f\in[-W,W]$.

	On the other hand, suppose $\frac{\left\|\ve_f - \mU\mU^*\ve_f\right\|_2^2} {2\pi N^2} > 2W$.  With a similar argument, as illustrated in Figure~\ref{fig:illustration of bounded derivative2}, for any $f\in[-W,W]$, we can always find a region of area at least $W\left\|\ve_f - \mU\mU^*\ve_f\right\|_2^2 $
	(the area indicated by red dashed lines) that is smaller than $\int_{-W}^W\left\|\ve_f - \mU\mU^*\ve_f\right\|_2^2df$ (the area under the black curve). This is made more precise as
	\begin{equation}
	W\left\|\ve_f - \mU\mU^*\ve_f\right\|_2^2  \leq \int_{-W}^W\left\|\ve_f - \mU\mU^*\ve_f\right\|_2^2df
	\label{eq:bouned derivative integration2}\end{equation}
	for all $f\in[-W,W]$. Thus, we have
	\[
	\frac{\left\|\ve_f - \mU\mU^*\ve_f\right\|_2^2}{\|\ve_f\|_2^2}  \leq \frac{1}{NW} \int_{-W}^W\left\|\ve_f - \mU\mU^*\ve_f\right\|_2^2df
	\]
	for all $f\in[-W,W]$.
	
	\begin{figure}[t]
		\centering
		\includegraphics[width=2in]{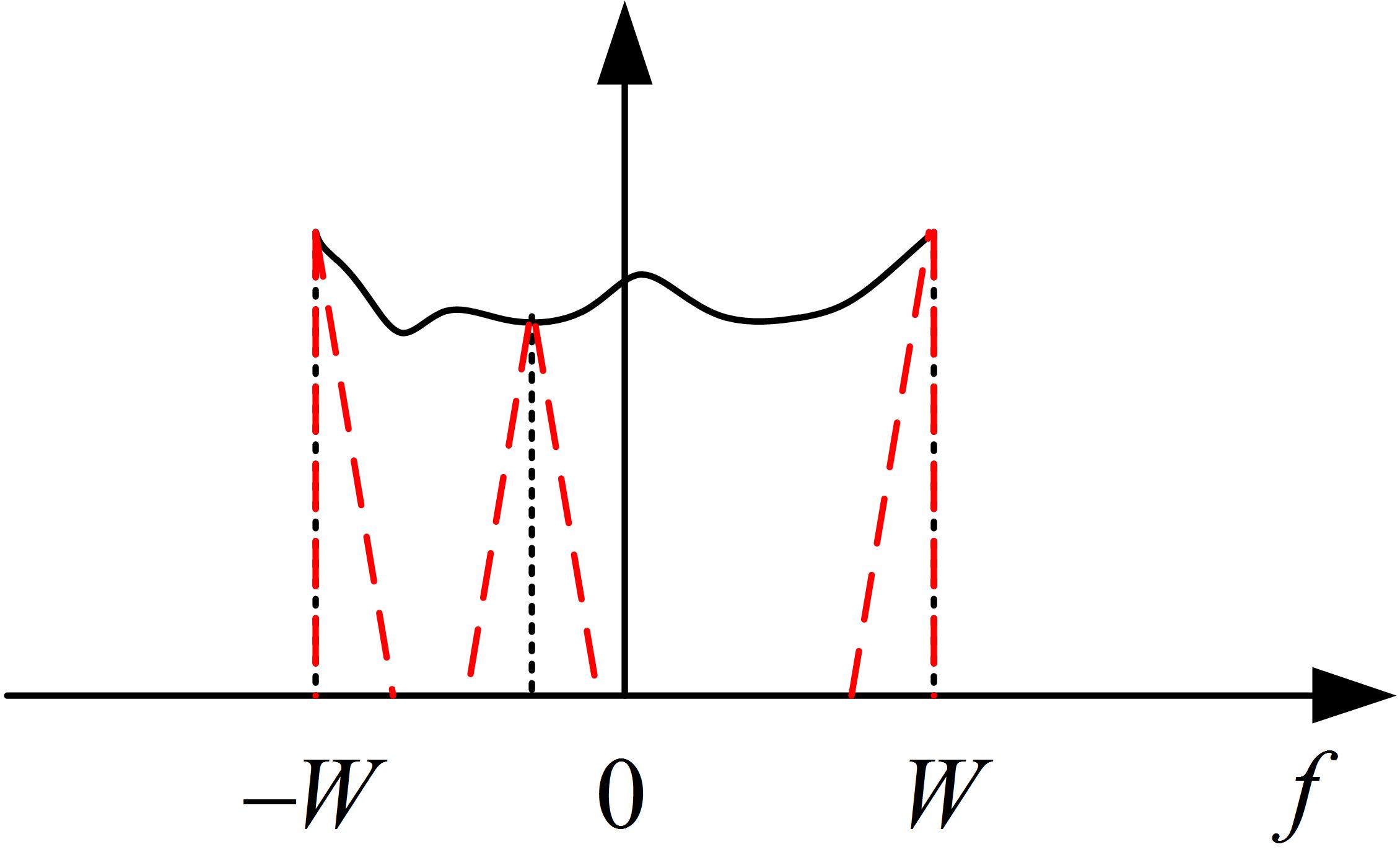}
		\caption{Illustration of \eqref{eq:bouned derivative integration}. The area below the black curve is always larger than or equal to the area of each red triangle.  }\label{fig:illustration of bounded derivative}
	\end{figure}

	\begin{figure}[t]
		\centering
		\includegraphics[width=1.2in]{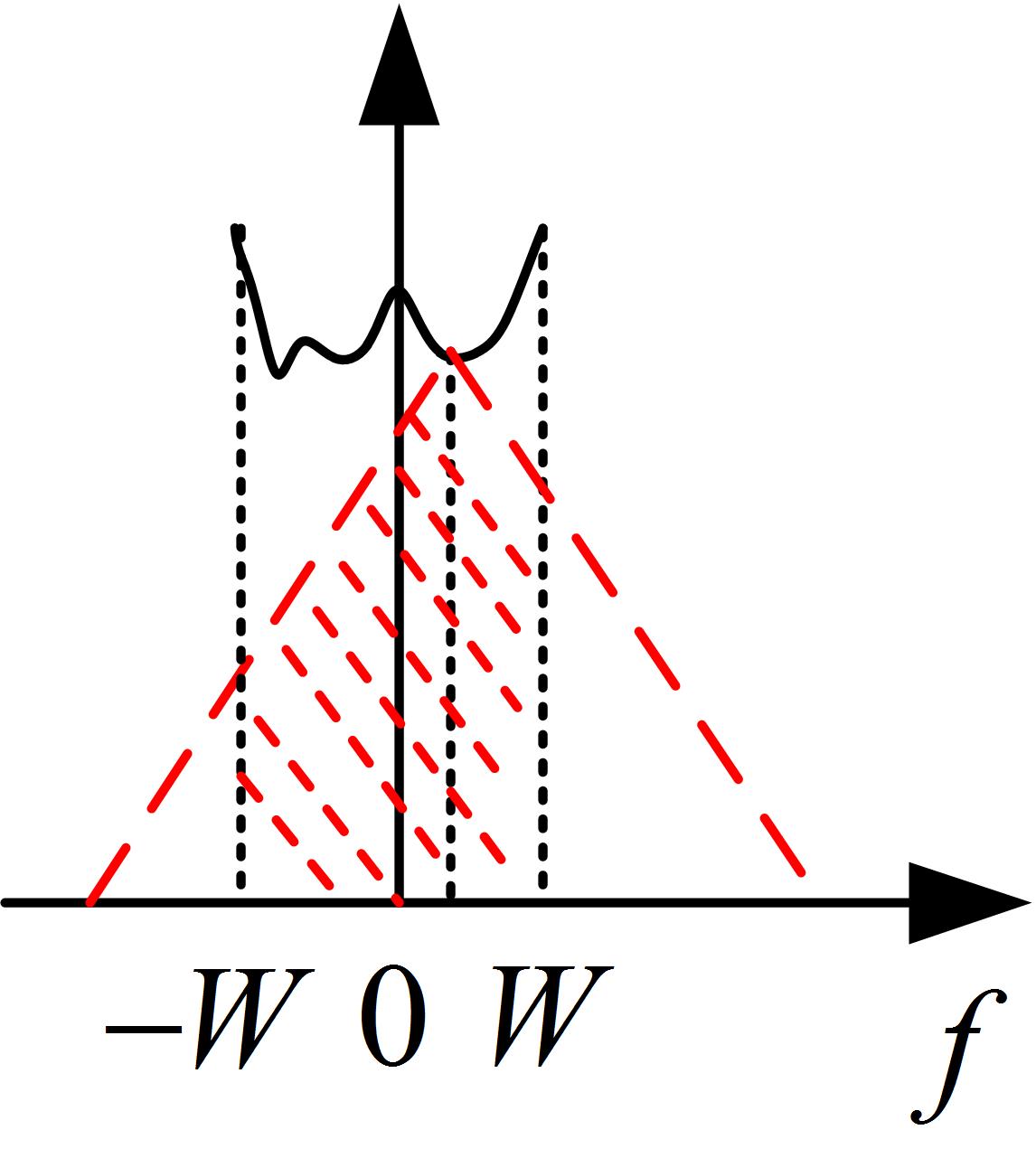}
		\caption{Illustration of \eqref{eq:bouned derivative integration2}. The area below the black curve is always larger than or equal to the area shaded by the red dashed lines.  }\label{fig:illustration of bounded derivative2}
	\end{figure}
\end{proof}

\frenchspacing
\bibliographystyle{ieeetr}
\bibliography{bibfileFAST}

\end{document}